\documentclass[11pt]{article}
\usepackage[letterpaper,margin=1in]{geometry}
\usepackage[utf8]{inputenc}
\usepackage{microtype}
\usepackage{graphicx} \usepackage[dvipsnames]{xcolor}
\usepackage{mathtools,zref-savepos}
\usepackage{xfrac}
\usepackage{nicefrac}
\usepackage {amsthm}

\usepackage{url}
\DeclareUrlCommand\UScore{\urlstyle{rm}}

\usepackage[colorlinks]{hyperref}
\hypersetup{citecolor=blue
}
\usepackage{bbold}
\usepackage{tikz}
\usetikzlibrary{shapes.geometric, positioning}
\usepackage{pstricks}
\usetikzlibrary{intersections, calc, angles}
\usepackage{subcaption}
\usepackage{natbib}
\usepackage{comment}
\usepackage{booktabs}

\makeatletter
\usetikzlibrary{decorations,backgrounds}
\def\pgfdecoratedcontourdistance{0pt}
\pgfset{
  decoration/contour distance/.code=\pgfmathsetlengthmacro\pgfdecoratedcontourdistance{#1}}
\pgfdeclaredecoration{contour lineto closed}{start}{\state{start}[
    next state=draw,
    width=0pt,
    persistent precomputation=\let\pgf@decorate@firstsegmentangle\pgfdecoratedangle]{\pgfpathmoveto{\pgfpointlineattime{.5}
      {\pgfqpoint{0pt}{\pgfdecoratedcontourdistance}}
      {\pgfqpoint{\pgfdecoratedinputsegmentlength}{\pgfdecoratedcontourdistance}}}}\state{draw}[next state=draw, width=\pgfdecoratedinputsegmentlength]{\ifpgf@decorate@is@closepath@ \pgfmathsetmacro\pgfdecoratedangletonextinputsegment{-\pgfdecoratedangle+\pgf@decorate@firstsegmentangle}\fi
    \pgfmathsetlengthmacro\pgf@decoration@contour@shorten{-\pgfdecoratedcontourdistance*cot(-\pgfdecoratedangletonextinputsegment/2+90)}\pgfpathlineto
      {\pgfpoint{\pgfdecoratedinputsegmentlength+\pgf@decoration@contour@shorten}
      {\pgfdecoratedcontourdistance}}\ifpgf@decorate@is@closepath@ \pgfpathclose
    \fi
  }\state{final}{}}
\makeatother
\tikzset{
  contour/.style={
    decoration={
      name=contour lineto closed,
      contour distance=#1
    },
    decorate}}

\usepackage{algpseudocodex}
\newcommand{\Bd}{\operatorname{Bd}}
\newcommand{\cost}{\operatorname{cost}}
\newcommand{\fcost}{\operatorname{cost}_{\mathsf{fac}}}
\newcommand{\ccost}{\operatorname{cost}_{\mathsf{conn}}}

\newcommand{\opt}{\mathsf{OPT}}
\newcommand{\alg}{\mathsf{ALG}}

\DeclareMathOperator{\conn}{conn}
\DeclareMathOperator{\fac}{fac}
\newcommand{\A}{\mathcal{A}}
\newcommand{\X}{X}
\renewcommand{\S}{\mathcal{S}}

\newcommand{\R}{\mathbb{R}}
\newcommand{\N}{\mathbb{N}}
\newcommand{\inst}{\mathcal{I}}
\newcommand{\1}{\mathbb{1}}
\newcommand{\adv}{\mathsf{ADV}}
\newcommand{\nmfl}{\text{NMFL}\xspace}
\newcommand{\setcov}{\text{SC}\xspace}
\newcommand{\nwst}{\text{NWST}\xspace}
\newcommand{\pcnwst}{\text{PC NWST}\xspace}
\newcommand{\nwsf}{\text{NWSF}\xspace}
\newcommand{\pcnwsf}{\text{PC NWSF}\xspace}

\newcommand{\LP}{\operatorname{LP}}
\newcommand{\Exp}{\mathbb{E}}

\makeatletter
\algnewcommand\algorithmicwhen{\textbf{when}}\algdef{SE}[WHEN]{When}{EndWhen}[1]{\algpx@startIndent\algpx@startCodeCommand\algorithmicwhen\ #1\ \algorithmicdo}{\algpx@endIndent\algpx@startCodeCommand\algorithmicend\ \algorithmicwhen}

\ifbool{algpx@noEnd}{\algtext*{EndWhen}\algtext*{EndCase}\apptocmd{\EndWhen}{\algpx@endIndent}{}{}}{}

\pretocmd{\When}{\algpx@endCodeCommand}{}{}

\ifbool{algpx@noEnd}{\pretocmd{\EndWhen}{\algpx@endCodeCommand[1]}{}{}}{\pretocmd{\EndWhen}{\algpx@endCodeCommand[0]}{}{}}\makeatother
\usepackage{algorithm}
\usepackage{amsmath}
\usepackage{amssymb}
\usepackage{amsthm}
\usepackage{tikz}
\newtheorem{claim}{Claim}
\newtheorem{theorem}{Theorem}
\newtheorem{proposition}{Proposition}
\newtheorem{lemma}{Lemma}
\theoremstyle{definition}

\setlength{\parindent}{0pt}
\setlength{\parskip}{1.5ex}  

\usepackage{relsize}

\usepackage{xspace}
\usepackage{mathrsfs}
\newcommand*\samethanks[1][\value{footnote}]{\footnotemark[#1]}

\usepackage[capitalize]{cleveref}
\crefname{claim}{Claim}{Claims}
\Crefname{claim}{Claim}{Claims}

\title{Stronger adversaries grow cheaper forests:\\
online node-weighted Steiner problems}

\author{Sander Borst\thanks{This project has received funding from the European Research Council (ERC) under the European Union's Horizon 2020 research and innovation programme (grant agreement QIP--805241)}\\
	MPI Informatics\\
	{\small \texttt{sborst@mpi-inf.mpg.de}} \and 
    Marek Eli\'a\v{s}\thanks{
This study was funded by the European Union -  NextGenerationEU, in the framework of the FAIR - Future Artificial Intelligence Research project (FAIR PE00000013 – CUP B43C22000800006). The views and opinions expressed are solely those of the authors and do not necessarily reflect those of the European Union, nor can the European Union be held responsible for them.}\\Bocconi University\\
    {\small \texttt{marek.elias@unibocconi.it}} \and
    Moritz Venzin\samethanks\\Bocconi University\\
    {\small \texttt{moritz.venzin@unibocconi.it}}
 }

\date{}

\begin{document}

\maketitle

\begin{abstract}
\noindent
We propose a $O(\log k \log n)$-competitive randomized algorithm for online node-weighted
Steiner forest.
This is essentially optimal and significantly improves over the previous bound of
$O(\log^2 k \log n)$ by \citet{HajiaghayiLiaghatPanigrahi13}.
In fact, our result extends to the more general
prize-collecting setting, improving over previous works
by a poly-logarithmic factor.
Our key technical contribution is a randomized online algorithm
for set cover and non-metric facility location
in a new adversarial model which we call
\emph{semi-adaptive adversaries}.
As a by-product of our techniques, we
obtain the first deterministic $O(\log |C| \log |F|)$-competitive
algorithm for non-metric facility location.
\end{abstract}

\setcounter{page}{0}
\thispagestyle{empty}
\newpage
\section{Introduction}

The \emph{Steiner forest} problem is arguably one of the most central problems in network design. Given a graph $G = (V,E)$ and pairs of \emph{terminals} $T \subseteq V\times V$, the goal is to find a minimum weight subgraph connecting each pair. When the subgraph itself is required to be connected, this corresponds to the \emph{Steiner tree} problem. In this paper, we consider the \emph{online} version of the problem in the \emph{node-weighted} setting. Terminal pairs arrive one by one and we need to maintain a subgraph connecting the arrived pairs. The weight function is on nodes, i.e., to use an edge of the graph, we need to select its two incident nodes (vertices). Decisions are \emph{irrevocable}, meaning vertices cannot be un-selected, and the goal is to minimize the total cost of selected vertices.

The node-weighted setting is a significant generalisation of the edge-weighted setting\footnote{Any edge $e$ of weight $w(e)$ can be subdivided by a vertex of weight $w(e)$.} and unifies network design problems and covering problems such as \emph{set cover} and \emph{non-metric facility location}\footnote{See Subsection~\ref{subsec:online_problems} for a formal definition of these problems.}. This introduces considerable complications and the progress in the node-weighted setting has been relatively slow. The first breakthrough in this area was achieved by~\cite{NaorPanigrahiSingh11}. They showed how to extend a combinatorial structure called \emph{spider decomposition} that was first used in the \emph{offline} setting by~\cite{KleinRavi95} to be used in connection with an instance of randomized non-metric facility location. Their approach yields a \emph{poly-logarithmic} approximation to several online Steiner problems in the node-weighted setting. In particular, they achieve a randomized, $O(\log^2 k \log n)$-competitive online algorithm for the node-weighted Steiner tree problem. Here,~$k$ is the number of arriving terminal pairs and~$n$ is the number of vertices in the graph. For Steiner forest, they achieve a competitive ratio that is poly-logarithmic in~$n$, but the running time of their online algorithm was \emph{quasi-polynomial}, i.e.\,of the order $n^{\text{poly}\log n}$.

\citet{HajiaghayiLiaghatPanigrahi13} presented an elegant framework inspired by \emph{dual-fitting} called \emph{disk-paintings}, that yielded the first polynomial time online algorithm for node-weighted Steiner forest with a competitive ratio of $O(\log^2 k \log n)$. For graphs with an excluded minor of constant size, e.g.\,\emph{planar} graphs, their approach yields a $O(\log n)$-competitive algorithm. For the node-weighted Steiner tree problem,~\cite{HajiaghayiLiaghatPanigrahi14} then presented a $O(\log^2 n)$-competitive algorithm. In fact, their competitive ratio holds against the fractional optimum.
In addition, they provide a generic reduction to extend any algorithm
for the problems above to the \emph{prize-collecting} setting, only losing a $O(\log n)$-factor in the process. In this setting, together with the terminal pair, we are also given a penalty $p \in \R_{\geq 0} \cup \{+\infty\}$. We must then decide whether to augment the subgraph to connect the pair or whether to pay the penalty instead. These results are resumed in~Figure~\ref{fig:intro_past_best}.

On the other hand, the lower bound of (online) node-weighted Steiner problems comes from (online) set cover: an instance of set cover with $k$ elements and $n$ sets already reduces to node-weighted Steiner tree, see Figure~\ref{fig:SetCover_to_NWSF} for an illustration.
The lower bound of $\smash{\Omega(\tfrac{\log k \log n}{\log\log k + \log\log n})}$
by \citet{Alon_et_al_SetCover}, which holds for $k=\Omega(\log n)$,
therefore applies. Assuming $\text{P} \neq \text{NP}$, there is a stronger lower bound of $\Omega(\log k \log n)$ by~\cite{Korman04}, which holds for polynomial-time online algorithms whenever $k$ and $n$ are polynomially related.

In this work, we almost close this (poly-)logarithmic gap:
we propose a randomized $O(\log k\log n)$-competitive online algorithm for the (prize-collecting) node-weighted Steiner forest which runs in polynomial time.
We also obtain a $O(\log^2 n)$-competitive deterministic algorithm.
This improves and subsumes all previous bounds for these problems (Figure~\ref{fig:intro_past_best}).

\begin{figure}
    \centering
    \begin{tabular}{lcccc}
& \nwst
        & \nwsf
        & PC \nwst
        & PC \nwsf
        \\
        \midrule
         Previous works:
         & $\begin{matrix} 
            O(\log^2 n)\\
            O(\log^2 k \log n)
            \end{matrix}$
         & $O(\log^2 k\log n)$
         & $\begin{matrix}
            O(\log^3 n)\\
            O(\log^2 k \log^2 n)
            \end{matrix}$
         & $O(\log^2 k\log^2 n)$
\end{tabular}
    \caption{The best competitive ratios achieved by previous works for the online node-weighted Steiner tree (\nwst),
    node-weighted Steiner forest (\nwsf), and their prize collecting variants (\text{PC}). All these algorithms are polynomial time and randomized.
    }
    \label{fig:intro_past_best}
\end{figure}

\subsection{Our results}\label{subsec:results}

Our main result is a new randomized online algorithm for node-weighted prize-collecting Steiner forest.
It improves over the competitive ratio of
the previous online algorithms
by \citet{HajiaghayiLiaghatPanigrahi13}
and
\citet{HajiaghayiLiaghatPanigrahi14}
for node-weighted Steiner tree (\nwst), node-weighted Steiner forest (\nwsf), and
their prize-collecting variants (\text{PC}), see
Figure~\ref{fig:intro_past_best} for the summary of the previous bounds.

\begin{theorem}
\label{thm:intro_nwsf_rand}
There is a randomized online algorithm for prize-collecting
node-weighted Steiner forest
which is $O(\log k \log n)$-competitive
against an oblivious adversary
on a graph with $n$ vertices,
where $k$ denotes the number of terminal pairs received online.
\end{theorem}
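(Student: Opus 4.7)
The plan is to reduce online prize-collecting node-weighted Steiner forest (PC NWSF) to online non-metric facility location (NMFL) via spider decomposition, and then invoke the paper's new $O(\log k \log n)$-competitive NMFL algorithm for \emph{semi-adaptive} adversaries as a black box. Recall that \citet{NaorPanigrahiSingh11} reduce node-weighted Steiner tree to online NMFL by computing, whenever a new terminal arrives, an offline near-optimal augmentation of the current subgraph, decomposing it into \emph{spiders} (star-like substructures centered at a single vertex with disjoint ``legs''), and offering each spider as a candidate facility — with facility cost equal to the center weight, clients corresponding to terminals, and connection costs equal to the weight of the relevant leg. The crux of the improvement is the observation that the NMFL instance generated this way is semi-adaptive: the structure revealed to the NMFL subroutine depends on the terminal sequence and the graph, but not on the subroutine's internal random bits. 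This is precisely the adversary model in which the paper establishes its new $O(\log k \log n)$ NMFL bound, saving one $\log k$-factor over the adaptive bound that was implicitly used in prior analyses.

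For the forest case, the tree-case reduction does not apply directly. I would either iterate spider decomposition across connected components of the offline near-optimal forest (losing an additional log factor that is absorbed into the NMFL bound), or use a branching-tree construction that associates each arrived pair with a spider of cost comparable to its contribution in the optimum. A fractional NMFL solution is then converted to an integral subgraph by independently opening each facility with probability proportional to its fractional level, repeated $O(\log k)$ times, and selecting all vertices on the legs of opened spiders; this yields a valid online forest connecting all un-penalized terminal pairs.

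Prize collecting is incorporated directly into the NMFL instance, avoiding the generic $O(\log n)$-factor reduction of \citet{HajiaghayiLiaghatPanigrahi14}: for each pair with penalty $p_i$, I would add a dedicated dummy facility of opening cost $p_i$ that only client $i$ connects to at zero cost. Opening the dummy corresponds to paying the penalty, and the NMFL algorithm naturally arbitrates between paying and connecting within its competitive guarantee — so no extra multiplicative loss is incurred on top of the $O(\log k \log n)$ factor.

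The main obstacle will be certifying that the reduction really produces a semi-adaptive NMFL instance — that is, that the choice of spiders offered to the NMFL subroutine depends only on the terminal arrivals and the graph, not on which facilities the subroutine has (randomly) opened so far. Spider decomposition is classically an offline procedure and must be implemented deterministically given the terminal sequence; its forest-case variant (iterated or branching) must be shown to preserve this independence while still charging the spider cost against the optimal forest. Once this compatibility is established, the theorem follows by composing the cost bound of the decomposition with the semi-adaptive NMFL guarantee and accounting for the dummy facilities that handle the penalties.
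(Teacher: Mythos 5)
Your proposal takes a genuinely different route from the paper, and unfortunately it has gaps that would prevent it from yielding the claimed $O(\log k\log n)$ bound.

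The paper does \emph{not} use spider decomposition. It follows the disk-painting framework of \citet{HajiaghayiLiaghatPanigrahi13}: augmented greedy \`{a} la \citet{BermanCoulston97} to handle terminal pairs of the same distance scale, together with an auxiliary \nmfl instance whose facilities are boundary vertices of dual balls (not spider centers). This matters for two reasons. First, as the introduction notes, the spider-decomposition route of \citet{NaorPanigrahiSingh11} only yields quasi-polynomial time for Steiner \emph{forest}; your vague ``iterate across components'' fix is precisely the hard part and would need a real argument. Second, and more fundamentally, your accounting misses the step that actually saves the $\log k$ factor over \citet{HajiaghayiLiaghatPanigrahi13}. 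The auxiliary \nmfl instance the paper constructs has an optimum of $\Theta(\log k)\cdot\opt_\nwsf$ (Lemma~\ref{lem:nmfl-opt}), not $\Theta(\opt_\nwsf)$, so an undifferentiated $O(\log k\log n)$-competitive \nmfl solver would give $O(\log^2 k\log n)$ for the forest. The paper avoids this by scaling facility costs by $\ell=\log k$ in the auxiliary instance (so the Steiner algorithm pays only a $1/\ell$ fraction of each facility the \nmfl algorithm opens) and by proving a \emph{refined} $(\alpha,\beta)=(O(\log k\log n), O(\log n))$-competitiveness guarantee that treats facility and connection costs separately (Theorem~\ref{thm:nmfl}). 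Lemma~\ref{lem:nwsf-competitive} then gives $O(\alpha + \ell\cdot\beta) = O(\log k\log n)$. Your proposal has none of this: the phrase ``saving one $\log k$-factor over the adaptive bound implicitly used in prior analyses'' does not capture where the saving actually comes from.

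The semi-adaptivity claim also needs work. Asserting that the offered spiders ``depend only on the terminal arrivals and the graph'' is not true: the spider decomposition is computed against the current subgraph $S$, which depends on the facilities opened so far and hence on the subroutine's random bits. What makes semi-adaptivity usable in the paper is a concrete, polynomial-size super-instance $I(T)$ ($n$ facilities, $O(k\log k)$ clients indexed by terminal and level) that the oblivious \nwsf-adversary commits to in advance; the adaptive part is only which clients are released. For your spider construction you would need to exhibit an analogous fixed super-instance and bound its size to get $\log|C|=O(\log k)$ --- this is not done, and it is not obvious that the set of potential spiders is small. The prize-collecting device you sketch (a per-pair dummy facility of cost $p_i$ with zero connection cost) is workable in spirit, but note the paper instead uses a single zero-cost facility $f_0$ with \emph{connection} cost $p_i$ so that penalties are charged to the cheaper $\beta$-term of the refined guarantee; if penalties land on the facility side, they pick up the $\alpha$-factor and you lose the benefit of the scaling.
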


Note that the setting with oblivious adversaries is standard
for the analysis of randomized online algorithms and was also used by
previous works \citep{NaorPanigrahiSingh11,HajiaghayiLiaghatPanigrahi13,HajiaghayiLiaghatPanigrahi14}.

We also provide a deterministic online algorithm for \pcnwsf. Such an algorithm is analyzed against an adaptive adversary. A set $T\subseteq V\times V$ of possible terminals pairs is given to the online algorithms, and the adversary chooses $k$ of them adaptively, based on the previous actions performed by the online algorithm. By relation to online set cover, the competitive ratio
then depends on $\log |T|$ instead of $\log k$,
see~\citep{bienkowski_nearly_2021}. The competitive ratio of our algorithm for \pcnwsf is $O(\log|T|\log n)$, improving over the previous best, $O(\log |T| \log^2 n)$-competitive ratio by \citet{bienkowski_nearly_2021} in the non-prize-collecting, deterministic
setting.

\begin{theorem}
\label{thm:intro_nwsf_det}
Let $G =(V,E)$ be a graph on $n$ vertices and $T \subseteq V\times V$
be a set of possible terminal pairs.
There is a deterministic
online algorithm for
prize-collecting node-weighted Steiner forest
which knows $T$ beforehand and achieves competitive ratio
$O(\log |T|\, \log n)$ on any online input sequence
of terminal pairs chosen from $T$.
\end{theorem}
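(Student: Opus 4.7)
The plan is to reduce prize-collecting node-weighted Steiner forest to prize-collecting non-metric facility location (NMFL), and then apply the deterministic $O(\log|C|\log|F|)$-competitive NMFL algorithm obtained as a by-product of our techniques (as announced in the abstract). The crucial point is that knowing $T$ in advance lets us set up the NMFL instance — both the facilities and the universe of potential clients — before any terminal pair arrives; only the identity of the active clients is revealed online, which matches exactly the model in which a deterministic covering algorithm operates.

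To build the NMFL instance I would follow a spider-decomposition-style reduction: facilities are pairs of a vertex $v \in V$ and a scale $r$ drawn from a geometric grid of $O(\log n)$ values, with opening cost $r$ reflecting the cost of a cheapest rooted subtree at $v$ of total node-weight at most $r$; clients are the possible terminal pairs in $T$; and a pair $(s,t)$ may connect to facility $(v,r)$ whenever some such subtree at $v$ contains both $s$ and $t$. Prize-collecting is handled intrinsically by adding, for each $(s,t) \in T$, a dedicated penalty facility usable only by that client whose opening cost equals the given penalty. This yields $|C| \leq |T|$ and $|F| = \operatorname{poly}(n)$, so the deterministic NMFL subroutine produces a solution of cost $O(\log|T|\log n) \cdot \opt_{\text{NMFL}}$ online.

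Converting the NMFL output back to a forest is done facility-by-facility: each opened facility $(v,r)$ is materialized as an explicit subtree of $G$ rooted at $v$ of node-weight at most $r$ that contains the terminals served through it, and the union of these subtrees forms the online \pcnwsf solution; penalty facilities translate into the corresponding penalty payments. By construction, the cost of the resulting forest-plus-penalties is bounded by the NMFL cost.

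The main obstacle, and the source of the improvement over the $O(\log|T|\log^2 n)$ bound of \citet{bienkowski_nearly_2021}, is to show $\opt_{\text{NMFL}} = O(\opt)$, where $\opt$ denotes the optimum \pcnwsf cost, without incurring an extra $\log$-factor. I would charge $\opt_{\text{NMFL}}$ directly against a fractional relaxation of \pcnwsf: any (fractional) Steiner forest solution yields, via a spider-aware argument keyed to the scale grid used to define the facilities, a fractional NMFL solution of the same order of cost. The delicate step is to make this charging compatible with both the prize-collecting penalties and the deterministic NMFL analysis, so that the two $\log$-factors in the final competitive ratio come from $\log|C| = O(\log|T|)$ and $\log|F| = O(\log n)$ alone, rather than from composing a classical Klein--Ravi spider decomposition with the NMFL analysis.
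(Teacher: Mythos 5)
Your proposal takes a genuinely different route from the paper, and it has a gap that I do not see how to close. The paper does \emph{not} reduce \pcnwsf to \nmfl as a black box: it runs an augmented-greedy algorithm in the style of \citet{BermanCoulston97}, always buying the shortest $s$--$t$ path, and invokes \nmfl only as a subroutine to decide which \emph{single vertices} on the boundary of the current dual ball to purchase. Facilities are the vertices $V$ (with scaled cost $\ell\cdot w_v$), clients are individual terminals paired with a level $j$, and connection costs are actual graph distances. The forest structure is maintained by the greedy and augmented-greedy paths, not by the facilities. Your proposal instead makes facilities correspond to (vertex, radius) subtrees and clients to terminal pairs, and materializes each opened facility as a single subtree.

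The gap is in the ``materialization'' step, and it is not a delicate technicality to be postponed: opening a facility $(v,r)$ in \nmfl is paid for \emph{once} (cost $r$), but several distinct clients $(s_1,t_1),\dots,(s_m,t_m)$ served through $(v,r)$ will in general require \emph{different} weight-$r$ subtrees rooted at $v$, and any single subtree containing all of them may have cost $\Omega(m\cdot r)$. Concretely, take a star with center $c$ and $n$ unit-weight leaves, with $T$ consisting of $n/2$ disjoint leaf pairs: every pair can connect to the one facility $(c,3)$, so $\opt_{\text{NMFL}}$ is $O(1)$, while any materialization into a forest connecting all pairs costs $\Theta(n)$. Hence the inequality $\alg_{\text{NWSF}} = O(\alg_{\text{NMFL}})$ fails. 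This is precisely the obstruction that forces \citet{NaorPanigrahiSingh11} to iterate their spider/facility reduction over $\Theta(\log k)$ rounds (each round only halving the number of components, with an \nmfl instance per round), which is where their extra logarithmic factor comes from; a single black-box \nmfl call over subtree-facilities cannot avoid it. The paper sidesteps the issue entirely by keeping facilities as individual vertices whose purchase cost is charged directly, and by letting the augmented-greedy framework handle the forest connectivity and the charging to disjoint dual balls (Lemma~\ref{lem:nmfl-opt} and Lemma~\ref{lem:nwsf-competitive}); the prize-collecting extension is then just the extra zero-cost facility $f_0$ with connection cost $p_i$, as in Lemma~\ref{lem:pcnmfl-opt}.
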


Since $T \subseteq V\times V$, our deterministic algorithm is always $O(\log^2 n)$-competitive. Hence we recover the competitive ratio
of the randomized algorithm for \nwst by
\citet{HajiaghayiLiaghatPanigrahi14} using a deterministic algorithm
and improve over their bounds for the prize-collecting variants,
see Figure~\ref{fig:intro_past_best}.

Similarly to \citep{HajiaghayiLiaghatPanigrahi13} and \citep{HajiaghayiLiaghatPanigrahi14},
our algorithm solves an auxiliary instance of non-metric facility location (\nmfl).
Our auxiliary instance is different to theirs and
is generated adaptively, based on the previous
steps of the algorithm.
Therefore, we cannot solve this instance with existing randomized
online algorithms \citep{alon_general_2006} which work against
oblivious (non-adaptive) adversaries.
However, the adaptivity is constrained in
a certain way which still allows for algorithms with a good
performance.
This motivates the study of \nmfl and the related set cover problem (\setcov) in the setting
with {\em semi-adaptive} adversaries -- a new setting
which captures the difficulty of solving the auxiliary
\nmfl instance in a refined way and might be of independent interest.

\paragraph{\nmfl and \setcov with semi-adaptive adversaries.}
We study the online non-metric facility location problem (\nmfl) and the
online set cover problem (\setcov) in a new adversarial setting. This setting interpolates between the the classical setting of deterministic algorithms for \setcov and \nmfl
and the setting of randomized algorithms against oblivious adversaries.
Here, we force the adversary
to fix a super-instance~$I$ beforehand
and restrict its adaptiveness
to selecting parts of $I$ based on
the (random) decisions of the algorithm.
We call such an adversary \emph{semi-adaptive}.
In particular, for a super-instance $I=(C,F)$ composed of clients $C$ and facilities $F$ fixed by a semi-adaptive adversary, the adversary can adaptively select a subset of the clients $C' \subseteq C$ to be presented to the online algorithm. 
Similarly to the performance bounds for deterministic algorithms,
the competitive ratio of our online randomized algorithm then depends on $|C|$ (and $|F|$), not on the size of $C'$. 
However, contrary to the deterministic setting, our online algorithm does not require the knowledge
of $C$ and $F$ beforehand.

\begin{theorem}
\label{thm:intro_nmfl}
There is an online randomized algorithm for non-metric facility location
against a semi-adaptive adversary
with the following guarantee.
Denoting 
$\alg$ and $\opt$ the cost of the online
algorithm and offline optimum respectively, we have
\[
\Exp[\alg] \leq O(\log |C| \log |F|)\cdot \Exp[\opt],
\]
where $C$ is the set of clients and $F$ is the set of
facilities in the super-instance $I$ fixed by the adversary beforehand,
and both expectations are computed
over the random choices of the online algorithm. Additionally, denoting by $\ccost(\alg)$ the connection cost incurred by our online algorithm, it holds that 
\[
\Exp[\ccost(\alg)] \leq O(\log |F|)\cdot \Exp[\opt].
\]
\end{theorem}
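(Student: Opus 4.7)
The plan is to combine an online primal-dual algorithm for the LP relaxation of \nmfl with an independent-threshold rounding scheme, exploiting that the super-instance $(C,F)$ is fixed in advance. The fixed super-instance is what enables a union bound over \emph{all} $|C|$ potential clients, which is what ultimately controls the adaptive behaviour of the adversary.

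First I maintain LP variables $x_f$ (facility opening) and $y_{cf}$ (assignments) via the standard primal-dual updates of \citet{alon_general_2006}: whenever a client $c$ arrives with $\sum_f y_{cf}<1$, I grow $y_{cf}$ (and $x_f\ge y_{cf}$) multiplicatively at rates inversely proportional to $w_f$ and $d_{cf}$ until $c$ is fractionally covered. These deterministic updates guarantee
\[\sum_f w_f\, x_f+\sum_{c,f} d_{cf}\, y_{cf}\;\le\; O(\log|F|)\cdot\opt,\]
and the bound holds against any adversary, in particular a semi-adaptive one.

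For rounding, I sample for each facility $f$ a family of $N=\Theta(\log|C|)$ independent uniform thresholds $\theta_f^{(1)},\dots,\theta_f^{(N)}\in[0,1]$ (lazily, the first time $x_f$ becomes positive), and open $f$ as soon as $x_f$ exceeds $\min_i\theta_f^{(i)}$. The expected opening cost is $\sum_f w_f\min(1,Nx_f)=O(\log|C|)\sum_f w_f x_f=O(\log|C|\log|F|)\cdot\opt$, which already gives the total-cost bound. To bound the connection cost separately, when $c$ is revealed I sort its fractional neighborhood by $d_{cf}$ and take the cheapest prefix $F_c^{\mathrm{cheap}}$ with $\sum_{f\in F_c^{\mathrm{cheap}}} y_{cf}\ge 1/2$; a Markov-type argument shows $\max_{f\in F_c^{\mathrm{cheap}}} d_{cf}\le 2\sum_f d_{cf}y_{cf}$. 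A standard threshold calculation gives $\Pr[\text{no }f\in F_c^{\mathrm{cheap}}\text{ open}]\le\exp(-N/2)\le |C|^{-3}$, so connecting $c$ to the cheapest open facility in $F_c^{\mathrm{cheap}}$ charges an $O(1)$-multiple of $c$'s fractional connection cost in expectation; summing over clients yields $\Exp[\ccost(\alg)]=O(\log|F|)\cdot\Exp[\opt]$. A deterministic safety net (e.g., paying for the cheapest facility fractionally supporting $c$) handles the rare tail event and its expected contribution is absorbed by the failure probability.

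The main obstacle is precisely the adaptive choice of clients. Since the adversary can see which facilities the algorithm has opened, it can partially infer threshold information and steer the input towards a worst-case client. This is resolved by the semi-adaptive restriction: because the super-instance $(C,F)$ is fixed, the union bound is taken over all $|C|$ potential clients rather than over the revealed subset, and each per-client failure probability $|C|^{-3}$ is derived conditionally on the algorithm's past trajectory. This step would break against a fully adaptive adversary who could inject arbitrary new clients on demand, and it is exactly the point where the semi-adaptive model becomes essential.
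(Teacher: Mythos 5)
Your high-level algorithm coincides with the paper's: maintain the $O(\log|F|)$-competitive monotone fractional LP solution of \citet{alon_general_2006}, round facility variables by independent $\Theta(\log|C|)$-fold minimum thresholds, pick for each client the cheapest fractional mass-$\nicefrac12$ prefix of facilities, and union-bound the failure probability over all $|C|$ clients of the super-instance. (The paper phrases the last two steps as a reduction to a set-cover rounding instance with elements $(c,j)$, but this is the same construction.) The trouble is that the analysis you give would only be correct against an \emph{oblivious} adversary, and the places where it breaks against a semi-adaptive one are precisely the technical content of the result.

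The concrete gap is this: your ``standard threshold calculation'' for $\Pr[\text{no }f\in F_c^{\mathrm{cheap}}\text{ open}]\le\exp(-N/2)$ implicitly treats the fractional values $x_f$ at the time $c$ arrives as independent of the thresholds $\theta_f^{(j)}$. Against a semi-adaptive adversary this fails: the adversary observes which facilities have (not) opened, learns a lower bound on $\min_j\theta_f^{(j)}$ for each unopened $f$, and can adaptively steer the LP increments and the choice of the next client based on this information. ``Conditioning on the algorithm's past trajectory'' is the right instinct, but you then need to combine the per-step conditional probabilities $\bigl((1-x_f^{(t)})/(1-x_f^{(t-1)})\bigr)^N$ across all time steps while the stopping time (the moment the adversary decides to reveal the client) is itself adaptive. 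The paper does this by exhibiting $\1_{\neg C_i^{(t)}}\exp\bigl(p\sum_{S\ni e_i}x_S^{(t)}\bigr)$ as a supermartingale and applying Doob's optional stopping theorem (Claim~2); this step is not a routine calculation and your proposal omits it. The same issue affects your opening-cost bound: $\Exp[\sum_f w_f\min(1,Nx_f)]$ is the oblivious-adversary formula, whereas the paper must prove $\Pr[x_S\ge Y_S]\le 2p\,x_S$ via another martingale (Claim~1), exactly because $x_S$ is adaptive. Finally, ``the expected contribution is absorbed by the failure probability'' glosses over the fact that the fallback cost (cheapest facility supporting $c$, bounded by $c^Tx$) and the failure indicator are correlated random variables; the paper handles this with the doubling-and-restart wrapper of Theorem~\ref{thm:semi-adaptive-sc} so that within each phase the fallback cost has a deterministic bound $B$. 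Filling these three holes essentially reproduces Claims~1 and~2, Lemma~\ref{lem:fixed-semi-adaptive-sc}, and Theorem~\ref{thm:semi-adaptive-sc} of the paper.
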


This more refined bound on the cost incurred on connections and facilities is crucial for our algorithm for \nwsf.
Theorem~\ref{thm:intro_nmfl} applies to \setcov in the semi-adaptive setting as a special case of \nmfl:
$C$ and $F$ correspond to elements and sets
respectively in the super-instance of \setcov fixed by the adversary. The connection costs are either $0$ or $+\infty$ and can be disregarded. Interestingly, the main difficulty of the proof of Theorem~\ref{thm:intro_nmfl} lies in constructing a randomized rounding scheme for \setcov which works well against semi-adaptive adversaries. In turn, we deterministically reduce the online rounding of the natural LP relaxation of \nmfl \citet{alon_general_2006} to \setcov rounding (against a semi-adaptive adversary).

Plugging in the deterministic rounding scheme for \setcov by~\citet{Alon_et_al_SetCover} instead, we obtain a deterministic $O(\log |C|\log|F|)$-competitive algorithm for \nmfl. This improves over the reduction from \nmfl to \setcov due to \citet{KolenTamir1990} that yields a $O((\log|C|+\log|F|)\cdot(\log|C|+\log\log|F|))$-competitive deterministic algorithm, as well as the recent, $O(\log|F|\cdot(\log|C|+\log\log|F|)$-competitive deterministic algorithm from~\citet{bienkowski_nearly_2021}\footnote{Their algorithm has the same refined guarantee on cost incurred on connections, see Lemma~$10$ in~\citet{bienkowski_nearly_2021}. Using this as a subroutine for \nmfl in our algorithm then yields a deterministic $O(\log n \cdot (\log |T| + \log\log n))$-competitive algorithm for \pcnwsf.}. The latter algorithm is quite involved: they pass through a new LP-relaxation for \nmfl for which they design a new online rounding scheme which can handle both connection and facility opening costs. While conceptually similar, our approach is much simpler as most of the complexity is encapsulated in the set cover algorithm that we use in a black-box way. The improved competitive ratio is a result of the observation that all connection costs smaller than $\smash{\frac{\opt}{|C|}}$ can be assumed to be~$0$, while they only do this for all costs smaller than~$\frac{\opt}{|C|\cdot|F|}$

\begin{theorem}
There is a deterministic $O(\log |C| \log |F|)$-competitive algorithm for
non-metric facility location, where $C$ is the set of clients and $F$ is the set of facilities in the super-instance $I$ fixed by the adversary beforehand and known to the algorithm. Denoting by $\ccost(\alg)$ the connection costs incurred by this deterministic algorithm, it additionally holds that $\ccost(\alg) \leq O(\log|F|)\cdot \opt$, where $\opt$ is the optimum value to the instance selected by the adversary.
\end{theorem}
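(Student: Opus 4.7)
My plan is to combine the online fractional covering framework of \citet{alon_general_2006} with the deterministic online \setcov rounding of \citet{Alon_et_al_SetCover}, exploiting the improved threshold $\opt/|C|$ for discarding small connection costs. Using a standard doubling guess, assume a value within a factor $2$ of $\opt$ is known in advance. I then zero out every connection cost with $d_{i,f} < \opt/|C|$; since at most $|C|$ clients arrive, this changes the optimum additively by at most $|C|\cdot\opt/|C| = \opt$, so it suffices to compete against $O(\opt)$. After preprocessing, every nonzero connection cost lies in $[\opt/|C|,\opt]$.

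\textbf{Fractional LP and filtering.}
I maintain a fractional solution $(x,y)$ to the natural LP relaxation of \nmfl using the online fractional covering framework of \citet{alon_general_2006}. Since the cover constraint for each client has at most $|F|$ variables, the framework guarantees total fractional cost at most $O(\log|F|)\cdot\opt$; in particular the fractional connection cost satisfies $\sum_{i,f} d_{i,f}\, x_{i,f} \le O(\log|F|)\cdot\opt$. When client $i$ arrives, let $d_i^{\star} = \sum_f d_{i,f}\, x_{i,f}$ and define the filtered set $F_i = \{f : d_{i,f} \le 2 d_i^{\star}\}$. Markov's inequality yields $\sum_{f\in F_i} x_{i,f} \ge 1/2$, and since $x_{i,f}\le y_f$ also $\sum_{f\in F_i} y_f \ge 1/2$.

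\textbf{Rounding via \setcov.}
I feed the covering constraint ``at least one facility in $F_i$ is opened'' to the deterministic online \setcov rounding of \citet{Alon_et_al_SetCover}, noting that it is fractionally satisfied by the scaled vector $2y$. With at most $|C|$ such constraints over $|F|$ facility variables, the rounder returns an integral solution covering every $F_i$ at total facility cost $O(\log|C|)$ times the fractional facility cost, i.e.\ at most $O(\log|C|\log|F|)\cdot\opt$. I then connect client $i$ to an opened facility $f^{\star}\in F_i$; by construction $d_{i,f^{\star}}\le 2 d_i^{\star}$. Summing over clients, $\ccost(\alg) \le 2\sum_i d_i^{\star} \le O(\log|F|)\cdot\opt$, yielding the refined connection-cost guarantee, and adding facility cost gives the claimed $O(\log|C|\log|F|)$-competitive bound.

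\textbf{Main obstacle.}
The main technical difficulty is coordinating the fractional LP solver with the \setcov rounder on constraints $F_i$ that are defined \emph{adaptively} from the algorithm's own fractional solution: one must verify that the $O(\log|C|)$ rounding guarantee of \citet{Alon_et_al_SetCover} is preserved when cover constraints are generated by the evolving $(x,y)$, and that the rounder's state remains consistent as $y$ grows monotonically. The preprocessing step is precisely where we gain the improvement: restricting nonzero connection costs to a range of $|C|$ avoids the extra $\log|F|$ blow-up incurred by \citet{KolenTamir1990} and \citet{bienkowski_nearly_2021}, who can only afford the smaller threshold $\opt/(|C|\cdot|F|)$ in their respective reductions.
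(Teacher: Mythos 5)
Your overall architecture matches the paper's: maintain a monotone $O(\log|F|)$-competitive fractional solution via \citet{alon_general_2006}, filter each arriving client to a subset of facilities carrying at least half of its fractional mass (the Markov step), feed the resulting covering constraint to the deterministic online set cover rounder of \citet{Alon_et_al_SetCover}, and charge the connection cost to the fractional connection cost. The Markov filtering, the $2y$ scaling, and the connection-cost charging are all correct and essentially identical to what the paper does.

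However, there is a genuine gap exactly where you flag your ``main obstacle,'' and the issue is not merely that the rounder's state must stay consistent -- it is that your set cover super-instance is not well-defined with only $|C|$ elements. The deterministic rounder of \citet{Alon_et_al_SetCover} requires the set system $(X,\S)$ to be fixed and known in advance, and its guarantee is $O(\log|X|)$ loss on top of the fractional cost. Your covering constraint for client $i$ is over $F_i = \{f : d_{i,f}\le 2d_i^\star\}$, and $d_i^\star$ depends on the algorithm's own evolving fractional solution, which in turn depends on which other clients the adversary has released so far. So for a fixed client $c_i$, the realized constraint $F_i$ can be any one of up to $|F|$ distinct ``prefix'' sets (one per distinct connection-cost threshold), and a super-instance that is genuinely fixed in advance must contain all of them. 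That gives $|X| = O(|C|\cdot|F|)$, hence a rounding loss of $O(\log|C| + \log|F|)$ and an overall bound of $O\bigl((\log|C|+\log|F|)\log|F|\bigr)$, which is weaker than the claimed $O(\log|C|\log|F|)$ when $|F|\gg|C|$ -- indeed it is no better than \citet{bienkowski_nearly_2021}.

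The missing ingredient is to \emph{discretize} the filtering threshold. The paper rounds connection costs to powers of two and, after the preprocessing you already perform (zeroing costs below $\opt/|C|$), the nonzero connection costs lie in a range of ratio $O(|C|)$. Hence only $O(\log|C|)$ distinct thresholds $2^j$ can occur, and the algorithm selects the smallest $j_i$ such that $\sum_{f : \ccost(c_i,f)\le 2^{j_i}} x_{c_i,f}\ge 1/2$. The set cover super-instance then has elements $(c,2^j)$ for $c\in C$ and $j=0,\ldots,O(\log|C|)$, with membership $(c,2^j)\in S_f$ iff $\ccost(c,f)\le 2^j$. Crucially this super-instance depends only on $(C,F,\cost)$, not on the fractional solution, so it can be handed to the deterministic rounder up front, and $|X|=O(|C|\log|C|)$ gives the desired $O(\log|C|)$ rounding loss. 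Your continuous threshold $2d_i^\star$ gives the same Markov inequality but does not yield a predeterminable super-instance of size $\tilde O(|C|)$; replacing it with the power-of-two threshold closes the gap.
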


\paragraph{Organization of the paper.}
In Sections~\ref{subsec:related_works} and~\ref{subsec:techniques} we give an overview on the related work, and our techniques, respectively. The formal definitions of the online problems considered in this paper as well the adversarial models considered can be found in Sections~\ref{subsec:online_problems} and~\ref{subsec:comp_ratio_and_adversaries}.
Section~\ref{sec:nwsf} contains our algorithm for \nwsf.
In Sections~\ref{subsec:rounding_sc} and \ref{subsec:online_nmfl}, we present our results
for \setcov and \nmfl respectively.

\subsection{Related works}\label{subsec:related_works}

Since its inception in the $1800$s, the Steiner tree problem and its generalisations have played a prominent role in the development of geometry and combinatorics, and, in more recent years, in theoretical computer science and practical areas such as the design of infrastructure and communications networks; see~\cite{SteinerTreeHistory, Ljubic2021} for a detailed overview. Being one of Karp's original $21$ NP-hard problems,~\cite{Karp1972}, Steiner problems are ubiquitous in the design and analysis of (approximation) algorithms, specifically in the area of network design, and have sparked significant algorithmic advances. Most relevant to our setting are works on the node-weighted setting in the setting of offline algorithms, and works online algorithms in the edge-weighted setting.

The study on node-weighted Steiner problems in the offline setting was initiated by~\cite{KleinRavi95}, giving a $O(\ln k)$-approximate algorithm to the node-weighted Steiner forest problem. We note that their approach inspired the first online algorithm for \nwst by~\cite{NaorPanigrahiSingh11}. Subsequently, this was extended to the prize-collecting setting by~\cite{MossRabani2007}. Their original proof contained a flaw, but a new algorithm with an approximation ratio of $O(\log k)$ (resp. $O(\log n)$ for \pcnwst) was given by~\cite{BateniHajiaghayiLiaghat18,chekuri_prize-collecting_2012,KoenemannSadeghianSanita13}.

In the edge-weighted and online setting, there has been considerable work on Steiner and network optimization problems. This line of research was initiated by~\cite{ImaseWaxman91} who showed that the natural greedy algorithm for online Steiner tree is $O(\log k)$-competitive, and that this is optimal. This result was extended to the case of online Steiner forest through the \emph{augmented greedy} algorithm by~\cite{BermanCoulston97}. Interestingly, the (natural) greedy algorithm is only known to be $O(\log^2 k)$-competitive for Steiner forest,~\cite{AwerbuchAzarBartal96}, though there is recent evidence its performance should be much better,~\cite{BamasDrygalaMaggiori22}. Around the same time, there was also significant progress on online network optimization problems. This was initiated by the work of~\cite{Alon_et_al_SetCover}, who provided a $O(\log |X| \log |\S|)$-competitive algorithm for online set cover. This result was extended to a wider range of problems such as non-metric facility location, and the technique was later generalized to the \emph{online primal-dual method},~\cite{alon_general_2006, BuchbinderNaor09}. We also mention several extensions of these problems that were considered recently, such as online constrained forest problems, generalising over the Steiner forest problem,~\cite{QianWilliamson11, QianUmbohWilliamson18}, \emph{online survivable network design}, extending to higher connectivity requirements,~\cite{GuptaKrishnaswamyRavi12}, as well as frameworks for Steiner problems in the \emph{deadline and delay} setting~\cite{AzarTouitou20}, a setting that is similar though incomparable to the online setting. Finally, we note that the \pcnwsf problem was recently also considered by~\cite{Markarian_2018}. She provides a \emph{fractional} $O(\log n)$-competitive algorithm and claims this monotonically increasing solution can be rounded online incurring a multiplicative loss of $O(\log k)$. The latter part is incorrect, we show this in~\cref{app:flaw}.

\subsection{Our techniques}\label{subsec:techniques}

\paragraph{NWSF with a single NMFL instance.}
From a high-level perspective, our algorithm for \nwsf is similar to
\citep{HajiaghayiLiaghatPanigrahi13}:
we combine an algorithm for \nmfl and the classical augmented greedy algorithm
designed for online (edge-weighted) Steiner forest.
While the algorithm of \citet{HajiaghayiLiaghatPanigrahi13}
generates $\log k$ separate instances of \nmfl and serves them by $\log k$
copies of an algorithm for \nmfl run in parallel,
we only use a single \nmfl instance.
This instance is generated
by our algorithm for \nwsf and contains one of $\log k$
possible copies of each client (each with different connection costs).

Our algorithm for \nwsf works against an oblivious adversary. Such an adversary can be thought of having to fix the whole instance of \nwsf
beforehand, see Section~\ref{subsec:comp_ratio_and_adversaries}
for details on various adversarial models. However, which one of the $\log k$ clients arrives to the \nmfl instance depends on previous decisions of the \nmfl algorithm. Therefore, the input sequence for the \nmfl instance is not determined beforehand, hence, we cannot assume that the algorithm faces an oblivious adversary. However, we argue that this situation can be remedied by considering a semi-adaptive adversary:
as the adversary has to fix the input instance of \nwsf in advance, this determines a super-instance $I'$ of \nmfl with $n$ facilities
and $\log k\cdot k$ potential clients. The decision which of these potential clients are presented to the \nmfl algorithm is then deferred to an adversary, allowing us to use our results for \nmfl in the setting with semi-adaptive adversaries.

To extend our algorithm to prize-collecting setting,
we suitably modify the auxiliary \nmfl instance, accommodating the
penalty costs.
This approach is different from \citet{HajiaghayiLiaghatPanigrahi14}
who provided a black-box reduction
from problems belonging to a certain class of network design
problems containing \nwsf
to their prize-collecting variants.
This reduction
loses a factor $O(\log n)$ on top of the competitive ratio
of the algorithm for the non-prize-collecting variant.

\paragraph{Set cover with semi-adaptive adversaries.}
Our rounding algorithm for set cover is very similar
to the algorithm of \cite{Alon_et_al_SetCover}:
we choose $\log |X|$ rounding thresholds for each set,
where $|X|$ is the number
of elements in the super-instance $I' = (X, \S)$ fixed
by the adversary beforehand. Although our algorithm
is not instance-aware (does not know $I'$), $\log |X|$ can be guessed online.
We purchase a set whenever its fractional value exceeds
any of the rounding thresholds.
Here we emphasize the difference to the oblivious setting: there we only need to choose $\log k$ rounding thresholds, where $k$ is the number of elements that will actually arrive.

While this online rounding scheme can be considered classic,
proving its competitive ratio in the semi-adaptive setting is more challenging.
We model the evolution of the solutions as a stochastic process, and use martingales to bound the probability that a single element is uncovered on arrival. Then we take a union bound over all potentially arriving elements from the superinstance. Hence, the size of the superinstance influences the number of terms in the union bound. This is the reason that the competitive ratio of the rounding scheme depends on this size.

\paragraph{Connection-aware set cover rounding for NMFL}
As pointed out by \citet{bienkowski_nearly_2021},
a direct rounding of the fractional values of the facilities
in a set-cover manner ignores the connection costs,
and leads to a very bad performance.
To handle this issue, we use a super-instance of the set cover problem. Each facility corresponds to a set, but for each client $c$, the instance contains $\log |C|$ distinct elements. The $j^{\text{th}}$ such element is only covered by sets for which the connection cost of client $c$ to the respective facility is at most $\tfrac{\opt}{2^j}$. Upon the online arrival of a client, based on the fractional solution to the non-metric facility location instance, we select one of the $\log|C|$ elements corresponding to the client to be released to an instance of set cover, and open the facilities corresponding to sets picked. Finally, we greedily connect the client to an opened facility.

\section{Preliminaries}\label{sec:preliminaries}
This section is divided in three parts. In Subsection~\ref{subsec:graph_theory} we start by recalling the relevant notions and definitions from graph theory. In Subsection~\ref{subsec:online_problems} we formally state the online (prize-collecting) node-weighted Steiner forest problem as well as the related online non-metric facility location problem and the online set cover problem. Finally, in subsection~\ref{subsec:comp_ratio_and_adversaries}, we give an overview on the types of online adversaries and the respective measure of competitiveness that we consider.

\subsection{Notions from Graph Theory}\label{subsec:graph_theory}
Throughout this paper, we consider undirected graphs $G = (V, E)$ containing $n$ vertices $(|V| =  n)$ along with a node-weight function $w: V \rightarrow \mathbb{R}_{\geq 0}$. For a subset of vertices $S\subseteq V$, we define $G[S]$ to be the subgraph induced by $S$, that is, the subgraph with vertex set $S$ and whose edge set consists of all edges with both endpoints in $S$. Its weight is denoted by $w(S) := \sum_{v\in S}w(v)$. For any pair of vertices $u,v \in V$, we define $d_G(u,v)$ to be the weight of the cheapest path between $u$ and $v$ with respect to the node-weight function $w$ (excluding the weights of $u$ and $v$). 
For a set $S$ of nodes we also define $d_{G/S}(w,v)$ to the be the weight of the cheapest path between $w$ and $v$, not counting the weight of nodes of $S\cup \{w,v\}$. 
Analogously, given $S \subseteq V$, we denote by $G/S$ the graph $G$ where the node-weights of all vertices in $S$ have been set to $0$.
Finally, we define the open ball of radius $r$ centered at $u$ as
$B(u,r) := \{ v \in V \mid d_G(u,v) < r \}$
and the boundary of $B(u,r)$ as
\[
\Bd(u, r) := \{v\in V \mid d_G(u,v) < r \leq d_G(u,v) +w_v \}.
\]

\subsection{Online Problems (\pcnwsf, \nmfl, \setcov)}\label{subsec:online_problems}

In \emph{online prize-collecting node-weighted Steiner forest} (\pcnwsf),
we are given
a graph $G=(V,E)$ and a cost function $w\in \R_{\geq 0}^V$ beforehand.
At each time $i=1, \dotsc, k$,
we receive a pair of terminals $(s_i, t_i)\in V\times V$
and a penalty cost $p_i \in \R_{\geq 0} \cup\{+\infty\}$.
Our task is to decide whether to augment a set $S\subseteq V$
so that the induced subgraph $G[S]$ connects $s_i$ and $t_i$
or to pay penalty $p_i$.
The objective is to minimize the sum of the paid penalties, plus the
weight of the vertices in $S$, i.e.\,$w(S) = \sum_{v\in S} w_v$.
This generalizes the \emph{prize-collecting node-weighted Steiner tree problem} (\pcnwst), where we need to maintain a connected subgraph (i.e. a tree) connecting terminals $s_1, s_2, \ldots, s_k$ (or pay the respective penalty $p_i$). Non-prize-collecting variants correspond to the special case where
$p_i = +\infty$ for each $i=1, \dotsc, k$.

An instance of \emph{online non-metric facility location} (\nmfl)
is given by a \emph{client-facility graph} $G := (F, C, \cost)$.
$G$ is a complete bipartite graph between clients $C$ and facilities $F$. Each facility $f$ has an opening cost $\fcost: F \rightarrow \R_{\geq 0}$, and each edge $(c,f)$ has a connection cost $\ccost: C\times F \rightarrow \R_{\geq 0} \cup \{+\infty\}$. 
At each time step $i = 1, \ldots, k$, we are given a client $c_i$, along with all facilities (and the corresponding facility and connection costs) for which $\ccost(c_i, f) < +\infty$. An online algorithm has to immediately connect it to some facility $f\in F$,
paying $\fcost(f) + \ccost(c_i,f)$ if the facility is not yet open, or $\ccost(c_i,f)$ otherwise.

An instance of \emph{online set cover} (\setcov) is given by a set system $(X, \S)$ and a cost function $c: \S \rightarrow \R_{\geq 0}$ on the sets. At each time step $i = 1, \ldots, k$, we are given an element $e\in \X$, along with all sets $S \in \S$ containing $e$, as well as their respective cost. If the element is not yet covered, an online algorithm needs to select some set containing $e$ and pay the cost of the respective set. Note that this corresponds to an instance of \nmfl with $\ccost \in \{0, +\infty\}$.
Set cover can be seen as a special case of \nwsf, see Figure~\ref{fig:SetCover_to_NWSF} for illustration.

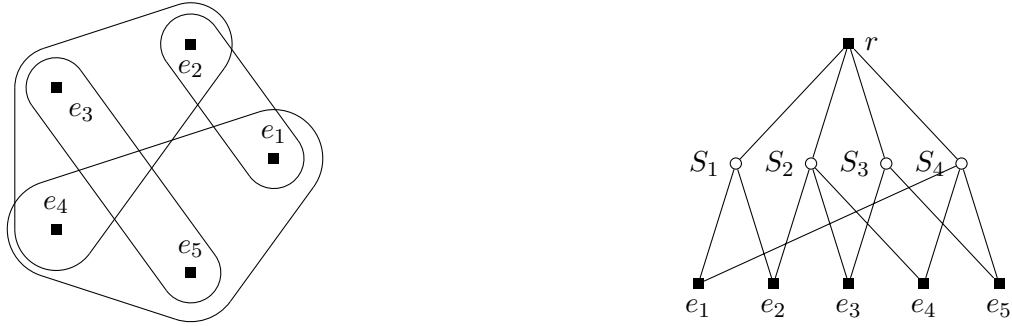
\begin{figure}[htbp]
    \centering
\begin{subfigure}[b]{0.45\textwidth}
        \centering
        \begin{tikzpicture}
 
\def\radius{.65cm}
\node[label = $e_1$, rectangle, inner sep = 2pt, fill] (e1) at (0:1.6) {};
\coordinate[label = -90:$e_2$, rectangle, inner sep = 2pt, fill] (e2) at (72:1.6) {};
\coordinate[label = -70:$e_3$, rectangle, inner sep = 2pt, fill] (e3) at (144:1.6) {};
\coordinate[label = $e_4$, rectangle, inner sep = 2pt, fill] (e4) at (216:1.6) {};
\coordinate[label = $e_5$, rectangle, inner sep = 2pt, fill] (e5) at (288:1.6) {};

\coordinate (e11) at ($(e1)!\radius!-90:(e4)$);
\coordinate (e41) at ($(e4)!\radius!+90:(e1)$);
\coordinate (e42) at ($(e4)!\radius!-90:(e5)$);
\coordinate (e51) at ($(e5)!\radius!+90:(e4)$);
\coordinate (e52) at ($(e5)!\radius!-90:(e1)$);
\coordinate (e12) at ($(e1)!\radius!+90:(e5)$);

\draw (e11)--(e41) (e42)--(e51) (e52) -- (e12);
\pic [draw, angle radius=\radius] {angle=e12--e1--e11};
\pic [draw, angle radius=\radius] {angle=e41--e4--e42};
\pic [draw, angle radius=\radius] {angle=e51--e5--e52};

\def\radius{.55cm}
\coordinate (e21) at ($(e2)!\radius!-90:(e3)$);
\coordinate (e31) at ($(e3)!\radius!90:(e2)$);
\coordinate (e32) at ($(e3)!\radius!-90:(e4)$);
\coordinate (e41) at ($(e4)!\radius!90:(e3)$);
\coordinate (e42) at ($(e4)!\radius!-90:(e2)$);
\coordinate (e22) at ($(e2)!\radius!90:(e4)$);

\draw (e21) -- (e31) (e32) -- (e41) (e42) -- (e22);
\pic [draw, angle radius=\radius] {angle=e22--e2--e21};
\pic [draw, angle radius=\radius] {angle=e31--e3--e32};
\pic [draw, angle radius=\radius] {angle=e41--e4--e42};

\def\radius{.4cm}
\coordinate (e31) at ($(e3)!\radius!90:(e5)$);
\coordinate (e32) at ($(e3)!\radius!-90:(e5)$);
\coordinate (e51) at ($(e5)!\radius!90:(e3)$);
\coordinate (e52) at ($(e5)!\radius!-90:(e3)$);

\draw (e31)--(e52) (e32)--(e51);
\pic [draw, angle radius=\radius] {angle=e31--e3--e32};
\pic [draw, angle radius=\radius] {angle=e51--e5--e52};

\def\radius{.4cm}
\coordinate (e21) at ($(e2)!\radius!90:(e1)$);
\coordinate (e22) at ($(e2)!\radius!-90:(e1)$);
\coordinate (e11) at ($(e1)!\radius!90:(e2)$);
\coordinate (e12) at ($(e1)!\radius!-90:(e2)$);

\draw (e21) -- (e12) (e22) -- (e11);
\pic [draw, angle radius=\radius] {angle=e21--e2--e22};
\pic [draw, angle radius=\radius] {angle=e11--e1--e12};

\end{tikzpicture}
        
    \end{subfigure}
    \hfill
\begin{subfigure}[b]{0.45\textwidth}
        \centering
            \begin{tikzpicture}
    \node[label = -90:$e_1$, rectangle, inner sep = 2pt, fill] (e1) at (-2, -1.6) {};
    \coordinate[label = -90:$e_2$, rectangle, inner sep = 2pt, fill] (e2) at (-1, -1.6) {};
    \coordinate[label = -90:$e_3$, rectangle, inner sep = 2pt, fill] (e3) at (-0, -1.6) {};
    \coordinate[label = -90:$e_4$, rectangle, inner sep = 2pt, fill] (e4) at (1, -1.6) {};
    \coordinate[label = -90:$e_5$, rectangle, inner sep = 2pt, fill] (e5) at (2, -1.6) {};

    \coordinate[label = -180:$S_1$, circle, inner sep = 1.5pt, draw] (S1) at (-1.5, 0) {};
    \coordinate[label = -180:$S_2$, circle, inner sep = 1.5pt, draw] (S2) at (-0.5, 0) {};
    \coordinate[label = -180:$S_3$, circle, inner sep = 1.5pt, draw] (S3) at (0.5, 0) {};
    \coordinate[label = -180:$S_4$, circle, inner sep = 1.5pt, draw] (S4) at (1.5, 0) {};

    \coordinate[label = 0:$r$, rectangle, inner sep = 2pt, fill] (r) at (0, 1.6) {};
    \draw (r)--(S1) (r)--(S2) (r)--(S3) (r)--(S4);
    \draw (e1)--(S1) (e2)--(S1) (e2) -- (S2) (e3) -- (S2) (e4) -- (S2) (e3) -- (S3) (e5) -- (S3) (e4) -- (S4) (e5) -- (S4) (e1) -- (S4);

    \end{tikzpicture}
    \end{subfigure}
    \caption{Reducing an instance of set cover to node-weighted Steiner forest (tree): for each element $e \in \X := \{e_1, \ldots, e_k\}$ in the ground set $\X$ is connected to set $S \in \mathcal{S} := \{S_1, \ldots, S_n\}$ if and only if $e\in S$. Every set is connected to some distinguished vertex $r$. The demand pairs are $\{(e_1, r), (e_2, r), \ldots, (e_k, r)\}$.}
    \label{fig:SetCover_to_NWSF}
\end{figure}

\subsection{Competitive ratio and different types of adversaries}\label{subsec:comp_ratio_and_adversaries}
An online problem can be seen as a $2$-player game between an algorithm and an adversary.
The adversary reveals the problem gradually and the algorithm needs to come up with a solution. The performance of the algorithm is then measured using the notion of competitive ratio, the definition of which depends on the type of adversary the algorithm faces. We give a brief overview on the various adversaries considered in this paper, we refer to~\cite[Chapter~$7$]{BorodinYaniv98} for a thorough, textbook exposition.

\paragraph{Oblivious adversary.}
This is the most common adversarial model for randomized algorithms.
An oblivious adversary knows the description of the algorithm
and has to commit to the whole problem instance before the algorithm starts. The instance is then revealed to the algorithm in an online fashion.
An algorithm $\alg$ is said to  be $\gamma$-competitive against an oblivious adversary if we have
\begin{align*}
  \Exp[\alg(I)] \leq \gamma\cdot \opt(I)
\end{align*}
for every input instance $I$,
where $\alg(I)$ denotes
the cost of the algorithm's solution to $I$,
$\opt(I)$ is the cost of the optimal solution to $I$ computed offline,
and the expectation is over the randomness of $\alg$.

\paragraph{Adaptive adversary.}
An adaptive adversary does not have to fix an instance in advance.
Instead, at each time step,
it chooses the next part of the input instance
$I_{\adv}$ based on (random) actions of the
algorithm so far.
Algorithm $\alg$ is said to be $\gamma$-competitive against adaptive adversaries
if for any adaptive adversary $\adv$ we have:
\begin{align*}
\Exp[\alg(I_{\adv})] \leq \gamma \cdot \Exp[\opt(I_{\adv})],
\end{align*}
where $\alg(I_{\adv})$ denotes
the cost of the algorithm's solution to $I_{\adv}$ and
$\opt(I_{\adv})$ is the cost of the optimal solution to $I_{\adv}$ computed offline.
Note that $\opt(I_{\adv})$ here also depends on the 
randomness of $\alg$, as the actions of the adversary are adaptive to those of the algorithm.
For many problems, such as online set cover, adaptive adversaries are strictly more powerful than their oblivious counterparts.

\paragraph{Semi-adaptive adversary}
In this paper we will introduce a new type of adversary, the semi-adaptive adversary. A semi-adaptive adversary has to commit to a super-instance $I$ of the problem instance beforehand, without revealing it to the algorithm. The actual problem instance can then be chosen adaptively based on the actions of the algorithm, but must be a subinstance of $I$.
The competitive ratio is defined as in the adaptive setting.

\paragraph{Competitiveness of deterministic algorithms.}
In the case of deterministic algorithms, oblivious and
adaptive adversaries coincide, since the adversary knows
the algorithm and can simulate its (deterministic) steps beforehand.
A deterministic algorithm $\alg$ is said to be $\gamma$-competitive,
if
\[ \alg(I) \leq \gamma \cdot \opt(I) \]
holds for every input instance $I$,
where $\alg(I)$ denotes
the cost of the algorithm's solution to $I$, and
$\opt(I)$ is the cost of the optimal solution to $I$ computed offline.

In case of \setcov, \nmfl, and \nwsf, deterministic algorithms are studied
in the setting where a super-instance of $I$ is known to the
algorithm beforehand\footnote{Already for online set cover, if the set system $(X,\S)$ is not known in advance, the competitive ratio of any deterministic algorithm is at least $|X|$, see~\cite{Alon_et_al_SetCover}.}. Here, $\gamma$
usually depends on the parameters of the super-instance instead of
the parameters of $I$.

\paragraph{Refined competitive ratio for \nmfl.}
We use a refined definition
of $(\alpha,\beta)$-competitiveness for \nmfl. Given an algorithm $\alg$ for \nmfl, we split the cost incurred by $\alg$ into the cost incurred on connection costs and facility costs, $\ccost(\alg)$ and $\fcost(\alg)$ respectively. We say $\alg$ is $(\alpha, \beta)$-competitive, if $\fcost(\alg)$ is $\alpha$-competitive and $\ccost(\alg)$ is $\beta$-competitive with respect to an optimum offline solution (which has to pay both facility and connection costs). Since $\cost(\alg) = \fcost(\alg) + \ccost(\alg)$, any $(\alpha, \beta)$-competitive algorithm is also $(\alpha+\beta)$-competitive.

\section{An algorithm for node weighted Steiner forest}
\label{sec:nwsf}
The main result of this section is our $O(\log k\log n)$-competitive algorithm for online (prize-collecting) node-weighted Steiner forest. In Subsection~\ref{subsec:overview} we give an overview of our approach.
In Section~\ref{subsec:alg_nwsf} we describe our algorithm for \nwsf (Algorithm~\ref{alg:nwsf}).
Finally, in Section~\ref{subsec:alg_pcnwsf}, we extend it to the prize-collecting setting of \nwsf.

\subsection{Overview}\label{subsec:overview}
We begin by describing our approach. Without loss of generality, we may assume that the online algorithm knows $k$, the number of eventually arriving terminal pairs, and that all arriving pairs have distance between $1$ and $k$, e.g. $d_{G/S}(s,t) \in [1, k]$. This is a standard assumption, we sketch the argument in the Appendix, see~\ref{subsec:guesskandopt}. 

The description of our algorithm is very similar to
\citep{HajiaghayiLiaghatPanigrahi13}.
From a high-level perspective, it is an extension of the Augmented Greedy algorithm by \citet{BermanCoulston97} for edge-weighted Steiner forest to the node-weighted setting through the elegant use of non-metric facility location.

\begin{figure}[h]
    \centering
\begin{tikzpicture}
\draw [gray, fill=gray!12, color=gray!12, dotted] plot [smooth cycle] coordinates {(0,2) (1.4,1.4) (2,0) (1,-1.5) (0, -0.8) (-3,0)};
\draw[color = gray!100] (0.6,0.7) circle (1cm){};
\coordinate[rectangle, fill, draw, inner sep = 1.5pt] (t1) at (0.6, 0.7) {};
\draw[color = gray!100] (-1.5,0.3) circle (0.7cm){};
\coordinate[rectangle, fill, draw, inner sep = 1.5pt] (t2) at (-1.5,0.3) {};
\draw[color = gray!100] (0.9,-0.9) circle (0.4cm){};
\coordinate[rectangle, fill, draw, inner sep = 1.5pt] (t3) at (0.9, -0.9) {};
\path (t1) ++(45:1cm) coordinate (t1fboundary);
\coordinate[circle, draw, inner sep = 1.5pt, fill=gray!70] (t1fboundary) at (t1fboundary) {};
\path (t1) ++(35:0.5cm) coordinate (t1finterior);
\coordinate[circle, draw, inner sep = 1pt, fill=gray!70] (t1finterior) at (t1finterior) {};

\path (t1) ++(-80:0.6cm) coordinate (t1greedy1);
\coordinate[circle, draw, inner sep = 1pt] (t1greedy1) at (t1greedy1) {};
\path (t1) ++(-95:1cm) coordinate (t1greedy2);
\coordinate[circle, draw, inner sep = 1pt] (t1greedy2) at (t1greedy2) {};

\path (t2) ++(160:0.7cm) coordinate (t2fboundary);
\coordinate[circle, draw, inner sep = 1.5pt, fill=gray!70] (t2fboundary) at (t2fboundary) {};
\path (t2) ++(220:0.45cm) coordinate (t2greedy1);
\coordinate[circle, draw, inner sep = 1pt] (t2greedy1) at (t2greedy1) {};
\path (t2) ++(200:0.7cm) coordinate (t2greedy2);
\coordinate[circle, draw, inner sep = 1pt] (t2greedy2) at (t2greedy2) {};
\path (t2) ++(-20:0.7cm) coordinate (t2greedy3);
\coordinate[circle, draw, inner sep = 1pt] (t2greedy3) at (t2greedy3) {};

\path (t3) ++(30:0.4cm) coordinate (t3fboundary);
\coordinate[circle, draw, inner sep = 1.5pt, fill=gray!70] (t3fboundary) at (t3fboundary) {};
\path (t3) ++(110:0.4cm) coordinate (t3greedy1);
\coordinate[circle, draw, inner sep = 1pt] (t3greedy1) at (t3greedy1) {};

\path (t2) ++(195:1.2cm) coordinate (t4);
\coordinate[rectangle, fill, draw, inner sep = 1.5pt] (t4) at (t4) {};

\coordinate[rectangle, fill, draw, inner sep = 1.5pt] (t5) at (-0.5, -0.5) {};

\coordinate[rectangle, fill, draw, inner sep = 1.5pt] (t6) at (0.1, -0.6) {};

\draw (t4)--(t2greedy2) -- (t2greedy1) -- (t2) -- (t2greedy3) -- (t5) -- (t6) -- (t1greedy2) -- (t3greedy1) -- (t3) (t1greedy2) -- (t1greedy1) -- (t1);

\draw (t2)--(t2fboundary) (t1)--(t1finterior) -- (t1fboundary) (t3)--(t3fboundary);

\path (6, 0.1) ++(80:1.1cm) coordinate (p0);

\path (6, 0.1) ++(90:1.1cm) coordinate (p1);
\path (6, 0.1) ++(110:1.1cm) coordinate (p2);
\path (6, 0.1) ++(130:1.1cm) coordinate (p3);
\path (6, 0.1) ++(150:1.1cm) coordinate (p4);
\path (6, 0.1) ++(170:1.1cm) coordinate (p5);
\path (6, 0.1) ++(190:1.1cm) coordinate (p6);
\path (6, 0.1) ++(210:1.1cm) coordinate (p7);
\path (6, 0.1) ++(230:1.1cm) coordinate (p8);
\path (6, 0.1) ++(250:1.1cm) coordinate (p9);
\path (6, 0.1) ++(270:1.1cm) coordinate (p10);
\draw [gray, fill=gray!12, color=gray!12] plot [smooth cycle] coordinates {(p0) (p1) (p2) (p3) (p4) (p5) (p6) (p7) (p8) (p9) (p10) (8, -1) (9.8, -0.4) (10.2, 0) (9.8, 0.5) (8.1, 1)};

\coordinate[rectangle, fill, draw, inner sep = 1.5pt] (s1) at (6, 0.1) {};
\draw[color = gray!100] (s1) circle (1cm){};
\path (s1) ++(-10:0.5cm) coordinate (s1greedy1);

\path (s1) ++(160:0.6cm) coordinate (s1finterior);
\coordinate[circle, draw, inner sep = 1pt, fill = gray!70] (s1finterior) at (s1finterior) {};
\path (s1) ++(190:1cm) coordinate (s1fboundary);
\coordinate[circle, draw, inner sep = 1.5pt, fill = gray!70] (s1fboundary) at (s1fboundary) {};
\coordinate[circle, draw, inner sep = 1pt] (s1greedy1) at (s1greedy1) {};
\path (s1) ++(10:1cm) coordinate (s1greedy2);
\coordinate[circle, draw, inner sep = 1pt] (s1greedy2) at (s1greedy2) {};

\coordinate[rectangle, fill, draw, inner sep = 1.5pt] (s2) at (8, -0.4) {};
\draw[color = gray!100] (s2) circle (0.4cm){};
\path (s2) ++(80:0.4cm) coordinate (s2greedy1);
\coordinate[circle, draw, inner sep = 1pt] (s2greedy1) at (s2greedy1) {};
\path (s2) ++(160:0.4cm) coordinate (s2fboundary);
\coordinate[circle, draw, inner sep = 1.5pt, fill = gray!70] (s2fboundary) at (s2fboundary) {};

\coordinate[circle, draw, inner sep = 1pt] (greedy1) at (7.5, 0.3) {};
\coordinate[rectangle, draw, fill, inner sep = 1.5pt] (s3) at (8.1, 0.7) {};
\coordinate[circle, draw, inner sep = 1pt] (greedy2) at (8.8, 0.3) {};
\coordinate[rectangle, draw, fill, inner sep = 1.5pt] (s4) at (9.8, 0.1) {};

\draw (s1)--(s1greedy1) -- (s1greedy2) -- (greedy1) -- (s2greedy1) -- (greedy2) -- (s4) (s2) -- (s2greedy1) -- (s3);
\draw (s1) -- (s1finterior) -- (s1fboundary) (s2) -- (s2fboundary);

\coordinate[rectangle, fill, draw, inner sep = 1.5pt, label = 70:$s$] (u1) at (1.85, 1) {};
\draw[dashed, color = gray!100] (u1) circle (1cm);

\coordinate[rectangle, fill, draw, inner sep = 1.5pt, label = -90:$t$] (u2) at (4.8, -0.6) {};
\draw[dashed, color = gray!100] (u2) circle (1cm);

\coordinate[circle, draw, inner sep = 1pt] (greedy5) at (3.2, 0.5) {};
\coordinate[circle, draw, inner sep = 1pt] (greedy6) at (4.1, -0.5) {};

\draw[dotted, thick] (t1fboundary) -- (u1) -- (greedy5) -- (greedy6) -- (u2) -- (s1fboundary){};

\end{tikzpicture}
    
    \caption{Augmented Greedy: Both dual balls around $s$ and $t$ of size $c$ intersect dual balls of size $c$. Connecting $s$ and $t$ by a greedy path, as well as connecting $s$ and $t$ to the respective dual balls they intersect, we reduce the number of connected components containing a dual ball of size $c$.}
    \label{fig:aug-greedy}
\end{figure}
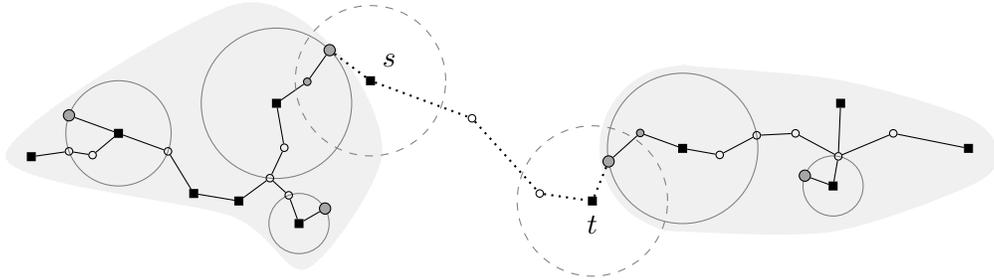

On arrival of a terminal pair $(s,t)$, we always purchase
the shortest path between $s$ and $t$, let $c$ denote its cost
and denote by $\lfloor\log c\rfloor$ the level of both $s$ and $t$.
If there are previous terminals $u$ and $v$ of the same level as $s$ and $t$ such that the distances
from $s$ to $u$ and from $t$ to $v$ are both at most $c/2$,
we make a step of augmented greedy \citep{BermanCoulston97}:
we also connect $s$ to $u$ and $t$ to $v$. This is of cost at most $O(c)$ and decreases the number of components
of that level, see Figure~\ref{fig:aug-greedy}.
If this is not the case, we would like to place
a dual ball of radius $\Omega(c)$ around either $s$ or $t$. 
This is possible in edge-weighted variant and, for each of the $\log k$ levels, such dual balls are disjoint and
serve as a lower bound on $\opt$.

However, the difficulty of the node-weighted setting comes from the fact that
a single vertex can be partially contained in many
dual balls. This means that such a vertex is useful for connecting
many terminals and we would like to choose few cheap such vertices
which are contained on boundaries of many dual balls. Similarly to \citet{HajiaghayiLiaghatPanigrahi13},
we use an instance of \nmfl treating such vertices as facilities to decide which ones we should purchase, see Figure~\ref{fig:boundary_nmfl} for an illustration.
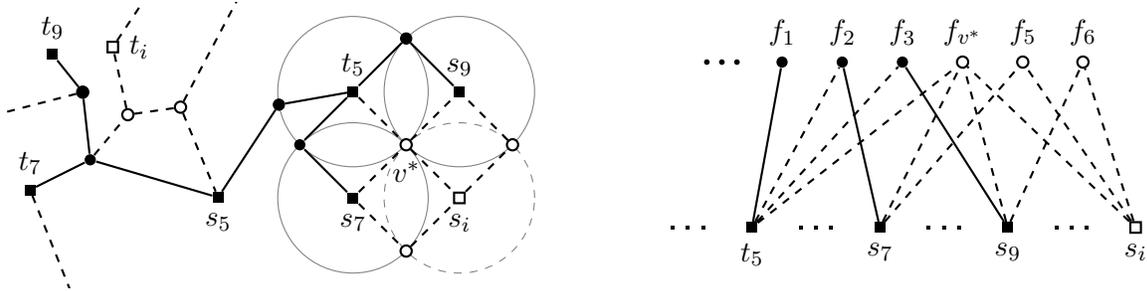
\begin{figure}[h]
\centering
\begin{subfigure}[t]{0.49\textwidth}
\centering
        \begin{tikzpicture}
    \clip (-5.4,-1.9) rectangle (2,1.9);
    \draw[draw=none] (-5.4,-1.9) rectangle (2,1.9);
    \coordinate[label = 90:$s_9$, rectangle, inner sep = 2pt, fill] (s9) at (45:1cm) {};   
    \coordinate[label = 90:$t_5$, rectangle, inner sep = 2pt, fill] (t5) at (135:1cm) {}; 
    \coordinate[label = -90:$s_7$, rectangle, inner sep = 2pt, fill] (s7) at (225:1cm) {}; 
    \coordinate[label = -90:$s_i$, rectangle, inner sep = 2pt, draw, thick] (si) at (315:1cm) {}; 
    \draw[gray!100] (s9) circle (1cm);
    \draw[gray!100] (t5) circle (1cm);
    \draw[gray!100] (s7) circle (1cm);
\draw[dashed, gray!100] (si) circle (1cm);
    \coordinate[label = -90:$s_5$, rectangle, inner sep = 2pt, fill] (s5) at (-2.5,-0.7) {}; 
    \coordinate[label = 90:$t_9$, rectangle, inner sep = 2pt, fill] (t9) at (-4.7,1.2) {}; 
    \coordinate[label = 90:$t_7$, rectangle, inner sep = 2pt, fill] (t7) at (-5,-0.6) {}; 
    \coordinate[label = 0:$t_i$, rectangle, inner sep = 2pt, draw, thick] (ti) at (-3.9,1.3) {}; 
    \coordinate[circle, inner sep = 1.5pt, fill] (v) at (-4.2,-0.2) {}; 
    \path (0,0) coordinate (vstar);
    \coordinate[label = -90:$v^*$, circle, inner sep = 1.5pt, draw, thick, fill=white] (vstar) at (vstar) {}; 
    \node[circle, inner sep = 1.5pt, thick, fill] (b_1) at (-180:1.414cm) {}; 
    \path (t5)++(190:1cm) coordinate (t5g);
    \coordinate[circle, draw, inner sep = 1.5pt, fill] (t5g) at (t5g) {};
    \coordinate[circle, inner sep = 1.5pt, draw, thick, fill=white] (greedy7i) at (-90:1.414cm) {}; 
    \coordinate[circle, inner sep = 1.5pt, draw, thick, fill=white] (greedy9i) at (0:1.414cm) {}; 
     \node[circle, inner sep = 1.5pt, draw, fill] (b_2) at (90:1.414cm) {}; 

     \coordinate[circle, draw, inner sep = 1.5pt, thick] (u1) at (-3, 0.5) {};
     \coordinate[circle, draw, inner sep = 1.5pt, thick] (u2) at (-3.7, 0.4) {};
     \coordinate[circle, draw, inner sep = 1.5pt, thick, fill] (u3) at (-4.3, 0.7) {};
     \draw[thick] (t7)--(v)--(u3)--(t9) (v)--(s5)--(t5g)--(t5)--(b_2)--(s9) (s7)--(b_1)--(t5);
     \draw[dashed, thick] (v)--(u2)--(u1)--(s5) (u1)--(-2.1, 2.2) (t7)--(-4.4, -2.1) (u2)--(ti) -- (-3.3, 2.2) (t5)--(vstar)--(si) (s7)--(vstar)--(s9) (s7)--(greedy7i)--(si)--(greedy9i)--(s9) (u3)--(-7,0);

    \end{tikzpicture}
    \caption{All nodes have weight $1$. Since $d_{G/S}(s_i, t_i) = 2$, we need to select a path and place a dual ball of radius~$1$ around~$s_i$. This intersects previously placed balls around $t_5, s_9, s_7$.}
    \label{fig:first}
\end{subfigure}
\hfill
\begin{subfigure}[t]{0.49\textwidth}
\centering
        \begin{tikzpicture}
    \draw[draw=none] (-5.4+0.125,-1.9) rectangle (2+0.125,1.9);
    \def\dist{0.85};
    \def\dots{0.2};
    \def\height{1.1}
\coordinate[rectangle, inner sep = 0.75pt, fill] (dot1) at (-5.4+\dist-\dots, -\height) {}; 
    \coordinate[rectangle, inner sep = 0.75pt, fill] (dot2) at (-5.4+\dist, -\height) {}; 
    \coordinate[rectangle, inner sep = 0.75pt, fill] (dot3) at (-5.4+\dist+\dots, -\height) {}; 
    \coordinate[rectangle, inner sep = 2pt, fill, label=-90:$t_5$] (t5) at (-5.4+\dist*2, -\height) {}; 
    \coordinate[rectangle, inner sep = 0.75pt, fill] (dot1) at (-5.4+3*\dist-\dots, -\height) {}; 
    \coordinate[rectangle, inner sep = 0.75pt, fill] (dot2) at (-5.4+3*\dist, -\height) {}; 
    \coordinate[rectangle, inner sep = 0.75pt, fill] (dot3) at (-5.4+3*\dist+\dots, -\height) {}; 
    \coordinate[rectangle, inner sep = 2pt, fill, label=-90:$s_7$] (s7) at (-5.4+\dist*4, -\height) {}; 
    \coordinate[rectangle, inner sep = 0.75pt, fill] (dot1) at (-5.4+5*\dist-\dots, -\height) {}; 
    \coordinate[rectangle, inner sep = 0.75pt, fill] (dot2) at (-5.4+5*\dist, -\height) {}; 
    \coordinate[rectangle, inner sep = 0.75pt, fill] (dot3) at (-5.4+5*\dist+\dots, -\height) {}; 
    \coordinate[rectangle, inner sep = 2pt, fill, label=-90:$s_9$] (s9) at (-5.4+\dist*6, -\height) {}; 
    \coordinate[rectangle, inner sep = 0.75pt, fill] (dot1) at (-5.4+7*\dist-\dots, -\height) {}; 
    \coordinate[rectangle, inner sep = 0.75pt, fill] (dot2) at (-5.4+7*\dist, -\height) {}; 
    \coordinate[rectangle, inner sep = 0.75pt, fill] (dot3) at (-5.4+7*\dist+\dots, -\height) {}; 
    \coordinate[rectangle, inner sep = 2pt, draw, thick, label=-90:$s_i$] (si) at (-5.4+\dist*8, -\height) {}; 

    \def\distf{0.8}
    \def\init{1.3}
    \coordinate[circle, inner sep = 0.75pt, fill] (dot1) at (-5.4+\init-0.2, \height) {}; 
    \coordinate[circle, inner sep = 0.75pt, fill] (dot1) at (-5.4+\init, \height) {}; 
    \coordinate[circle, inner sep = 0.75pt, fill] (dot1) at (-5.4+\init+0.2, \height) {}; 
    \coordinate[circle, inner sep = 1.5pt, fill, label=90:$f_1$] (f1) at (-5.4+\init+\distf*1, \height) {}; 
    \coordinate[circle, inner sep = 1.5pt, fill, label=90:$f_2$] (f2) at (-5.4+\init+\distf*2, \height) {};
    \coordinate[circle, inner sep = 1.5pt, fill, label=90:$f_3$] (f3) at (-5.4+\init+\distf*3, \height) {};
    \coordinate[circle, inner sep = 1.5pt, draw, thick, label=90:$f_{v^*}$] (fstar) at (-5.4+\init+\distf*4, \height) {};
    \coordinate[circle, inner sep = 1.5pt, draw, thick, label=90:$f_5$] (f5) at (-5.4+\init+\distf*5, \height) {};
    \coordinate[circle, inner sep = 1.5pt, draw, thick, label=90:$f_6$] (f6) at (-5.4+\init+\distf*6, \height) {};
\draw[thick] (t5)--(f1) (s7)--(f2) (s9)--(f3);
\draw[dashed, thick] (t5)--(f2) (t5)--(f3) (t5)--(fstar) (s7)--(fstar) (s7)--(f5) (s9)--(fstar) (s9)--(f6) (si)--(fstar) (si)--(f5) (si)--(f6);

    \end{tikzpicture}
    \caption{The \nmfl instance: Every arriving terminal corresponds to a client, every vertex on the boundary of its respective dual ball corresponds to a facility the client is connected~to.}
    \label{fig:second}
\end{subfigure}

\caption{The vertex $v^*$ is contained on the boundary of many dual balls. In other words, it would have been beneficial to select $v^*$. This corresponds to a cheap facility connected to many clients.}
\label{fig:boundary_nmfl}
\end{figure}

Our algorithm solves a single auxiliary instance of \nmfl, while
\citet{HajiaghayiLiaghatPanigrahi13} use a separate
instance of \nmfl for terminals at each level. Each one of these $\log k$ instances of \nmfl has an optimal solution of cost at most $\opt_\nwsf$, at most the cost of an optimal solution for the corresponding instance of \nwsf.
When provided with an $O(\log k \log n)$-competitive
algorithm to each of the \nmfl instances, their algorithm for \nwsf
is $O(\log^2 k \log n)$-competitive.
In contrast, we formulate a single \nmfl instance with a feasible solution of cost at most $O(\log k)\cdot\opt_\nwsf$. However, in our instance, we use facility costs scaled by a factor $\log k$. Hence, for each vertex corresponding to a facility selected, we only pay a $\smash{\tfrac{1}{\log k}}$ fraction of the cost. This way, 
having a $(O(\log k\log n), O(\log n))$-competitive solution
to the \nmfl instance, this results in a $O(\log k \log n)$-competitive solution for \nwsf.
We note that a similar idea of scaling the facility costs has been used already in \citep{HajiaghayiLiaghatPanigrahi14} in the context of \nwst.
Finally, we note that with a single instance of \nmfl, we slightly deviate from the charging scheme of~\cite{HajiaghayiLiaghatPanigrahi13}, as it is more involved to directly charge the greedy paths and facilities to dual balls or augmented greedy steps. We implicitly keep track of which level each facility opened corresponds to, and modify how augmented greedy steps are performed.

However, there is a fundamental issue when using a single instance of \nmfl; even against an oblivious adversary presenting the instance to $\nwsf$, we cannot handle the corresponding online instance of \nmfl by the classical randomized algorithm working against an oblivious adversary. Indeed, previous actions of our algorithm (e.g.\,which vertices are bought to connect previously arrived terminals) affect the distances of pairs arriving in the future, effectively changing the connections in the corresponding \nmfl instance. Of course, if we were to use a deterministic algorithm for \nmfl, this issue would not arise. In this case however, since then we require instance-awareness, the set of potential clients is polynomial in $n$ and results in a competitive ratio of $O(\log^2 n)$. Thus, to really obtain our refined bound of $O(\log k \log n)$, we need to use a randomized algorithm for \nmfl, and allow for adaptive changes in the client-facility graph. This setting is exactly captured by a semi-adaptive adversary: the $k$ terminal pairs fixed by an oblivious adversary for \nwsf implicitly determine a potential client-facility graph with $n$ facilities and $O(\log k \cdot k)$ clients. The decision which of these clients arrive is deferred to the adversary. Since the number of potential clients is polynomial in $k$, the $O(\log |C| \cdot \log|F|)$-competitive algorithm for \nmfl in this setting then guarantees the $O(\log k \log n)$-competitive ratio for \nwsf against an oblivious adversary.

\paragraph{Auxiliary instance of \nmfl}
We describe the auxiliary instance of \nmfl used by our algorithm.
Consider the graph $G=(V,E)$ with vertex weights $w \in \R_{\geq0}^V$.
An oblivious adversary has to fix an instance
of \nwsf beforehand. This instance is specified by the sequence
$\smash{T = ((s_1, t_1), \dotsc, (s_k, t_k)) \subseteq (V\times V)^k}$
of terminal pairs and determines a super-instance
$I(T)$ of \nmfl considered by our algorithm. The auxiliary instance $I_\nmfl$
will be chosen as a subinstance of $I(T)$ adaptively based on the previous steps of our online algorithm for \nwsf. Therefore the auxiliary instance $I_\nmfl$ can be solved by an algorithm for \nmfl against a semi-adaptive adversary. 
\begin{figure}[h]
\centering
\begin{subfigure}[t]{0.45\textwidth}
    \centering
        \begin{tikzpicture}
    \draw[white] (-3.5, -1.7) rectangle (3.5, 1.7);
    \clip (-3.5, -1.7) rectangle (3.5, 1.7);
        \coordinate[rectangle, inner sep=2pt, draw, thick, label=0:$t$] (s) at (0,0);
        \coordinate[circle, inner sep=1.5pt, draw, thick, label=-90:\small{$0.1$}] (v1) at (-1,-0.2);
        \coordinate[circle, inner sep=1.5pt, draw, thick, label=90:\small{$1.1$}] (v2) at (-2,0.7);
        \coordinate[circle, inner sep=1.5pt, draw, thick, label=180:\small{$100$}] (v3) at (-2.3,-0.5);
        \coordinate[circle, inner sep=1.5pt, draw, thick, label=90:\small{$4$}] (v4) at (0.8,+0.8);
        \coordinate[circle, inner sep=1.5pt, draw, thick, label=-90:\small{$1$}] (v5) at (1.7,+0.5);
        \coordinate[circle, inner sep=1.5pt, draw, thick, label=-0:\small{$1.5$}] (v6) at (0.6,-0.5);
        \coordinate[circle, inner sep=1.5pt, draw, thick, label=-180:\small{$3$}] (v7) at (1.2,-1.1);
        \draw[dashed, thick] (s)--(v1) --(v2)--(v3)--(-4,-3) (v1)--(v3) (s)--(v4)--(v5)--(4,4) (s)--(v6)--(v7)--(3,-2.5);
    \end{tikzpicture}
    \caption{Part of the node-weighted graph $G=(V,E)$.}
\end{subfigure}
\hfill
\begin{subfigure}[t]{0.45\textwidth}
    \centering
    \begin{tikzpicture}
    \def\height{0.9};
    
        \draw[white] (-3.5, -1.7) rectangle (3.5, 1.7);
        \clip (-3.5, -1.7) rectangle (3.5, 1.7);
        \coordinate[circle, inner sep=1.5pt, draw, thick, label=90:\small{$100\,\ell$}] (f100) at (-3, \height);
        \coordinate[circle, inner sep=1.5pt, draw, thick, label=90:\small{$1.1\,\ell$}] (f11) at (-2, \height);
        \coordinate[circle, inner sep=1.5pt, draw, thick, label=90:\small{$0.1\,\ell$}] (f01) at (-1, \height);
        \coordinate[circle, inner sep=1.5pt, draw, thick, label=90:\small{$1.5\,\ell$}] (f15) at (0, \height);
        \coordinate[circle, inner sep=1.5pt, draw, thick, label=90:\small{$3\,\ell$}] (f3) at (1, \height);
        \coordinate[circle, inner sep=1.5pt, draw, thick, label=90:\small{$4\,\ell$}] (f4) at (2, \height);
        \coordinate[circle, inner sep=1.5pt, draw, thick, label=90:\small{$1\,\ell$}] (f1) at (3, \height);
        \coordinate[rectangle, inner sep=2pt, draw, thick, label=-90:${(t,3)}$] (s0) at (-1.5, -\height);
        \coordinate[rectangle, inner sep=2pt, draw, thick, label=-90:${(t,4)}$] (s1) at (1.5, -\height);
        \draw[dashed, thick] (s0)--(f100) (s0)--(f11) (s0)--(f4) (s0)--(f15);
        \coordinate[label=-200:\small{$0.1$}] (c0100) at (-1.85, -0.7);
        \coordinate[label=90:\small{$0.1$}] (c0100) at (-1.45, -0.3);
        \coordinate[label=90:\small{$0$}] (c015) at (-0.85, -0.6);
        \coordinate[label=90:\small{$0$}] (c015) at (-0.7, -0.9);

        \draw[dashed, thick] (s1)--(f100) (s1)--(f4) (s1)--(f3);
        \coordinate[label=-200:\small{$0.1$}] (c1100) at (1.0, -1.05);
        \coordinate[label=-180:\small{$1.5$}] (c13) at (1.4, -0.35);
        \coordinate[label=0:\small{$0$}] (c14) at (1.60, -0.45);
    \end{tikzpicture}
    \caption{Part of the corresponding instance of NMFL.}
    \label{fig:corresponding_nmfl}
\end{subfigure}
\caption{Creating clients $(t,3)$, $(t,4)$ for the auxiliary instance of \nmfl, $\ell = \log k$. Facilities are on top with corresponding (scaled) opening costs, dashed lines are possible connections to facilities with their respective cost.}
\label{fig:creating_client_facility_graph}
\end{figure}
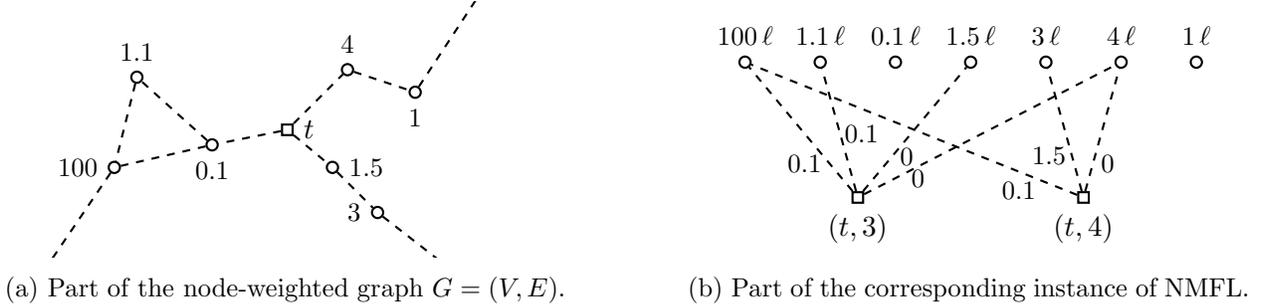

We will now precisely define $I(T) = (C, F, \cost)$.
For the set of facilities, we choose $F = V$, the whole vertex set of the graph $G$, each with facility opening costs $\fcost(v) = \ell \cdot w_v$, $\ell:=\log k$. For each of the $2\cdot k$ terminals $t$ contained in a pair in $T$, we create $\log k+1$ different clients $(t, 0), (t,1), \ldots, (t,\log k)$. This way, we have $|F| = n$ and $|C| = O(\log k\cdot k)$. It remains to define the connection costs between the clients and facilities. Client $(t,j)$ is connected to all facilities corresponding to vertices $v\in \Bd(t, 2^{j-3})$ and has connection cost $\ccost((t,j),v) = d_G(t,v)$. For all other vertices, i.e.\,for $v\notin \Bd(t,2^{j-3})$, we set $\ccost((t,j),v)=+\infty$. See Figure~\ref{fig:creating_client_facility_graph} for an illustration.

\subsection{Algorithm for \nwsf}\label{subsec:alg_nwsf}
We now give a formal overview of our algorithm. We assume that we are given $k$, the number of eventually arriving terminals pairs and that all arriving pairs have distance between $1$ and $k$. This is without loss of generality, see~\cref{subsec:guesskandopt}. For convenience, we also assume that $n$ and $k$ are powers of $2$ and we set $\ell := \log k$.

We maintain a solution $S \subseteq V$ such that $G[S]$ connects all arrived terminal pairs. Upon arrival of $(s,t)$ we compute the cheapest path $P$ between $s$ and $t$ in the graph $G/S$. Denoting by $c$ the weight of this path, we set $j := \lceil \log_2 c\rceil$
and add $s$ and $t$ to the set $T_j$ of terminals at level $j$. We purchase $P$ and proceed as follows.

First, we check whether the distance (w.r.t. $d_G(\cdot,\cdot)$) of $s$ to terminals at level $j$ is at least $2^{j -2}$ and no opened facility for a terminal at level $j$ is contained in the ball of radius $2^{j-2}$.
If this holds, we add the client $(s,j) \in C$ to $I_\nmfl$.
The online algorithm for \nmfl connects $s$ to a facility $f\in V$: we purchase the corresponding boundary vertex, add it to the set $F_j$, and purchase the corresponding connection. If this does not hold for terminal $s$ but for $t$, we proceed analogously with~$t$ replaced by~$s$.
Otherwise, we purchase the cheapest paths from $s$ and $t$ to $T_j \cup F_j$ -- both paths have cost at most $c$. At the end of the iteration, we add all purchased vertices to $S$.

\begin{algorithm}
\begin{algorithmic}[1]
\State Initialize $\mathcal{A}$ as Algorithm~\ref{alg:nmfl_to_sc} for \nmfl.\label{line:nwsf_nmflinit}
\When {$(s_i, t_i)$ arrives}
\State $P:=$ cheapest path from $s$ to $t$ in $G/S$
\State choose $j_i\in \N$ such that $\frac12\cdot 2^{j_i} < w(P) \leq 2^{j_i}$.
\State $T_{j_i} := T_{j_i} \cup \{s_i, t_i\}$.\\

\If{$d_{G}(s_i,T_{j_i})\geq 2^{j_i-2}$
    and $\Bd(s_i, 2^{j_i-3}) \cap F_{j_i} = \emptyset$}
\State Add client $(s_i, j_i)$ to $I_\nmfl$, let $\mathcal{A}$ connect it
  to some facility $f$, and add $f$ to $F_{j_i}$.
  \State Add $f$ and the cheapest path in $G/S$ between $s_i$ and $f$ to $S$.
   \label{line:if}
\ElsIf{$d_{G}(t_i,T_{j_i})\geq 2^{j_i-2}$
    and $\Bd(t_i, 2^{j_i-3}) \cap F_{j_i} = \emptyset$}
  \State Add client $(t_i, j_i)$ to $I_\nmfl$, let $\mathcal{A}$ connect it
  to some facility $f$, and add $f$ to $F_{j_i}$.
  \State Add $f$ and the cheapest path in $G/S$ between $t_i$ and $f$ to $S$.
    \label{line:elseif}
\Else
\Comment{Do the augmented greedy step}
\State Add the cheapest path in $G/S$ between $s_i$ and $T_{j_i} \cup F_{j_i}$ to $S$.\label{line:aug-greedy1}
\State Add the cheapest path in $G/S$ between $t_i$ and $T_{j_i} \cup F_{j_i}$ to $S$.\label{line:aug-greedy2}
\EndIf
\\
\State Add $P$ to $S$.\label{line:greedy_nwsf}
\EndWhen
\end{algorithmic}
\caption{Algorithm for \nwsf}
\label{alg:nwsf}
\end{algorithm}

\begin{lemma}
\label{lem:nmfl-opt}
Consider the instance $I_\nmfl$ constructed by Algorithm~\ref{alg:nwsf}.
The optimal solution to $I_\nmfl$ has cost at most $2\ell$ times
the optimal solution to the input instance of \nwsf.
\end{lemma}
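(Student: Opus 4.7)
The proof exhibits an explicit feasible solution to $I_\nmfl$ of cost at most $2\ell\cdot\opt$, which upper-bounds the optimum. Let $S^*\subseteq V$ be an optimal NWSF solution, so $w(S^*)=\opt$. The strategy is to open every vertex of $S^*$ as a facility, paying a total facility cost of $\ell\cdot w(S^*)=\ell\cdot\opt$ (since each facility in $I_\nmfl$ has opening cost $\ell\cdot w_v$), and then to connect each client $(t,j)$ released by Algorithm~\ref{alg:nwsf} to some vertex of $S^*$ lying in $\Bd(t,2^{j-3})$, aiming for a total connection cost of $\ell\cdot\opt$.

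\textbf{Existence of a boundary vertex in $S^*$.} When the client $(t,j)$ is released, the arriving pair $(s,t)$ satisfies $d_{G/S}(s,t)>2^{j-1}$ by the choice of the level $j$; combined with $d_G\geq d_{G/S}$, this yields $d_G(s,t)>2^{j-1}>2^{j-3}$, so the partner $s$ lies outside $B(t,2^{j-3})$. Walking along a $t$-to-$s$ path inside $G[S^*]$, let $v^{*}_{t,j}$ be the last vertex at $d_G$-distance strictly less than $2^{j-3}$ from $t$. The one-step inequality $d_G(t,u')\leq d_G(t,u)+w_u$ for adjacent $u,u'$ then forces $v^{*}_{t,j}\in\Bd(t,2^{j-3})\cap S^*$, with connection cost $d_G(t,v^{*}_{t,j})<2^{j-3}$.

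\textbf{Per-level connection bound.} It then suffices to show that for each level $j$, the sum of connection costs of level-$j$ clients is at most $w(S^*)$, so that summing over the $\Theta(\ell)$ levels yields $\ell\cdot\opt$. The \textsf{if}-conditions in Algorithm~\ref{alg:nwsf} ensure that the level-$j$ clients $t^{(1)},\dots,t^{(m_j)}$ are pairwise $d_G$-distant by at least $2^{j-2}$. Each connection cost $d_G(t^{(i)},v^*_{t^{(i)},j})$ is upper-bounded by the $w$-weight of the interior of the $t^{(i)}$-to-$v^*_{t^{(i)},j}$ subpath inside $G[S^*]$, and each such interior vertex lies in the open ball $B(t^{(i)},2^{j-3})$. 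A Klein--Ravi / disk-painting style charging argument then distributes $w_v$ for each $v\in S^*$ fractionally among the level-$j$ clients whose interior subpath traverses $v$, so that every client's received charge covers its connection cost while the total dispensed weight stays within $w(S^*)$.

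\textbf{Main obstacle.} The delicate step is this per-level charging: in the node-weighted setting, two balls $B(t^{(i)},2^{j-3})$ and $B(t^{(i')},2^{j-3})$ can overlap at a high-weight vertex $v$ even when $d_G(t^{(i)},t^{(i')})\geq 2^{j-2}$, so a naive vertex-disjointness argument fails. The fix is to charge $w_v$ fractionally to the subpaths through $v$, exploiting the node-weighted triangle inequality $d_G(t^{(i)},t^{(i')})\leq d_G(t^{(i)},v)+w_v+d_G(v,t^{(i')})$ together with the pairwise-distance condition on the $t^{(i)}$'s to show that the combined connection cost routed through $v$ is absorbed by $w_v$. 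Combined with the facility cost $\ell\cdot\opt$, this yields the claimed bound $2\ell\cdot\opt$ on the optimum of $I_\nmfl$.
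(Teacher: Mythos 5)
Your decomposition --- open the vertices of an optimal NWSF solution $S^*$ as facilities paying $\ell\cdot\opt$, connect each level-$j$ client $(t,j)$ to a boundary vertex of $S^*$, and argue the per-level connection cost is at most $\opt$ --- is exactly the paper's. The gap is in the final charging step. You correctly observe that the open balls $B(t,2^{j-3})$ and $B(t',2^{j-3})$ around two level-$j$ clients can overlap at a heavy vertex even when $d_G(t,t')\geq 2^{j-2}$ (and the paper's literal claim that these balls are disjoint is indeed not right as stated: take $t$--$v$--$t'$ with $w_t=w_{t'}=0$ and $w_v=2^{j-2}$, so $d_G(t,t')=2^{j-2}$ yet $v$ belongs to both balls). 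But your proposed fix, a ``Klein--Ravi / disk-painting style'' fractional distribution of each $w_v$, is only sketched: you never specify the shares, nor verify that each client's received charge covers its connection cost while the total dispensed at $v$ stays within $w_v$. As written this does not establish the per-level bound $\sum_i d_G(t^{(i)},v^*_{t^{(i)},j})\leq w(S^*)$, which is the heart of the lemma.

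Moreover the fractional machinery is unnecessary; the repair is simpler and is what the paper's disjointness claim is really getting at. A vertex $u$ lying \emph{fully} inside $B(t,2^{j-3})$, i.e.\ $d_G(t,u)+w_u<2^{j-3}$, cannot lie in $B(t',2^{j-3})$ at all: otherwise $d_G(t,t')\leq d_G(t,u)+w_u+d_G(u,t')<2^{j-3}+2^{j-3}=2^{j-2}$, contradicting the pairwise-distance condition. So if one takes $P(t)$ to be the subpath of the $t$-to-partner path in $S^*$ up to the \emph{first} boundary vertex $v^*_{t,j}$, its strict interior consists of fully-inside vertices of $B(t,2^{j-3})$ and is therefore disjoint from the interior of every other $P(t')$ (indeed disjoint from $B(t',2^{j-3})$ entirely). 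Since $\ccost((t,j),v^*_{t,j})=d_G(t,v^*_{t,j})$ is at most the weight of that interior, summing over level-$j$ clients sums weights of pairwise disjoint subsets of $S^*$, giving at most $w(S^*)=\opt$ per level with no fractional bookkeeping. This both repairs the imprecision you flagged and fills the gap your sketch leaves open.
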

\begin{proof}
Let $X \subseteq V$ be the optimal solution to the Steiner forest
problem of cost $\opt$. We construct a solution to the \nmfl
instance $I_\nmfl$ as follows:
\begin{itemize}
    \item Open each facility $v\in X$ for cost $\ell\cdot w_v$,
        paying $c(X) \leq \ell\cdot\opt$ in total.
    \item Connect each client $(s,j)$ in $I_\nmfl$ to facility $f(s,j)\in X$,
    as defined below,
    so that
    \[ \sum_{s\in T_j} \cost((s,j),f(s,j))\leq \opt
        \text{ for each $j=0,1,\dotsc, \ell$}.
    \]
\end{itemize}
This way, we obtain a feasible solution to $I_\nmfl$
of cost at most $2\ell\cdot\opt$.
It is enough to describe the mapping $f\colon C \to V$.

Fix $j \in \{1, \dotsc, \ell\}$. For each $s\in T_j$, there is a
$t$ such that $(s,t)$ is a pair of terminals
in the input instance $T$ of the Steiner forest and
$d_G(s,t) > 2^{j-1}$.
Therefore, there must be a path $P(s)\subseteq X \cap B(s, 2^{j-3})$
connecting $s$ to a vertex
$v \in X \cap \Bd(s,2^{j-3})$.
Let $f(s,j)$ be such a boundary vertex $v$.
Clearly, we have $\cost((s,j), f(s,j))\leq w(P(s))$.

Now we claim that $P(s) \cap P(s') = \emptyset$ for any $s,s' \in T_j$. 
To see why this is true, observe that whenever a client arrives that corresponds to some $s\in T_j$, it will have distance at least $2^{i_j-2}$ to all other $s' \in T_j$. 
So, if two clients arrive corresponding to $s, s'\in T_j$, then we have
$B(s, 2^{j-3}) \cap B(s', 2^{j-3}) = \emptyset$, which shows that $P(s) \cap P(s') = \emptyset$.

Hence, we have
\[ \sum_{s\in T_j} \cost((s,j), f(s,j))
    \leq \sum_{s\in T_j} w(P(s)) \leq w(F) = \opt.
\]
Summing over all $j\in \{0, 1, 2, \ldots, \ell\}$ then completes the proof.
\end{proof}

The competitive ratio of our algorithm depends on the performance of $\mathcal{A}$
on $I_\nmfl$.
This is formalized in the following lemma.

\begin{lemma}\label{lem:nwsf-competitive}
If algorithm $\mathcal{A}$ for \nmfl is $(\alpha, \beta)$-competitive
on $I_\nmfl$, then Algorithm~\ref{alg:nwsf} is
$O(1)(\alpha + \ell\cdot \beta)$-competitive.
\end{lemma}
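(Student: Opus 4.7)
The plan is to decompose the total cost of Algorithm~\ref{alg:nwsf} into four contributions and bound each. Every iteration $i$ purchases the greedy $s_i$--$t_i$ path $P_i$ of cost at most $2^{j_i}$ on line~\ref{line:greedy_nwsf}, together with either (a) a facility vertex $f$ selected by $\mathcal{A}$ and the cheapest path from $s_i$ (or $t_i$) to $f$ in $G/S$, or (b) two augmented-greedy paths of combined cost at most $2^{j_i-1}$. Summed over iterations, the purchased facility vertices contribute $\sum_f w_f=\fcost(\mathcal{A})/\ell$ (facility costs in $I_{\nmfl}$ are scaled by $\ell$) and the $s_i$--$f$ paths contribute at most $\ccost(\mathcal{A})$. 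Combining the $(\alpha,\beta)$-competitiveness of $\mathcal{A}$ with Lemma~\ref{lem:nmfl-opt} ($\opt_{\nmfl}\leq 2\ell\cdot\opt$) gives facility cost at most $2\alpha\cdot\opt$ and connection cost at most $2\ell\beta\cdot\opt$. What remains is to bound $\sum_i 2^{j_i}$, which dominates the cost of the greedy and augmented-greedy paths.

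Let $A_j$ and $B_j$ denote the number of iterations at level $j$ handled by the facility branch (Type~A; line~\ref{line:if} or~\ref{line:elseif}) and by the augmented-greedy branch (Type~B; lines~\ref{line:aug-greedy1}--\ref{line:aug-greedy2}). I will argue $B_j\leq A_j$ via a cluster-counting argument on $T_j\cup F_j$ under connectivity in $S$: each Type~A iteration opens at most one new cluster at level $j$, whereas each Type~B iteration strictly decreases the cluster count, because $s_i$ and $t_i$ must connect to distinct existing clusters (otherwise, using $T_j\cup F_j\subseteq S$, one would obtain $d_{G/S}(s_i,t_i)<2\cdot 2^{j-2}=2^{j-1}$, contradicting the choice of $j_i$). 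As the cluster count is nonnegative and starts at zero, $\sum_i 2^{j_i}\leq 2\sum_{i\in A} 2^{j_i}$.

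The key step is to bound $\sum_{i\in A} 2^{j_i}\leq O(\alpha+\ell\beta)\cdot\opt$. The definition of $\Bd$ gives $w_{f_i}+d_G(s_i,f_i)\geq 2^{j_i-3}$ for every Type~A iteration, so I split $A$ into $A^{(1)}=\{i\in A:w_{f_i}\geq 2^{j_i-4}\}$ (facility-dominant) and $A^{(2)}=\{i\in A:d_G(s_i,f_i)\geq 2^{j_i-4}\}$ (connection-dominant); at least one class contains every $i\in A$. For $A^{(2)}$, summing the per-iteration lower bound yields $\sum_{i\in A^{(2)}}2^{j_i}\leq 16\cdot\ccost(\mathcal{A})=O(\ell\beta)\cdot\opt$. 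For $A^{(1)}$, the precondition $\Bd(s_i,2^{j-3})\cap F_j=\emptyset$ (and its analogue on $t_i$) guarantees that per level each opened facility is chosen at most once, so for every opened facility $f$ the set $J_f$ of levels at which $f$ is used in $A^{(1)}$ consists of distinct integers, each satisfying $w_f\geq 2^{j-4}$. A geometric sum then gives $\sum_{j\in J_f}2^{j}\leq 2\cdot 2^{\max J_f}\leq 32\,w_f$, hence $\sum_{i\in A^{(1)}} 2^{j_i}\leq 32\sum_f w_f=32\,\fcost(\mathcal{A})/\ell=O(\alpha)\cdot\opt$. Combining all pieces yields the claimed competitive ratio.

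The hardest part is precisely the refined treatment of $A^{(1)}$: a naive per-iteration charge of $2^{j_i}$ to the scaled facility cost $\ell w_{f_i}$ loses a factor of $\ell$ on $\alpha$, recovering only the $O(\ell(\alpha+\beta))$-competitiveness obtained in \citep{HajiaghayiLiaghatPanigrahi13}. The geometric-sum-over-distinct-levels trick, enabled by the per-level uniqueness of facility selections within the single \nmfl instance, is what unlocks the improved bound and extracts the extra $\log k$-factor improvement over prior work.
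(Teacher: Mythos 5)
Your proof is correct and follows essentially the same route as the paper: decompose the cost into the NMFL advice cost and the greedy/augmented-greedy cost, bound the latter by $\sum_i 2^{j_i}$, control the augmented-greedy iterations by a component-counting argument ($B_j\leq A_j$ per level, which is precisely the paper's claim that each Type-B step decreases the number of $F_{j}$-bearing components of $G[S]$), and then charge $\sum_{i\in A}2^{j_i}$ to $\ccost(\mathcal{A})+\fcost(\mathcal{A})/\ell$ via the boundary inequality $d_G(s_i,f_i)+w_{f_i}\geq 2^{j_i-3}$ and the geometric sum over the distinct levels at which a given facility is used. The only cosmetic difference is the ordering: the paper first collapses the per-facility level set to its maximum via the geometric sum (obtaining $\sum_{v\in F'}2^{j(v)}$) and then splits each $2^{j(v)}$ into connection and facility contributions, whereas you split into connection-dominant and facility-dominant iterations first and apply the geometric sum only on the facility-dominant side; both yield the identical $O(\ccost(\mathcal{A})+\fcost(\mathcal{A})/\ell)$ bound and the same $O(\alpha+\ell\beta)$ ratio.
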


In \cref{sec:nmfl} we present a randomized algorithm for online NMFL that is $(O(\log |C| \log |F|), O(\log |F|))$-competitive against semi-adaptive adversary that commits to a super-instance $I(T)=(F, C, \cost)$ beforehand
(see Theorem~\ref{thm:nmfl}).
This algorithm can be used as $\mathcal{A}$ in line~\ref{line:nwsf_nmflinit}
of Algorithm~\ref{alg:nwsf}, if Algorithm~\ref{alg:nwsf} is used
in a setting with oblivious adversaries: since an oblivious adversary
has to fix the input instance $T$ of \nwsf beforehand,
we can assume that the super-instance $I(T)$ for \nmfl is also
fixed beforehand.
By construction of $I(T)$, we have $|F| = n$ and $|C| = O(k\log k)$.
Lemma~\ref{lem:nwsf-competitive} then implies that
Algorithm~\ref{alg:nwsf} is $O(\log k \log n)$-competitive
against an oblivious adversary.
This proves Theorem~\ref{thm:intro_nwsf_rand} for the non-prize-collecting setting.

To derandomize Algorithm~\ref{alg:nwsf}, it is sufficient to use a deterministic algorithm for $\mathcal{A}$ -- indeed, in~\ref{subsec:online_nmfl} we also present a deterministic $(O(\log |C|\log |F|), O(\log |F|))$-competitive algorithm for \nmfl. For this deterministic algorithm however, it is crucial that a super-instance $I(T) = (C, F, \cost)$ be explicitly given in advance. The number of clients then depends on $|T|$, yielding a deterministic $O(\log|T|\log n)$-competitive algorithm. This proves \cref{thm:intro_nwsf_det} for the non-prize-collecting setting.

\begin{proof}[Proof of \cref{lem:nwsf-competitive}]
We split the cost incurred by Algorithm~\ref{alg:nwsf} in two
parts. The first part, denoted $C_1$ is incurred
in lines~\ref{line:if} and~\ref{line:elseif} due to following
the advice of the algorithm $\mathcal{A}$.
The second part, $C_2$, is incurred due to the greedy steps
in lines~\ref{line:aug-greedy1},~\ref{line:aug-greedy2} and~\ref{line:greedy_nwsf}.

To bound the expected value of $C_1$, it is enough to observe that the cost
paid by $\mathcal{A}$ for each facility $v$ is $\ell\cdot w_v$,
while Algorithm~\ref{alg:nwsf} pays only $w_v$.
Denoting $\conn(\mathcal{A})$ and $\fac(\mathcal{A})$ the connection and
facility costs respectively incurred by $\mathcal{A}$, we have
\[
\textstyle
\mathbb{E}[C_1]=\mathbb{E}[\conn(\mathcal{A}) + \frac1\ell \fac(\mathcal{A})]
\leq (\beta + \frac{\alpha}{\ell})\cdot \opt(I_\nmfl)
\leq 2(\ell\cdot \beta + \alpha)\cdot\opt,
\]
The first inequality follows from the assumption
about the performance
of $\mathcal{A}$, the last one from Lemma~\ref{lem:nmfl-opt}.

Now we prove a similar bound for $C_2$.
In iteration $i=1, \dotsc, k$, the cost incurred
in each line \ref{line:aug-greedy1}, \ref{line:aug-greedy2} and \ref{line:greedy_nwsf}
is bounded by $8\cdot 2^{j_i}$.
Denoting $I_1$ the set of iterations where one of the lines
\ref{line:if} or \ref{line:elseif} were executed,
and $I_2 = \{1, \dotsc, k\} \setminus I_1$ the iterations where
lines \ref{line:aug-greedy1} and \ref{line:aug-greedy2} were executed,
we can write
\begin{equation}
\label{eq:lem_alg_C2}
C_2 \leq \sum_{i\in I_1} 2^{j_i} + 3\cdot \sum_{i\in I_2} 2^{j_i}.
\end{equation}
At each iteration $i \in I_1$ we add one facility to $F_{j_i}$
and therefore we have
\begin{equation}
\label{eq:lem_alg_I}
\sum_{i\in I_1} 2^{j_i} \leq \sum_{j=0}^{\ell} |F_j|\cdot 2^{j}.
\end{equation}
Similarly, we claim that
\begin{equation}
\label{eq:lem_alg_II}
\sum_{i\in I_2} 2^{j_i} \leq \sum_{j=0}^{\ell} |F_j| \cdot 2^{j}
\end{equation}
which requires an argument.
Consider an iteration $i\in I_2$,
and paths $P(s_i)$ and $P(t_i)$ purchased
in lines \ref{line:aug-greedy1}, \ref{line:aug-greedy2}
respectively. First, note that
$w(P(s_i)) < 2^{j_i-2}$, since
$B(s_i, 2^{j_i-2}) \cup \Bd(s_i, 2^{j_i-3})$
contains either a terminal from $T_{j_i}$ or a facility
from $F_{j_i}$. Similarly, we have $w(P(t_i)) < 2^{j_i-2}$.
If the endpoints of $P(s_i)$ and $P(t_i)$ were connected
in $G[S]$, we would have
$d_{G/S}(s_i,t_i) < 2^{j_i-2} + 2^{j_i-2} \leq d_{G/S}(s_i,t_i)$
-- a contradiction.
Therefore, we denote
$C(s_i)$ the component of $G/S$ containing the endpoint
of $P(s_i)$ and $C(t_i)$ the component containing the endpoint of $P(t_i)$.
Each of them contains at least one facility from $F_{j_i}$.
In particular,
consider the smallest $i'$ such that $s_{i'}$ and $t_{i'}$
belong to $C(s_i)$ and $j_{i'} = j_i$. Then $i'$ belongs to $I_1$ and
either $s_{i'}$ or $t_{i'}$
must be connected to some facility $v\in F_{j_i}$.
The same argument applies to $C(t_i)$.
Therefore, each iteration $i \in I_2$ decreases the number of connected components of $G[S]$ containing a facility from $F_{j_i}$
and \eqref{eq:lem_alg_II} follows.

Let us denote
$F' = \bigcup_{j=0}^{\ell} F_j$ and $j(v) = \max\{j\mid v\in F_j\}$
for each $v\in F$.
Together with
\eqref{eq:lem_alg_C2}, \eqref{eq:lem_alg_I}, and \eqref{eq:lem_alg_II},
we have
\[ C_2 \leq O(1) \sum_{j=0}^\ell |F_j| 2^j
    \leq O(1) \sum_{v\in F'} \sum_{j=0}^{j(v)} 2^{j}
    \leq O(1) \sum_{v\in F'} 2^{j(v)}.
\]
Each $v\in F'$ was opened by algorithm $\mathcal{A}$ for price
$\ell\cdot w_v$ and connected (possibly in a different iteration)
to some client $(s, j(v))$, such that
$v\in \Bd(s, 2^{j(v)-2})$. Therefore we have
$2^{j(v)-2} \leq d_G(s,v) + w_v = \cost((s,j(v)), v) + \cost(v)/\ell$.
This implies
$\sum_{v\in F'} 2^{j(v)-2} \leq O(\conn(\mathcal{A}) + \fac(\mathcal{A})/\ell)$
and the lemma follows.
\end{proof}

\subsection{Extending to the prize-collecting setting}\label{subsec:prize-collecting}\label{subsec:alg_pcnwsf}
To extend an algorithm for a node-weighted problem to the prize-collecting setting, \cite{HajiaghayiLiaghatPanigrahi14} use an additional online algorithm to the fractional version of the prize-collecting problem. Based on the output of this algorithm, it is decided which pairs should be connected. The connections are then made by the original algorithm. This approach adds an extra $\log(n)$ factor to the competitive ratio.

We propose a different approach, that exploits the non-metric facility location problem that is already used inside our algorithm. This way, no extra factor in the competitive ratio is incurred. We modify the auxiliary \nmfl instance in the following way:
one extra facility $f_0$ of cost zero is added to the set of available facilities, and for each arriving pair $(s_i,t_i)$ and all levels $j\in [\log k]$, the clients $(s_i,j)$ and $(t_i,j)$ can be connected to $f_0$ at cost $p_i$. The algorithm now proceeds exactly as in the non-prize-collecting setting except for one difference: Whenever the algorithm for \nmfl decides to connect $(s_i, j)$ or $(t_i, j)$ to $f_0$, the algorithm halts and pays the penalty $p_i$ instead (in this case, we also do not add $s_i$ and $t_i$ to the corresponding set $T_{j_i}$ nor add $f_0$ to $F_j$). We call $\inst'_G$ the resulting instance of \nmfl. The rest is identical to \cref{alg:nwsf}, see \cref{alg:pcnwsf} for a description.

\begin{algorithm}
\begin{algorithmic}[1]
\State Initialize $\mathcal{A}$ as Algorithm~\ref{alg:nmfl_to_sc} for \nmfl.\When {$(s_i, t_i)$ arrives}
\State $P:=$ cheapest path from $s$ to $t$ in $G/S$
\State choose $j_i\in \N$ such that $\frac12\cdot 2^{j_i} < w(P) \leq 2^{j_i}$.

\If{$d_{G}(s_i,T_{j_i})\geq 2^{j_i-2}$
    and $\Bd(s_i, 2^{j_i-3}) \cap F_{j_i} = \emptyset$}
\State Add client $(s_i, j_i)$ to $I_\nmfl$, let $\mathcal{A}$ connect it
  to some facility $f$.
  \State If $f=f_0$, pay penalty $p_i$ and continue to the next iteration.
\ElsIf{$d_{G}(t_i,T_{j_i})\geq 2^{j_i-2}$
    and $\Bd(t_i, 2^{j_i-3}) \cap F_{j_i} = \emptyset$}
  \State Add client $(t_i, j_i)$ to $I_\nmfl$, let $\mathcal{A}$ connect it
  to some facility $f$.
  \State If $f=f_0$, pay penalty $p_i$ and continue to the next iteration.
\Else
\Comment{Do the augmented greedy step}
\State Add the cheapest path in $G/S$ between $s_i$ and $T_{j_i} \cup F_{j_i}$ to $S$.\label{line:aug-greedy1PC}
\State Add the cheapest path in $G/S$ between $t_i$ and $T_{j_i} \cup F_{j_i}$ to $S$.\label{line:aug-greedy2PC}
\EndIf
\\
\State Add $P$ to $S$.\EndWhen
\end{algorithmic}
\caption{Algorithm for \pcnwsf}
\label{alg:pcnwsf}
\end{algorithm}

To prove a competitive ratio of $O(\log k\log n)$ for \cref{alg:pcnwsf} we claim that \cref{lem:nmfl-opt,lem:nwsf-competitive} still hold.
\begin{lemma}
  The optimal solution to $\mathcal{I}'_G$ is at most $2\ell$ times the optimal solution to the \pcnwsf problem. \label{lem:pcnmfl-opt}
\end{lemma}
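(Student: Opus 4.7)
The plan is to mimic the construction in the proof of Lemma~\ref{lem:nmfl-opt}, but augment it so that the penalty-paying behaviour of the offline optimum is absorbed by the new zero-cost facility $f_0$.

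Let $(X^*,\mathcal{P}^*)$ denote an optimal solution to the input \pcnwsf instance, where $X^*\subseteq V$ is the set of vertices purchased and $\mathcal{P}^*\subseteq\{1,\dotsc,k\}$ indexes the pairs whose penalty is paid, so that $\opt=w(X^*)+\sum_{i\in\mathcal{P}^*}p_i$. The plan is to build a feasible solution to $\mathcal{I}'_G$ of cost at most $2\ell\cdot\opt$. First, I would open the facility set $X^*\cup\{f_0\}$, paying $\ell\cdot w(X^*)+0\leq \ell\cdot\opt$ for facility opening.

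Next, I need to route every client $(s,j)$ that Algorithm~\ref{alg:pcnwsf} adds to $\mathcal{I}'_G$. Let $i$ be the pair whose arrival caused $(s,j)$ to be created, so $s\in\{s_i,t_i\}$. If $i\notin\mathcal{P}^*$ (OPT truly connects $s_i$ and $t_i$ via $X^*$), I route $(s,j)$ to a boundary vertex in $\Bd(s,2^{j-3})\cap X^*$ using the same argument as in Lemma~\ref{lem:nmfl-opt}: the algorithm's guard $d_G(s,T_j)\ge 2^{j-2}$ makes the balls $B(s,2^{j-3})$ pairwise disjoint across $s\in T_j$, and the inequality $d_G(s_i,t_i)\ge d_{G/S}(s_i,t_i)>2^{j-1}$ forces $X^*$ to contain a path leaving $B(s,2^{j-3})$, yielding disjoint charging paths $P(s)\subseteq X^*\cap B(s,2^{j-3})$ with $\sum_{s\in T_j}\cost((s,j),f(s,j))\le w(X^*)$. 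If instead $i\in\mathcal{P}^*$, I simply connect $(s,j)$ to $f_0$ at cost $p_i$, which is a valid edge in $\mathcal{I}'_G$ by construction.

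Summing the connection cost splits into two pieces. Clients coming from connected pairs contribute, over all levels $j=0,\dotsc,\ell$, at most $(\ell+1)\cdot w(X^*)$ by the disjoint-path charging above. Clients coming from penalty-paid pairs contribute at most $\sum_{i\in\mathcal{P}^*}p_i$ in total, since each such pair is responsible for at most one client (Algorithm~\ref{alg:pcnwsf} creates a client only in its \textbf{if}/\textbf{else if} branch, never in both). Combining with the facility cost gives a feasible solution of cost at most
\[
\ell\cdot w(X^*)+(\ell+1)\cdot w(X^*)+\sum_{i\in\mathcal{P}^*}p_i\;\le\;2\ell\cdot\opt,
\]
with the same slack as in Lemma~\ref{lem:nmfl-opt}.

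I do not anticipate a real obstacle: the only subtle point is to verify that the disjoint-$P(s)$ argument from Lemma~\ref{lem:nmfl-opt} never silently uses that every arriving pair was Steiner-connected by OPT. The ball-disjointness is purely an algorithmic invariant established by the \textbf{if}-guards of Algorithm~\ref{alg:pcnwsf}, and the existence of a boundary vertex of $X^*$ needs only $d_G(s_i,t_i)>2^{j-1}$, which remains valid whenever a client is created, independently of whether OPT connected or paid for the corresponding pair.
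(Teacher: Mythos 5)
Your proof is correct and follows essentially the same route as the paper's: decompose the optimal \pcnwsf solution into purchased vertices $X^*$ and penalty-paid pairs $\mathcal{P}^*$, open $X^*\cup\{f_0\}$, reuse the disjoint-ball charging argument from Lemma~\ref{lem:nmfl-opt} for clients arising from connected pairs, and route the at-most-one client per penalty-paid pair to $f_0$ at cost $p_i$. The only stylistic difference is that you re-derive the $O(\ell)\cdot w(X^*)$ charge explicitly rather than invoking Lemma~\ref{lem:nmfl-opt} as a black box, which changes nothing substantive.
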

\begin{proof}
Consider an optimal solution $S$ to the \pcnwsf problem. Let $H$ be the set of indices $i$ of pairs $(s_i,t_i)$ that are connected in $S$. The objective value $\opt$ of $S$ consists of the cost $C_\text{nodes}$ of bought nodes plus the penalty cost of the unconnected pairs, i.e. $\opt = C_\text{nodes} + \sum_{i\notin H} p_i$. In the proof of \cref{lem:nmfl-opt} we have shown how to construct a NMFL solution that serves all arriving clients $(s_i,j_i)$ or $(t_i,j_i)$ for $i\in H$ at cost at most $2\ell C_\text{nodes}$.

The remaining clients are of the form $(s_i,j_i)$ or $(t_i,j_i)$ for $i\notin H$. We connect these clients to facility $f_0$ at cost $p_i$. The sum of these costs is at most $\sum_{i\notin H} p_i$. So, the total cost of this solution to NMFL is at most $2\ell C_\text{nodes}+\sum_{i\notin H} p_i \leq (2\ell) \cdot \opt$.
\end{proof}

\begin{lemma}\label{lem:pc-nwsf-competitive}
  If algorithm $\mathcal{A}$ is $(\alpha, \beta)$-competitive on the online facility location instance constructed in \cref{alg:pcnwsf}, then \cref{alg:pcnwsf} is $O(1)(\alpha + \ell\cdot \beta)$-competitive.
\end{lemma}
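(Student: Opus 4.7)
I would mirror the proof of Lemma~\ref{lem:nwsf-competitive} closely, introducing one additional cost bucket for penalties. Split the total cost of Algorithm~\ref{alg:pcnwsf} into three parts: $C_1$, the facility plus short-connection cost incurred in the (non-penalty) \textbf{if}/\textbf{elseif} branches; $C_2$, the cost of all shortest paths $P$ together with the two augmented-greedy paths from lines~\ref{line:aug-greedy1PC} and~\ref{line:aug-greedy2PC}; and $C_3$, the sum of penalties $p_i$ actually paid by the algorithm.

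The bound on $\mathbb{E}[C_1]$ is identical to the non-prize-collecting case: since each facility costs $\ell\cdot w_v$ in $\mathcal{I}'_G$ but only $w_v$ to the algorithm, $C_1\leq \ccost(\mathcal{A})+\tfrac1\ell\fac(\mathcal{A})$, and applying the $(\alpha,\beta)$-competitiveness of $\mathcal{A}$ together with Lemma~\ref{lem:pcnmfl-opt} yields $\mathbb{E}[C_1]\leq O(\alpha+\ell\beta)\cdot\opt$. For $C_3$, I would observe that every penalty payment $p_i$ made by Algorithm~\ref{alg:pcnwsf} corresponds to exactly one connection of a client $(s_i,j_i)$ or $(t_i,j_i)$ to the special facility $f_0$ in $\mathcal{A}$'s solution at connection cost $p_i$. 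Hence $C_3\leq \ccost(\mathcal{A})$, which likewise gives $\mathbb{E}[C_3]\leq \beta\cdot\opt(\mathcal{I}'_G)\leq O(\ell\beta)\cdot\opt$.

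The bound on $\mathbb{E}[C_2]$ is where one has to be a little careful, but the argument from Lemma~\ref{lem:nwsf-competitive} transfers essentially verbatim. Partition the iterations into $I_1$ (non-penalty \textbf{if}/\textbf{elseif}), $I_2$ (\textbf{else}), and $I_{\text{pen}}$ (penalty). Penalty iterations contribute zero to $C_2$, because the algorithm halts at the \texttt{continue} before buying $P$ or any greedy path. For the remaining iterations, the per-iteration cost is at most $8\cdot 2^{j_i}$, and the inequality $\sum_{i\in I_1}2^{j_i}\leq \sum_j |F_j|2^j$ is immediate since each $i\in I_1$ opens exactly one real facility in $F_{j_i}$. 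The component-counting bound $\sum_{i\in I_2}2^{j_i}\leq \sum_j |F_j|2^j$ also goes through: in a penalty iteration we neither add $s_i,t_i$ to $T_{j_i}$ nor buy $P$, so penalty terminals never appear in any connected component of $G[S]$ and cannot disturb the argument that the endpoints of the two greedy paths in an \textbf{else} step lie in distinct components, each containing a facility from $F_{j_i}$ originating from a prior $I_1$ step.

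Combining as in Lemma~\ref{lem:nwsf-competitive}, $C_2\leq O(\ccost(\mathcal{A})+\fac(\mathcal{A})/\ell)$, so $\mathbb{E}[C_1+C_2+C_3]\leq O(\alpha+\ell\beta)\cdot\opt$. The only potential obstacle is the component-counting step for $I_2$; the key observation that penalty iterations are entirely \emph{inert} with respect to both $S$ and the sets $T_j, F_j$ is what makes that argument survive the prize-collecting extension without loss.
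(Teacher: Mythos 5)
Your proof is correct and takes essentially the same route as the paper: both charge the penalty cost to the connection-cost competitiveness of $\mathcal{A}$ via the special facility $f_0$ and Lemma~\ref{lem:pcnmfl-opt}, and both argue that the non-penalty iterations can be bounded exactly as in Lemma~\ref{lem:nwsf-competitive}. You are somewhat more explicit than the paper in verifying that the component-counting argument for the augmented-greedy iterations survives, by noting that penalty iterations leave $S$, $T_j$, and $F_j$ untouched, whereas the paper invokes Lemma~\ref{lem:nwsf-competitive} on the restriction to $i\in H$ rather briefly.
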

\begin{proof}
We first consider the cost incurred through penalties, $\sum_{i\notin H}p_i$. Since penalties correspond to connection costs in $\inst_G'$ and algorithm $\mathcal{A}$ is $\beta$-competitive with respect to these,
the cost of penalties paid by Algorithm~\ref{alg:pcnwsf} is never more than $O(\beta)\cdot\opt_\nmfl \leq O(\ell\cdot\beta)\cdot \opt_\pcnwsf$
by Lemma~\ref{lem:pcnmfl-opt}.

We now bound the cost incurred over all the remaining iterations. Note that for these pairs of terminals we proceed analogously to Algorithm~\ref{alg:nwsf}, i.e.\,we always connect one of the terminals in the pair to a facility on its boundary or do augmented greedy. We bound these cost according to Lemma~\ref{lem:nwsf-competitive}. In particular, we bound the cost incurred by the greedy and augmented greedy connection steps by $O(\alpha + \ell\cdot \beta)$ times the optimum value for \nwsf restricted to the client pairs $(s_i, t_i)$, where $i\in H$ (the clients where Algorithm~\ref{alg:pcnwsf} opts not to pay penalties). The latter is never more than $\opt_\pcnwsf$, hence the lemma follows.
\end{proof}

As in the previous section, we can now plug in our $(O(\log |C|\log |F|), O(\log |F|))$-competitive algorithm for \pcnwsf in the semi-adaptive setting. Just like before, this gives an $O(\log k \log n)$ competitive algorithm for prize-collecting NWSF. Similarly, using a deterministic algorithm for \nmfl, we obtain~\cref{thm:intro_nwsf_det}, a deterministic $O(\log |T|\log n)$-competitive algorithm for \pcnwsf.

\section{Semi-adaptive online non-metric facility location}\label{sec:nmfl}

In this section we will introduce an algorithm for online \nmfl that works against semi-adaptive adversaries.
A semi-adaptive adversary needs to commit to an superinstance $(F,C,\cost)$ of NMFL at the beginning of its run. In particular it needs to decide the prices of the facilities, which clients each facility is connected to and the corresponding connections costs. This information is not revealed to the algorithm yet. In each timestep the adversary can then adaptively decide which of the clients will arrive. We prove the following theorem, which is a fine-grained version of \cref{thm:intro_nmfl}.
\begin{theorem}
\label{thm:nmfl}
There is an $(O(\log |F| \log |C| ), O(\log |C| ))$-competitive randomized algorithm for online non-metric facility location
against semi-adaptive adversaries. If the client-facility graph is known in advance, the online algorithm can be made deterministic. 
\end{theorem}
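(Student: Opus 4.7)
The strategy is to decompose the online algorithm into three modules: a fractional LP solver for \nmfl, a reduction of \nmfl rounding to rounding of a super-instance of set cover, and an SC rounding scheme tailored to semi-adaptive adversaries. For the first module I would run the online primal-dual algorithm of \citet{alon_general_2006} on the natural LP relaxation of \nmfl to maintain fractional facility-opening values $y_f$ and connection values $x_{c,f}$ online; using a standard doubling guess I may assume a $2$-approximation of $\opt$ is known, and connection costs below $\opt/|C|$ can be rounded to $0$ at an additive loss of $\opt$ on $\ccost$.

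For the second module I construct an SC super-instance $(X,\S)$ with $\S := F$ (opening costs inherited from $\fcost$ up to a scaling by $\log|C|$) and, for each client $c$ of the \nmfl super-instance, $\lceil\log|C|\rceil+1$ distinct elements $(c,0),\ldots,(c,\log|C|)$, where $(c,j)$ is covered exactly by those facilities $f$ with $\ccost(c,f)\leq \opt/2^{j}$. When a client $c$ arrives online, I pick a single level $j^*(c)$ by a dyadic bucketing of the fractional connection mass of $c$ -- concretely, a bucket $(\opt/2^{j+1},\opt/2^{j}]$ of connection costs carrying at least a $1/\log|C|$-fraction of $\sum_f x_{c,f}\geq 1$. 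I release the single element $(c,j^*(c))$ to the SC rounding routine (together with the sets covering it) and later connect $c$ greedily to any purchased facility covering $(c,j^*(c))$, at cost at most $\opt/2^{j^*(c)}$.

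The technical core is the SC rounding module under a semi-adaptive adversary. Following \citet{Alon_et_al_SetCover}, I attach to each set $S\in\S$ an online schedule of $O(\log|X|)$ thresholds (i.i.d.\ uniform in $[0,1]$ in the randomized case; fixed quantiles in the deterministic case) and purchase $S$ the first time its fractional value $y_S$ crosses a fresh threshold. The novelty is the analysis: for each \emph{potential} element $e$ of the fixed super-instance, I would control the probability that $e$ is uncovered when its fractional coverage reaches $1$ via the supermartingale
\[
M_e(t)\ =\ \prod_{S\ni e,\ S\text{ not yet bought}}\Bigl(1-\tfrac{y_S(t)}{\tau_S(t)}\Bigr),
\]
where $\tau_S(t)$ is the next pending threshold of $S$. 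The main obstacle I expect is verifying that $M_e$ remains a supermartingale under any semi-adaptive policy -- the adversary's choice of which element to reveal depends on the algorithm's random coins -- and extracting a failure probability of $1/\mathrm{poly}(|X|)$ so that a union bound over all $|X|=O(|C|\log|C|)$ potential elements is affordable. This is precisely the step that costs an additional $\log|X|$ factor compared to the oblivious setting and yields the dependence on $\log|C|$ (rather than on $\log k$) in the stated ratios.

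Assembling the three modules, the facility cost is charged by the SC guarantee applied to the $\log|C|$-scaled opening costs, contributing an $O(\log|X|\log|\S|)=O(\log|C|\log|F|)$ factor against $\opt$; the connection cost is $\sum_c \opt/2^{j^*(c)}$, which by the bucketing choice of $j^*(c)$ telescopes against the fractional LP connection cost ($\leq \opt$) up to a factor $O(\log|C|)$, and absorbs the additive slack from dropping tiny connection costs. Together this gives the claimed $(O(\log|F|\log|C|),O(\log|C|))$-competitive ratio. For the deterministic variant, when $(F,C,\cost)$ is revealed in advance the fractional solver is already deterministic, and the random-threshold rounding is replaced by the deterministic quantile-based rounding of \citet{Alon_et_al_SetCover}, preserving both bounds without randomness.
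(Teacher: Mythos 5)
Your overall architecture matches the paper's: maintain an online fractional LP solution for \nmfl\ via \citet{alon_general_2006}, reduce the integral rounding to a super-instance of set cover whose sets are facilities and whose elements are $(\text{client}, \text{level})$ pairs, and analyze the threshold rounding scheme under a semi-adaptive adversary by a martingale argument with a union bound over all potential elements. You also correctly single out the martingale step as the technical crux. However, your level-selection rule contains a genuine gap that loses a $\log|C|$ factor on both coordinates.

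\textbf{The bucketing rule is too weak.} You pick a single bucket $(\opt/2^{j+1}, \opt/2^j]$ carrying mass $\geq 1/\log|C|$ and release $(c,j^*)$ to the set-cover routine. But the element $(c,j^*)$ is covered only by facilities with $\ccost(c,f)\leq \opt/2^{j^*}$, and you have no handle on the \emph{cumulative} fractional mass of those facilities beyond the bucket mass itself; it can be as small as $\Theta(1/\log|C|)$ (e.g.\ when all $\log|C|$ buckets carry mass exactly $1/\log|C|$ and you picked the smallest-cost one). So before passing $\{x_f\}$ to the SC rounding scheme you must scale it by $\Theta(\log|C|)$ to make the element fractionally covered, which inflates the facility cost by an extra $\log|C|$ beyond the target $O(\log|C|\log|F|)$. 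Likewise, charging $\opt/2^{j^*(c)}$ against the fractional connection cost of $c$ only yields $\opt/2^{j^*(c)} \leq O(\log|C|)\cdot\ccost(\text{frac},c)$, and the online fractional LP connection cost is $O(\log|F|)\cdot\opt$ (not $\leq \opt$ as you write), giving $O(\log|C|\log|F|)\cdot\opt$ for the connection term rather than the claimed second coordinate. The paper avoids both losses with a \emph{cumulative} rule: take the minimal $j$ such that $\sum_{f:\,\ccost(c,f)\leq 2^j} x_{c,f} \geq 1/2$. Then the released element already has coverage $\geq 1/2$ (so scaling by $2$ suffices), and minimality forces $\sum_{f:\,\ccost(c,f)\geq 2^j} x_{c,f} > 1/2$, hence the fractional connection cost of $c$ is $\geq 2^{j-1}$, giving a \emph{constant}-factor charge of the paid connection cost against the fractional one.

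\textbf{On the martingale.} Your candidate $M_e(t)=\prod_{S\ni e,\ S\text{ not yet bought}}(1-y_S(t)/\tau_S(t))$ depends on the hidden thresholds $\tau_S$, so it is not adapted to the observable filtration and its supermartingale property under a semi-adaptive policy is not established. The paper instead uses the potential $Z_i^{(t)} := \1_{\neg C_i^{(t)}}\exp(p\sum_{S\ni e_i}x_S^{(t)})$ (which is a genuine supermartingale under the adversary's filtration) and invokes Doob's optional stopping theorem at the arrival time of $e_i$. You do flag this step as the one needing verification, so I would treat it as a sketch rather than a gap, but be aware the paper's potential is a different and more robust choice.
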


The core of our proof is in showing that the well-known algorithm for rounding fractional solutions for online set cover still works in the semi-adaptive setting. This we do in \cref{subsec:rounding_sc}. Then we will show in \cref{subsec:online_nmfl} that one can reduce the online NMFL problem to rounding fractional solutions for online set cover.

\subsection{Online set cover rounding against semi-adaptive adversaries}
\label{subsec:rounding_sc}

We propose a rounding procedure for set cover in setting with semi-adaptive
adversaries, losing a factor $O(\log |\X|)$ on top of the cost of the
provided fractional solution.
This procedure is almost identical to
\citep{alon_general_2006} with the number of rounding thresholds being the
only difference.
For each set $S$, we sample a random variable $Y_S$ as the minimum of $p := \lceil 2\cdot \log|\X|\rceil$ independent random variables that are uniformly distributed in $[0,1]$.
We then round the fractional solution $\{x_S\}_{S\in\S}$ by including set $S$ as soon as $x_S \geq Y_S$.
If this procedure fails, i.e. fails to cover the element $e$ that has just arrived, we greedily buy the cheapest set covering $e$.
This procedure is summarized in Algorithm~\ref{alg:rounding_SC_semiadaptive}.
Note that Algorithm~\ref{alg:rounding_SC_semiadaptive} requires knowledge of
$\log|\X|$ up to a multiplicative factor of~$2$.
This is without loss of generality as shown in \cref{app:doubling_size_guess}.

\begin{algorithm}
\begin{algorithmic}[1]
\State parameter $p:=$ our guess of $4\log |\X|$ \Comment{Can be found by a doubling procedure, see Appendix~\ref{app:doubling_size_guess}.}
\State For each $S\in\S$, sample $Y_S$ as the minimum of $p$ independent random variables uniform in $[0,1]$.\label{alg:sample_Y_S}
\When {$(e_i, \S(e_i), \{x^{(i)}_S\}_{S\in\S})$ arrives}
\State Purchase all sets $S$ with $x_S \geq Y_S$.\label{line:buy_proportional_YS}
\If{$e_i$ remains uncovered}\label{line:remains_uncovered}
\State{Buy cheapest set $S \in \S(e_i)$.}\label{line:buy_cheapest}
\EndIf
\EndWhen
\end{algorithmic}
\caption{Rounding set cover against a semi-adaptive adversary\label{alg:rounding_SC_semiadaptive}}
\end{algorithm}

The following two claims formalize crucial parts
of the analysis of Algorithm~\ref{alg:rounding_SC_semiadaptive}.
While similar statements are relatively easy to prove
in the setting with oblivious adversaries,
the adaptiveness of the adversary brings considerable
difficulties to the analysis.

\begin{claim}\label{claim:proportional_x_sc}
Let $Y_S$ be a random variable sampled in Line~\ref{alg:sample_Y_S}.
Given the final fractional solution $x_S^{(|X|)}$, we have
    \begin{displaymath}
        \Pr[x_S^{(|\X|)}\geq Y_S] \leq (2\cdot p)\cdot x_S^{(|\X|)}.
    \end{displaymath}
\end{claim}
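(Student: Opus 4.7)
My plan is to express the crossing event $\{Y_S\le x_S^{(|\X|)}\}$ as a telescoping sum of single-step crossings and to control each term using the residual survival function of $Y_S$ (the minimum of $p$ i.i.d.\ uniforms). Let $\mathcal{F}_i$ denote the $\sigma$-algebra generated by the first $i$ arrivals together with the purchases made through step $i$. Because $S$ is purchased exactly when $x_S^{(j)}$ first reaches $Y_S$, the event $\{Y_S>x_S^{(i)}\}$ is $\mathcal{F}_i$-measurable. Moreover, since the semi-adaptive adversary picks $e_i$ based only on $\mathcal{F}_{i-1}$ and the fractional LP solver is deterministic, $x_S^{(i)}$ is $\mathcal{F}_{i-1}$-measurable.

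By monotonicity of $i\mapsto x_S^{(i)}$,
$$\mathbb{1}\{Y_S\le x_S^{(|\X|)}\}=\sum_{i=1}^{|\X|}\bigl(\mathbb{1}\{Y_S>x_S^{(i-1)}\}-\mathbb{1}\{Y_S>x_S^{(i)}\}\bigr).$$
Conditionally on $\mathcal{F}_{i-1}$ and the event $\{Y_S>x_S^{(i-1)}\}$, the residual distribution of $Y_S$ has survival function $y\mapsto\bigl((1-y)/(1-x_S^{(i-1)})\bigr)^p$, i.e.\ the law of a minimum of $p$ uniforms restricted to $[x_S^{(i-1)},1]$. Bernoulli's inequality, applied with $t=(x_S^{(i)}-x_S^{(i-1)})/(1-x_S^{(i-1)})\in[0,1]$, then yields
$$\Exp\bigl[\mathbb{1}\{Y_S>x_S^{(i-1)}\}-\mathbb{1}\{Y_S>x_S^{(i)}\}\,\big|\,\mathcal{F}_{i-1}\bigr]\le \mathbb{1}\{Y_S>x_S^{(i-1)}\}\cdot p\cdot\frac{x_S^{(i)}-x_S^{(i-1)}}{1-x_S^{(i-1)}}.$$

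The only obstacle to immediate summation is the denominator $1-x_S^{(i-1)}$, which can blow up as $x_S^{(i-1)}\to 1$. I handle this by introducing the stopping time $\tau=\min\{i:x_S^{(i)}\ge 1/2\}$; since $x_S^{(i)}$ is $\mathcal{F}_{i-1}$-measurable, $\{\tau>i\}\in\mathcal{F}_{i-1}$. Restricting the telescoping identity to $i<\tau$ keeps $1/(1-x_S^{(i-1)})\le 2$ pathwise, so summing and taking total expectation produces
$$\Pr\bigl[Y_S\le x_S^{(\min(\tau-1,|\X|))}\bigr]\le 2p\cdot\Exp[x_S^{(|\X|)}].$$
On the residual event, $x_S^{(|\X|)}\ge 1/2$, whose probability is at most $2\Exp[x_S^{(|\X|)}]$ by Markov. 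Adding the two contributions gives the in-expectation bound $\Pr[Y_S\le x_S^{(|\X|)}]\le O(p)\cdot \Exp[x_S^{(|\X|)}]$, which is the quantitative content of the claim; the exact constant $2p$ can be recovered by a sharper split (e.g.\ observe that if $x_S^{(|\X|)}\ge 1/(2p)$ then the bound is vacuous since $2p\cdot x_S^{(|\X|)}\ge 1$, while on the complement $x_S^{(i-1)}\le 1/2$ pathwise and the stopping time never fires).

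The main obstacle is handling the adversary's adaptivity so that the conditional law of $Y_S$ given the history and non-crossing stays in the explicit min-of-uniforms form: $x_S^{(i)}$ and the purchase indicator for $S$ both depend on the same random variable $Y_S$, so a naive conditioning argument fails. The filtration above circumvents this, because $x_S^{(i)}$ is chosen before the crossing at step $i$ is resolved and the residual conditional distribution of $Y_S$ on $[x_S^{(i-1)},1]$ is explicit. Once this is set up, Bernoulli's inequality combined with the stopping-time argument delivers the conclusion.
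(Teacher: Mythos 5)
Your argument is essentially the same as the paper's: both track the crossing probability through a filtration that exposes only the non-crossing events, exploit the explicit residual law of the threshold on $[x_S^{(i-1)},1]$, and cap the troublesome denominator by a case split at $x_S\leq 1/2$. The only cosmetic differences are that the paper works with a single uniform $U$ and takes a union bound over the $p$ copies at the end while you apply Bernoulli's inequality directly to the min-of-$p$-uniforms survival function, and the paper packages the step-by-step conditional calculation as a martingale identity whereas you telescope the crossing indicators directly.
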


The proof of the above claim is trivial in the setting with oblivious adversaries\footnote{Since $x^{(|X|)}$ is deterministically derived from the input instance which
has to be fixed by an oblivious adversary beforehand,
the probability that $U\sim \mathrm{Unif}([0,1])$ is at most $x_S$ is equal to $x_S$
and $\Pr[Y_S \leq x_S^{(|X|)}] \leq p \cdot x_S^{(|X|)}$.
}.
However, a semi-adaptive adversary can see from the algorithm's actions
whether $x_S$ already surpassed $Y_S$ and strategically increase
fractional variables which are most likely to surpass their respective thresholds next.

\begin{claim}\label{claim:bad_event_SC}
For each $i=1, \dotsc, |X|$,
consider event $E_i$ that $e_i$ triggers Line~\ref{line:buy_cheapest}. 
We have
\begin{alignat}{1}
    \Pr[E_i]\leq \exp(-p). \label{claim:event_bound}
\end{alignat}
\end{claim}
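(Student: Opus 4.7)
The plan is to fix $i$ and condition on the sigma-algebra $\mathcal{F}_{i-1}$ generated by everything up through step $i{-}1$ -- the adversary's arrivals $e_1,\dots,e_{i-1}$, the fractional solutions $x^{(1)},\dots,x^{(i-1)}$, and the rounding algorithm's purchases so far. Because the semi-adaptive adversary only reacts to the algorithm's purchases, and the fractional algorithm is deterministic given the arrival history, both $e_i$ and $x^{(i)}$ are measurable with respect to $\mathcal{F}_{i-1}$ (together with adversary randomness that is independent of the $Y_S$).

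Observe that $E_i$ requires that no $S\in\S(e_i)$ was purchased before step $i$, i.e.\ $Y_S > x_S^{(i-1)}$ for every $S\in\S(e_i)$; otherwise $e_i$ is already covered when we reach Line~\ref{line:remains_uncovered} and $E_i$ has probability zero. The crucial point is that the $Y_S$'s are mutually independent by construction at Line~\ref{alg:sample_Y_S}, and the information about any individual $Y_S$ leaked through $\mathcal{F}_{i-1}$ has the interval form ``$Y_S$ has/has not crossed each past threshold $x_S^{(j)}$''. Hence, conditional on $\mathcal{F}_{i-1}$ and on the event that no $S\in\S(e_i)$ has been bought yet, the variables $\{Y_S : S\in\S(e_i)\}$ remain mutually independent, each distributed as the minimum of $p$ i.i.d.\ uniforms on $(x_S^{(i-1)},1]$. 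Using $\Pr[\min_j U_j > t] = (1-t)^p$ we obtain
\begin{displaymath}
\Pr\!\left[Y_S > x_S^{(i)} \,\big|\, Y_S > x_S^{(i-1)},\,\mathcal{F}_{i-1},\,e_i,\,x^{(i)}\right]
= \left(\frac{1-x_S^{(i)}}{1-x_S^{(i-1)}}\right)^p \leq (1-x_S^{(i)})^p.
\end{displaymath}
Multiplying over $S\in\S(e_i)$ by independence, applying $1-x\leq e^{-x}$, and invoking feasibility $\sum_{S\in\S(e_i)} x_S^{(i)}\geq 1$ of the fractional solution yields
\begin{displaymath}
\Pr[E_i \mid \mathcal{F}_{i-1},\,e_i,\,x^{(i)}]
\leq \prod_{S\in\S(e_i)}(1-x_S^{(i)})^p
\leq \exp\!\Bigl(-p\sum_{S\in\S(e_i)} x_S^{(i)}\Bigr)
\leq e^{-p},
\end{displaymath}
and taking expectation over $\mathcal{F}_{i-1},e_i,x^{(i)}$ proves the claim.

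The main obstacle, exactly as for Claim~\ref{claim:proportional_x_sc}, is the adaptiveness of the adversary: both $x_S^{(i-1)}$ and $x_S^{(i)}$ are random and may be correlated with the $Y_S$. The plan sidesteps this by noting that the only leakage about the $Y_S$'s through the adversary's (or algorithm's) view is interval information about each $Y_S$ separately; since the $Y_S$ started mutually independent, this leakage preserves their independence under conditioning, and the tail computation reduces to the same closed form as in the oblivious case. The only non-trivial step is the factorization of the conditional probability, which requires carefully arguing that conditioning on ``no $S\in\S(e_i)$ has been purchased before'' is equivalent to conditioning on the intersection of the per-set events $\{Y_S > x_S^{(i-1)}\}$, on which the $Y_S$'s remain conditionally independent.
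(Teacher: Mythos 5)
Your conditioning setup is on the right track, and the observation that the $Y_S$'s remain mutually independent given $\mathcal{F}_{i-1}$ with interval-restricted conditional laws is correct. The fatal error is in the chain of inequalities: you claim
\begin{displaymath}
\Pr\!\left[Y_S > x_S^{(i)} \,\big|\, Y_S > x_S^{(i-1)},\,\mathcal{F}_{i-1}\right]
= \left(\frac{1-x_S^{(i)}}{1-x_S^{(i-1)}}\right)^p \leq (1-x_S^{(i)})^p,
\end{displaymath}
but since $1-x_S^{(i-1)} \leq 1$ the ratio satisfies $\frac{1-x_S^{(i)}}{1-x_S^{(i-1)}} \geq 1-x_S^{(i)}$, so the inequality runs the wrong way. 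Concretely, if $\S(e_i)=\{S_1,S_2\}$ with $x^{(i-1)}=(0.4,0.3)$ and $x^{(i)}=(0.5,0.5)$ (feasible: $\sum x_S^{(i)} = 1$), the conditional probability is $(0.5/0.6)^p(0.5/0.7)^p \approx 0.595^p$, which exceeds $e^{-p}\approx 0.368^p$. So the per-step conditional bound you are after is simply false. Rewriting the ratio as $1-\delta_S^{(i)}/(1-x_S^{(i-1)}) \leq e^{-\delta_S^{(i)}}$ with $\delta_S^{(i)}=x_S^{(i)}-x_S^{(i-1)}$, what you actually control is $\exp(-p\sum_S \delta_S^{(i)})$, and feasibility gives $\sum_S x_S^{(i)}\geq 1$, not $\sum_S \delta_S^{(i)}\geq 1$.

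The correct argument must aggregate the conditional shrinkage across \emph{all} steps $1,\dots,i$, not just the last one: since the adversary raises the $x_S$ incrementally and observes whether each threshold was crossed, the telescoping of the per-step ratios $\prod_{t\leq i}\bigl(\frac{1-x_S^{(t)}}{1-x_S^{(t-1)}}\bigr)^p = (1-x_S^{(i)})^p$ is exactly what recovers the oblivious-case closed form, but this telescoping has to be justified against an adaptive adversary that decides at each step whether $e_i$ arrives. This is precisely what the paper's argument does: it packages the per-step ratio into the supermartingale $Z_i^{(t)}:=\1_{\neg C_i^{(t)}}\exp(p\sum_{S\ni e_i}x_S^{(t)})$ and invokes Doob's optional stopping theorem at the stopping time $\tau_i$ (the arrival time of $e_i$, or $i+1$ if it never arrives). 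Your one-step conditioning on $\mathcal{F}_{i-1}$ is the single-step version of the paper's supermartingale inequality $\Exp[Z_i^{(t)}\mid\mathscr{F}_{t-1}]\leq Z_i^{(t-1)}$, but you need the optional stopping mechanism to unroll this from $t=0$ (where $x_S^{(0)}=0$) all the way to the random arrival time; stopping at $t=i-1$ and bounding the last step alone does not suffice.
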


In the setting with oblivious adversaries, each requested element
$e_i$ has to be fixed by adversary beforehand
and, in particular, events $E_i$ are then independent which
makes the analysis much simpler.
A semi-adaptive adversary can see that we have failed
to select any set covering certain element and choose new
elements to arrive based on this fact.
We postpone the proofs of both claims to the later parts of this section.

The following lemma summarizes the performance of our algorithm in a setting
with an additional restriction on the behavior of the semi-adaptive adversary.
Apart from the fact that the adversary has to fix the super-instance
$(X, \S)$ beforehand
and is allowed to request only elements from $X$, we also impose
an upper bound on the cost of the feasible fractional solution
provided by the adversary.
This condition is not very restrictive and Lemma~\ref{lem:fixed-semi-adaptive-sc}
turns out to be the main part of the proof of
our result for set cover
(Theorem~\ref{thm:semi-adaptive-sc} below).
In particular, such $B$ is naturally determined
for the \nmfl instance produced by Algorithm~\ref{alg:nwsf}
depending on the optimum cost of the \nwsf instance.

\begin{lemma}
Let $(X, \S)$ be a super-instance and $B\geq 0$ a parameter fixed beforehand.
Consider an adversary which adaptively
selects $X' \subseteq X$ and provides the online algorithm with a monotone, fractional solution
of cost at most $B$.
Then Algorithm~\ref{alg:rounding_SC_semiadaptive} satisfies
\[ \Exp[\alg(X',\S)] \leq O(\log |X|)\cdot B.\]
    \label{lem:fixed-semi-adaptive-sc}
\end{lemma}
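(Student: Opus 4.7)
The plan is to decompose the expected cost of Algorithm~\ref{alg:rounding_SC_semiadaptive} into two parts and bound each using one of the two preceding claims. Let $C_1$ denote the cost paid in Line~\ref{line:buy_proportional_YS} (when the rounding thresholds $Y_S$ are crossed) and $C_2$ the cost paid in Line~\ref{line:buy_cheapest} (the greedy backup when rounding leaves an arriving element uncovered), so that $\alg(X',\S) = C_1 + C_2$.

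To bound $\Exp[C_1]$, I will apply linearity of expectation over sets $S\in\S$ together with Claim~\ref{claim:proportional_x_sc}. A set $S$ is purchased by the rounding procedure exactly when $Y_S \leq x_S^{(|X|)}$, which happens with probability at most $2p\cdot x_S^{(|X|)}$. Hence
\[
\Exp[C_1] \leq \sum_{S\in\S} c_S \cdot 2p \cdot x_S^{(|X|)} \;\leq\; 2p \cdot B \;=\; O(\log|X|)\cdot B,
\]
where the middle inequality uses that the final fractional solution has cost at most $B$, and the final equality uses $p = O(\log|X|)$.

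For $C_2$, the first ingredient is that whenever $e_i$ triggers Line~\ref{line:buy_cheapest}, the cheapest set covering $e_i$ has cost at most $B$. Indeed, at the moment $e_i$ arrives the fractional solution satisfies $\sum_{S\ni e_i} x_S^{(i)} \geq 1$ and $\sum_{S\in\S} c_S\, x_S^{(i)} \leq B$, so by averaging some $S\ni e_i$ has $c_S \leq B$. The second ingredient is Claim~\ref{claim:bad_event_SC}, which yields $\Pr[E_i] \leq \exp(-p) \leq |X|^{-4}$. Because the semi-adaptive adversary may choose $X'$ to contain any element of the super-instance $X$, I take a union bound over all of $X$, obtaining
\[
\Exp[C_2] \;\leq\; |X| \cdot |X|^{-4} \cdot B \;\leq\; B.
\]
Adding the two bounds yields the claimed $\Exp[\alg(X',\S)] \leq O(\log|X|)\cdot B$.

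The hardest part of the overall argument is not this lemma itself but the two claims it invokes, where the adaptivity of the adversary prevents a direct reuse of the classical oblivious-adversary analysis. Granting those claims, the remaining proof is a short combination of linearity of expectation, the averaging argument that bounds the cheapest covering set by $B$, and a union bound over the fixed super-instance $X$ that absorbs the adaptive choice of $X'$.
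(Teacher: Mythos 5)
Your proof follows the paper's approach closely: the same decomposition into cost from Line~\ref{line:buy_proportional_YS} and cost from Line~\ref{line:buy_cheapest}, bounded respectively via Claim~\ref{claim:proportional_x_sc} and Claim~\ref{claim:bad_event_SC}. Your accounting is in fact slightly cleaner in two spots: you bound $\Exp[C_1]$ directly by $2p\cdot B = O(\log|X|)\cdot B$ (the paper's displayed chain detours through $\Exp[\opt]$ and $\log|\S|$, which belong to Theorem~\ref{thm:semi-adaptive-sc}, not this lemma), and you bound $\Exp[C_2]$ by $\sum_i \Pr[E_i]\cdot B$ via linearity of expectation rather than the paper's coarser $\Pr[\hat E]\cdot|X|\cdot B$. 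You also spell out the averaging argument $\min_{S\ni e_i}c_S\leq \sum_{S\ni e_i}c_S x_S^{(i)}\leq B$, which the paper leaves implicit.

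One point you elide that the paper handles explicitly: the adversary may present elements of $X'$ in an \emph{arbitrary order}, whereas Claim~\ref{claim:bad_event_SC} is stated and proved for a fixed indexing in which element $e_i$ can only arrive at time step $i$ (the martingale $Z_i^{(t)}$ is stopped at $\tau_i\in\{i,i{+}1\}$). To make ``union bound over all of $X$'' legitimate, the paper first reduces to this fixed-slot model by passing to a super-instance with $|X|$ ordered copies of each element, incurring only a quadratic blow-up in $|X|$ that is absorbed into $O(\log|X|)$. Without this reduction, the event ``$e$ arrives at some (adversarially chosen) time and is then uncovered'' is not directly one of the events $E_i$ that Claim~\ref{claim:bad_event_SC} controls. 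Granting the claims together with this reduction, your argument is correct and matches the paper's.
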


The following theorem is the main result of this section.
While Lemma~\ref{lem:fixed-semi-adaptive-sc} bounds the expected cost
of $\alg$ using $B$,
Theorem~\ref{thm:semi-adaptive-sc} requires a bound in terms of
$\Exp[c^Tx]$. To control the evolution of $c^Tx$,
we consider a geometric series of thresholds.
Whenever the cost of the fractional solution exceeds a threshold $b$,
we double the threshold $b$ and restart Algorithm~\ref{alg:rounding_SC_semiadaptive}.
This way, we have a lower bound on $c^Tx$ which
can be combined with Lemma~\ref{lem:fixed-semi-adaptive-sc}.

\begin{theorem}\label{thm:rounding_fracSC}
Consider a semi-adaptive adversary which fixes a super-instance $(X,\S)$ and a cost vector $c:\S\rightarrow \R_{\geq 0}$
beforehand and provides a monotonically-increasing
feasible fractional solution $x \in [0,1]^{\S}$.
There is an online rounding algorithm against such adversary
which produces an integral solution to $(X', \S)$ of expected cost at most $O(\log|X|) \cdot \Exp[c^T x]$.
Consequently, there is a $O(\log|X|\cdot\log|\S|)$-competitive online algorithm for set cover against a semi-adaptive adversary.
    \label{thm:semi-adaptive-sc}
\end{theorem}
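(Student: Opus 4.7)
\textbf{Proof plan for Theorem \ref{thm:semi-adaptive-sc}.}
The plan is to reduce the theorem to Lemma \ref{lem:fixed-semi-adaptive-sc}, which requires an a~priori budget on the fractional solution, by means of a doubling-and-restart argument, and then to compose the resulting rounding with a classical deterministic online algorithm for the fractional set cover LP. Organize the run in geometric phases indexed by a budget $b_j = 2^j b_0$, where $b_0$ is a lower bound on the cost of covering any single element (for instance $\min_{e\in X}\min_{S\ni e} c_S$). Within phase $j$ run a fresh copy of Algorithm \ref{alg:rounding_SC_semiadaptive} with independently resampled thresholds $\{Y_S\}_{S\in\S}$, while keeping all sets purchased in earlier phases as part of the integral solution. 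As soon as the fractional cost $c^T x$ crosses $b_j$, close phase $j$, double the budget to $b_{j+1} = 2b_j$, and resample afresh.

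Within phase $j$, the input to the rounding algorithm is a semi-adaptive instance over the same super-instance $(X,\S)$ equipped with a monotone fractional solution whose cost never exceeds $b_j$, so Lemma \ref{lem:fixed-semi-adaptive-sc} applies and bounds the expected cost accrued during phase $j$ by $O(\log|X|)\cdot b_j$. Summing over phases, if $J$ is the index of the last phase, monotonicity of $x$ and the phase-opening rule imply $b_J \le 2\,c^T x_{\text{final}}$, and a geometric sum gives
\begin{align*}
\Exp[\alg] \;\le\; \sum_{j=0}^{J} O(\log|X|)\cdot b_j \;\le\; O(\log|X|)\cdot \Exp\!\left[c^T x\right],
\end{align*}
which is the first claim of the theorem. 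For the consequence for set cover, combine this rounding with the deterministic online fractional algorithm of \citet{alon_general_2006}, which maintains a monotone feasible fractional solution $x$ of cost $O(\log|\S|)\cdot \opt$. Being deterministic, it works against an adaptive, hence semi-adaptive, adversary, so plugging its guarantee into the bound above yields the promised $O(\log|X|\cdot\log|\S|)$ competitive ratio.

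\textbf{Main obstacle.} The delicate point is that each phase truly satisfies the hypotheses of Lemma \ref{lem:fixed-semi-adaptive-sc}. Two concerns arise: (i) the fractional solution is not reset at the start of a phase, so the very first element of phase $j$ can trigger a burst of purchases from the fresh thresholds $\{Y_S\}$; and (ii) the semi-adaptive adversary has observed all prior actions of the algorithm and tailors its phase-$j$ requests accordingly. Both are resolved by analyzing phase $j$ in isolation: the resampled $\{Y_S\}$ are independent of everything preceding phase $j$, and conditionally on the history up to its start, the monotone evolution of $x$ together with the adversary's subsequent element requests constitutes exactly the sort of semi-adaptive instance addressed by Lemma \ref{lem:fixed-semi-adaptive-sc}, with budget $b_j$; the bound $O(\log|X|)\cdot b_j$ absorbs both the proportional purchases triggered by Line \ref{line:buy_proportional_YS} (controlled via Claim \ref{claim:proportional_x_sc}) and the greedy fallback purchases in Line \ref{line:buy_cheapest} (controlled via Claim \ref{claim:bad_event_SC}). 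This is the only place where the clean restart-and-doubling structure is essential, and it is what makes the argument work uniformly over the adversary's adaptive strategy.
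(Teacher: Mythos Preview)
Your proposal is essentially the same doubling-and-restart reduction to Lemma~\ref{lem:fixed-semi-adaptive-sc} that the paper uses, followed by composition with the deterministic fractional algorithm of \citet{alon_general_2006}; the identification of the ``main obstacle'' (fresh thresholds at each phase are independent of the history, so the phase is a valid semi-adaptive instance with budget $b_j$) is exactly the point.

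One genuine but easily fixable issue: your initial budget $b_0 = \min_{e\in X}\min_{S\ni e} c_S$ is not available to the algorithm in the semi-adaptive model, since the super-instance $(X,\S)$ is \emph{not revealed} to the algorithm; moreover this quantity can be $0$, which collapses the geometric sequence $b_j = 2^j b_0$. The paper sidesteps this by taking the first threshold from something observable, namely $b_1 = 2\,c^T x^{(1)}$ once the first element arrives, and thereafter $b_{j+1} = 2\,c^T x^{(t)}$ at the step $t$ where $c^T x^{(t)}$ first exceeds $b_j$; this guarantees $b_{j+1} \ge 2 b_j$ and $b_r \le 2\,c^T x_{\text{final}}$, after which your geometric-sum argument (and the paper's) goes through unchanged. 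Apart from this choice of starting threshold, your proof and the paper's coincide.
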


\begin{proof}[Proof of \cref{thm:semi-adaptive-sc}]
Consider the following algorithm: When the first element and its corresponding fractional solution arrives,
choose $b_1 = 2\cdot c^T x^{(1)}$.
Having a threshold $b_j$, run \cref{alg:rounding_SC_semiadaptive} until step $t$
when $c^T x^{(t)} > b_j$. Then set $b_{j+1} = 2\cdot c^T x^{(t)}$
and continue with the iteration $j+1$.
Note that $b_{j+1} \geq 2b_j$ for each $j =1, \dotsc, r$ where $r$ is the number
of iterations of this loop.
Let $C_j$ be the cost of $\alg$ during iteration $j$.
We have
\begin{equation}\notag
\Exp[\alg] \leq \Exp\bigg[ \sum_{j=1}^r C_j\bigg]
    = \sum_{j=1}^{\infty} \Pr[r\geq j] \cdot \Exp[C_j \mid r\geq j]
    \leq O(\log |X|)\cdot \sum_{j=1}^{\infty} \Pr[r\geq j] \cdot b_j,
\end{equation}
because, if iteration $j$ is executed, $\Exp[C_j] \leq O(\log |X|) \cdot b_j$
by Lemma~\ref{lem:fixed-semi-adaptive-sc}.
We can rearrange the sum in the right-hand side to get the following bound:
\begin{equation}\notag
\sum_{j=1}^{\infty} \Pr[r\geq j] \cdot b_j
    = \sum_{j=1}^{\infty}\bigg( \Pr[r = j] \sum_{i=1}^j b_i\bigg)
    \leq \sum_{j=1}^{\infty} \Pr[r = j]\cdot 2b_j
    = 2\Exp[b_r]
    \leq 4\Exp[c^Tx],
\end{equation}
where $x$ is the final fractional solution at the end of the input.
The first inequality holds because the $b_j$ are geometrically increasing,
and the last inequality holds because $c^Tx \leq 2b_r$
by definition of $r$.

The preceding two inequalities imply that $\Exp[\alg] \leq O(\log|X|) \Exp[c^Tx]$.
There is a deterministic fractional algorithm for set cover by \citet{alon_general_2006} which does not
require knowledge of $(X, \S)$ beforehand and
is $O(\log |\S|)$-competitive. Their result
implies the existence of a
$O(\log |X| \log |\S|)$-competitive algorithm for set cover
against a semi-adaptive adversary.
\end{proof}
\begin{proof}[Proof of \cref{lem:fixed-semi-adaptive-sc}]
Let $(\X, \S)$
denote the super-instance fixed by the adversary beforehand
and $e_1, e_2, \dotsc, e_{|X|}$ be the elements of $X$.
We consider $|X|$ times steps and assume that,
for each time step $i=1, \dotsc, |X|$, the adversary must irrevocably decide whether
or not to include $e_i \in X$ in the real instance.
This assumption is without loss of generality, up to a quadratic increase
in the size of $X$:
Consider a set system $(\X', \S)$, where $\X'$ has size $|X|^2$ and
contains $|X|$ copies $\smash{e_i^1, \dotsc, e_i^{|X|}}$ of each element $e_i\in X$,
each belonging to the same sets as in $(\X, \S)$.
Any ordering of the elements from $\X$ chosen by the adversary then corresponds to a selection of the elements of $\X'$, where the elements have to be processed in order. 
The online algorithm sees the same input regardless whether the adversary's
super-instance is $(\X',\S)$ or $(\X, \S)$.

We bound the costs incurred in lines \ref{line:buy_proportional_YS}
and \ref{line:buy_cheapest} separately.
To bound the cost incurred in line~\ref{line:buy_proportional_YS}, we use Claim~\ref{claim:proportional_x_sc}.
By linearity of expectation, Claim~\ref{claim:proportional_x_sc} implies
\begin{alignat}{1}
    \Exp[\text{cost of sets bought in Line~\ref{line:buy_proportional_YS}}] &\leq  O(p)\cdot \Exp[\sum c_S\cdot x_S^{(|\X|)}]\notag\\
    &\leq O(\log|\X|\cdot\log|\S|)\cdot \Exp[\opt].\label{bound:set_cover_sample_relative_x_S}
\end{alignat}

Now we show that the probability of purchasing a set in Line~\ref{line:buy_cheapest}
is small.
Consider time step $t$ when the adversary decides whether element $e_t$ arrives. Let $x^{(t)}$ and  $\hat{x}^{(t)}$ be the fractional solution and the integral solution at the end of time step $t$. Let $A_i$ denote the event that element $e_i$ arrives and let $C_i^{(t)}$ be the event that $e_i$
is covered by $\hat{x}^{(t)}$. Note that $E_i=A_i\setminus C_i^{(i)}$.
Claim~\ref{claim:bad_event_SC} says that the probability
of $E_i$ is at most $\exp(-p)$.

Let $\hat{E}$ be the event that any of the events $E_i$ is triggered. By taking the union bound over all possible $i = 1, \dotsc, |\X|$, we see that
$\Pr[\hat{E}]\leq \exp(-p) \cdot |X|$.
Each time that an event $E_i$ happens, we pay the cost of the cheapest
set containing $e_i$ which is at most $c^Tx^{(i)}\leq B$.
So, when $\hat{E}$ happens we incur a cost of at most $|\X|\cdot B$.
Hence, we need to choose $p$ of order $\log|X|$ in order to get
\begin{alignat}{1}
  \Exp[\text{cost of sets bought in Line~\ref{line:buy_cheapest}}] &\leq \Pr[\hat{E}]\cdot |\X|\cdot B \leq B .\label{bound:set_cover_bad_event}
\end{alignat}

Combining~(\ref{bound:set_cover_sample_relative_x_S}) and~(\ref{bound:set_cover_bad_event}), the expected cost of the algorithm is at most $O(\log|\X|\log|\S|)\cdot \Exp[\opt]$.\end{proof}

\begin{proof}[Proof of Claim~\ref{claim:proportional_x_sc}]
We show that $\Pr[x_S^{(t)} \leq U] \leq 2x_S^{(t)}$ whenever $U \sim \mathrm{Unif}([0,1])$.
The claim then follows from the union bound, since
$Y_S := \min(U_1, \ldots, U_p)$, where $U_i\sim\mathrm{Unif}([0,1])$
for each $i=1, \dotsc, p$.

Consider a set $S$ which was not purchased by the algorithm
until time $t-1$ and the adversary increases its fractional value by
$\smash{\delta_S^{(t)} = x_S^{(t)} - x_S^{(t-1)}}$.
The probability that $x_S^{(t)}$ rises above the threshold~$U$ is then
$\smash{\nicefrac{\delta_S^{(t)}}{1-x_S^{(t-1)}} \leq 2\delta_S^{(t)}}$
for $\smash{x_S^{(t)} < 1/2}$ (Note that the claim is trivially true
for $\smash{x_S^{(t)} \geq 1/2}$).
In what follows, we make this intuition formal
by analyzing the adaptive decisions of the adversary.

Taking the contrapositive, we are going to lower bound $\smash{\Pr[x_S^{(t)} < U] = \Exp[\1_{x_S^{(t)} < U}]}$.
Set $\delta_S^{(t)} := x_S^{(t)}-x^{(t-1)}_S$ and denote by $\mathscr{F}_{t-1}$ the filtration generated by $x^{(1)}, x^{(2)}, \ldots, x^{(t-1)}$.
First, we compute $\Exp[\1_{x_S^{(t)} < U}]$ conditioned
on the filtration $\mathscr{F}_{t-1}$.
We have
    \begin{align*}
        \Exp[\1_{x_S^{(t)} < U}|\mathscr{F}_{t-1}] &= \1_{x_S^{(t-1)}<U}\cdot\Pr[x_S^{(t)}<U\mid x_S^{(t-1)} < U]\\
        &=\1_{x_S^{(t-1)} < U}\cdot\tfrac{1-x_S^{(t-1)}-\delta_S^{(t)}}{1-x_S^{(t-1)}}\\
        &=\1_{x_S^{(t-1)} < U}\cdot(1-\tfrac{\delta_S^{(t)}}{1-x_S^{(t-1)}})
    \end{align*}
The computation above implies that the following random variable
is a martingale:
\begin{equation}
\label{eq:sc_martingale1}
    Z_i^{(t)}:= \1_{x_S^{(t)} < U}\cdot \prod_{j=1}^{t}(1-\tfrac{\delta_S^{(j)}}{
1-x_S^{(j-1)}
    })^{-1}
\end{equation}
Hence, we have $\Exp[Z_i^{(|X|)}]=\Exp[Z_i^{(0)}]=1$ and
given a final fractional solution $x^{(|X|)}$, 
we can bound the probability of $x_S^{(|X|)}$ being smaller than $U$ as
\begin{displaymath}
    \Exp[\1_{x_S^{(|\X|)} < U}] = \prod_{j=1}^{|\X|}(1-\tfrac{\delta_S^{(j)}}{
1-x_S^{(j-1)}
    }).
\end{displaymath}
Now it is enough to use the inequality
$
    \prod (1-x_i) \geq 1-\sum x_i,
$
which holds for all $x_i \in [0,1)$. We get
\begin{displaymath}
    \Exp[\1_{x_S^{(|\X|)} < U}] \geq \begin{cases}
        1-\sum_{i=1}^{|\X|}2\cdot \delta_S^{(i)} &\text{ if }
x_S^{(|\X|-1)} < 1/2,\\
        0 &\text{ otherwise.}
    \end{cases}
\end{displaymath}
This implies that $\Pr[x_S^{(t)} \leq U] \leq 2x_S^{(t)}$.
\end{proof}

In order to prove Claim~\ref{claim:bad_event_SC},
we analyze the behaviour of the adversary using a martingale similar to~\eqref{eq:sc_martingale1}.
We use the following classical fact from the theory of martingales
called Doob's optional stopping theorem.
While the expected value of a martingale at any fixed time $t$
is its initial value, Doob's theorem says that the same holds
for $t$ chosen by an arbitrary stopping strategy.

\begin{proposition}[see page 555, Theorem 9 in \citep{GrimmettStirzaker20}]
\label{prop:stopping}
Let $(Z, \mathscr{F})$ be a martingale and let $T$ be a stopping time.
Then $\Exp[Z_T] = \Exp[Z_0]$ if the following holds:
\begin{itemize}
    \item $\Pr[T < \infty] = 1$, $\Exp[T] < \infty$, and
    \item there exists a constant $c$ such that $\Exp\big[|Z_{n+1} - Z_n|\, \big\vert\, \mathscr{F}_n\big] \leq c$ for all $n < T$.
\end{itemize}

\end{proposition}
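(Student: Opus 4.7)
The plan is to prove the proposition via a dominated-convergence argument applied to the stopped process. First I would introduce $\tilde Z_n := Z_{T \wedge n}$ and verify it is itself an $(\mathscr{F}_n)$-martingale. This is the classical stopped-martingale fact: writing $\tilde Z_{n+1} - \tilde Z_n = \1_{\{T > n\}}(Z_{n+1} - Z_n)$ and noting that $\{T > n\} \in \mathscr{F}_n$ since $T$ is a stopping time, the martingale property of $Z$ passes directly to $\tilde Z$. Consequently, $\Exp[\tilde Z_n] = \Exp[Z_0]$ for every $n$.

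Next, I would use $\Pr[T < \infty] = 1$ to conclude that $\tilde Z_n \to Z_T$ almost surely as $n \to \infty$. Once we can exchange this limit with expectation, the desired conclusion $\Exp[Z_T] = \Exp[Z_0]$ follows immediately from the stopped-martingale identity above.

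The heart of the argument, and where the main subtlety lies, is producing an integrable dominating random variable so that dominated convergence applies. I would use the telescoping bound $|\tilde Z_n| \leq |Z_0| + \sum_{k=0}^{T-1}|Z_{k+1}-Z_k|$ and show the right-hand side is integrable. Invoking the tower property (since $\1_{\{T > k\}}$ is $\mathscr{F}_k$-measurable as $T$ is a stopping time), together with the bounded conditional-increments hypothesis applied on the event $\{T > k\}$, and Tonelli to interchange sum and expectation for non-negative summands, one obtains
\begin{equation*}
\Exp\left[\sum_{k=0}^{T-1}|Z_{k+1}-Z_k|\right]
= \sum_{k=0}^{\infty} \Exp\left[\1_{\{T > k\}}\,\Exp\left[|Z_{k+1}-Z_k|\,\big|\,\mathscr{F}_k\right]\right]
\leq c\sum_{k=0}^{\infty} \Pr[T > k] = c\cdot \Exp[T] < \infty,
\end{equation*}
where the last equality is the standard identity for the expectation of a non-negative integer-valued random variable. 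With this integrable majorant in hand, dominated convergence yields $\Exp[Z_T] = \lim_{n\to\infty}\Exp[\tilde Z_n] = \Exp[Z_0]$, completing the proof. The key point to be careful about is that the hypothesis only bounds the conditional increments on $\{n < T\}$, but the indicator $\1_{\{T > k\}}$ in the sum above ensures we only ever invoke the bound on this event, so the argument goes through cleanly.
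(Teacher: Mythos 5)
The paper does not prove this proposition; it cites it directly as a textbook result from Grimmett and Stirzaker, so there is no paper proof to compare against. Your argument is the standard and correct proof of this form of Doob's optional stopping theorem: the stopped process $\tilde Z_n = Z_{T\wedge n}$ is a martingale (so $\Exp[\tilde Z_n]=\Exp[Z_0]$ for all $n$), it converges a.s.\ to $Z_T$ since $T<\infty$ a.s., and the dominating random variable $|Z_0|+\sum_{k=0}^{T-1}|Z_{k+1}-Z_k|$ is shown integrable via Tonelli and the tower property — using that $\{T>k\}\in\mathscr{F}_k$ and that the bounded conditional-increment hypothesis is only invoked on the event $\{T>k\}$ — with $\sum_k\Pr[T>k]=\Exp[T]<\infty$, so dominated convergence closes the argument. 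This is exactly the reasoning behind the cited theorem, and all the steps check out.
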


\begin{proof}[Proof of Claim~\ref{claim:bad_event_SC}]
Let $\mathscr{F}_t$ be the filtration generated by
arrivals of elements
$A_{1},\ldots, A_{t+1}$
and fractional solutions
$x^{(1)},\ldots, x^{(t+1)}$ until time $t+1$ and
the integral solution produced by Algorithm~\ref{alg:rounding_SC_semiadaptive}
$\hat{x}^{(1)},\ldots,\hat{x}^{(t)}$ until time $t$.
Conditioned on $\mathscr{F}_{t-1}$, we estimate the probability of
the event $\neg C_i^{(t)}$ that an element $e_i$ is not covered
by $\hat{x}^{(t)}$. We have
  \begin{align*}
    \Exp[\1_{\neg C_i^{(t)}}\mid \mathscr{F}_{t-1}]&=\1_{\neg C_i^{(t-1)}}\Pr[Y_S \geq x^{(t)}_S\ \forall S\ni e_i \mid Y_S \geq x^{(t-1)}_S\ \forall S\ni e_i ] \\&
    =\1_{\neg C_i^{(t-1)}}\prod_{S\ni e_i}\Pr\big[Y_S^i \notin [x_S^{(t-1)}, x_S^{(t)}]  \,\big\vert\, Y_S \geq x_S^{(t-1)}\big] \\
&= \1_{\neg C_i^{(t-1)}}\prod_{S\ni e_i}
        \bigg(\frac{1-x_S^{(t-1)}-\delta^{(t)}_S}{1-x_S^{(t-1)}}\bigg)^p
    \leq \1_{\neg C_i^{(t-1)}}\exp\left(-p\sum_{S\ni e_i}\delta^{(t)}_S\right).
  \end{align*}
The last inequality holds because
$(1-x) \leq \exp(-x)$.

For each $i=1, \dotsc, |X|$, we define a supermartingale $Z_i$ as follows:
we set $Z_i^{(0)} = 1$, $Z_i^{(i+1)} = 0$ and
$\smash{Z_i^{(t)}:=\1_{\neg C_i^{(t)}}\exp(p\sum_{S\ni e_i}x_S^{(t)})}$
for $t=1, \dotsc, i$.
First, we show that $Z_i$ is indeed a supermartingale. We observe that
\begin{align*}
\Exp[Z_i^{(t)}\mid \mathscr{F}_{t-1}]
    &=\Exp[\1_{\neg C_i^{(t)}} \mid \mathscr{F}_{t-1}] \cdot \exp\big(p\sum_{S\ni e_i}x_S^{(t)}\big)\\
    &\leq \1_{\neg C_i^{(t-1)}} \cdot \exp\big(-p\sum_{S\ni e_i}\delta^{(t)}_S\big)\exp\big(p\sum_{S\ni e_i}x_S^{(t)}\big)\\
    &= \1_{\neg C_i^{(t-1)}} \cdot \exp\big(p\sum_{S\ni e_i}x_S^{(t-1)}\big)\\
    &= Z_i^{(t-1)}.
\end{align*}
Now, consider a stopping time $\tau_i$ defined as follows:
  \begin{align*}
    \tau_i:=\begin{cases}
      i&\text{if } A_i\\
      i+1&\text{otherwise}.
    \end{cases}
  \end{align*}
Clearly, $\tau_i$ is bounded and by Proposition~\ref{prop:stopping}
we have  $\Exp[Z_i^{(\tau_i)}] \leq \Exp[Z_i^{(0)}]$. Therefore,
\begin{align*}
1 \geq \Exp[Z_i^{(\tau_i)}] = 
    \Exp\bigg[\1_{\neg C_i^{(i)}}\1_{A_i}\exp\big(p\sum_{S\ni e_i}x_S^{(i)}\big)\bigg].
  \end{align*}
  Whenever the event $A_i$ occurs, we must have $\sum_{S\ni e_i}x_S^{(i)}\geq 1$. Therefore the value of $\1_{A_i}\exp(p\sum_{S\ni e_i}x_S^{(t)})$ is either $0$ or greater than $\exp(p)$. This implies that
    $\Exp[\1_{\neg C_i^{(i)}}\1_{A_i}\exp(p)]\leq1$.
Therefore, the probability of $E_i$ can be bounded as
  \begin{align*}
    \Pr[E_i]&=\Exp[\1_{\neg C_i^{(i)}}\1_{A_i}]\leq \exp(-p).\qedhere
    \end{align*}
\end{proof}

\subsection{Online Non-Metric Facility Location}\label{subsec:online_nmfl}

To construct our algorithm for \nmfl, we use the following
linear programming relaxation.
\noindent\zsavepos{text-left-margin}\begin{alignat}{4}
  & \zsavepos{top-lp}\text{min} \quad \mathrlap{\displaystyle\sum\limits_{f\in F}^{} \fcost(f) x_f + \displaystyle\sum\limits_{(c,f)\in C\times F}^{}\ccost(c,f) x_{c,f}}\notag\\
  & \text{s.t.} \quad & \smashoperator{\sum_{f \in F}} x_{c,f}
    &\geq 1 & \quad & c \in C \label{LP:flow_atleast_one}\\
  & & x_f &\geq x_{c,f} && (c,f) \in C\times F \label{LP:facility_greater_edge}\\
  & & x_f, x_{c,f} &\in [0,1] && e\in E, f\in F\zsavepos{bottom-lp}\notag 
\end{alignat}

In the online setting, the clients $\{c_1, c_2, \ldots\}$ are revealed one-by-one. We denote by $\LP^{(i)}$ the linear program above with $C$ replaced by $C_{\leq i} := \{c_1, \ldots, c_i\}$ in constraint~(\ref{LP:flow_atleast_one}). By~\cite[Section~$3.2$]{alon_general_2006}, it is possible to maintain a monotonically increasing fractional solution $x^{(i)}$ to $\LP^{(i)}$, meaning 
\begin{displaymath}
    x^{(i)}_{c,f} \geq x^{(i-1)}_{c,f} \text{ and } x^{(i)}_f \geq x^{(i-1)}_f,
\end{displaymath}
that is $O(\log |F|)$-competitive with respect to the optimum value of $\LP^{(i)}$.

As in the set cover problem, we can assume that both the cost $\opt$ of the optimal offline solution and the value of $\log k$ are known to us in advance up to a factor of $2$. By scaling we can then assume that $k\leq \opt \leq k^2$. By rounding up every connection cost to its nearest power of two, we lose only a constant factor in the optimum value. We will assume that all connection costs smaller than $1$ are equal to $0$. As long as we only connect each client once, this leaves the new value of a solution within a constant factor of its old value. Consequently, we can assume that the connection costs are a power of $2$ and smaller than $k^2$.

The fractional solution to \nmfl is rounded online through a rounding algorithm for Set Cover, $\A_\setcov$. The online rounding algorithm can either be randomized, i.e.\,Algorithm~\ref{alg:rounding_SC_semiadaptive}, or deterministic, such as the procedure from~\cite{Alon_et_al_SetCover}. In the latter case, the corresponding set system must be known in advance to the online algorithm, while in the randomized setting the corresponding set system must be determined by the fixed \nmfl instance of the (semi-adaptive) adversary. We describe this construction in the proof of Claim~\ref{claim:semiadap_sc_applies}. The sets of this set system correspond to the facilities in $F$ and the element correspond to pairs $(c, t)$ with $c\in C$ and  $t\in \{0,2^0,2^1,\ldots 2^{\lceil 2\log k\rceil}\}$. Hence the set system has $O(|C|\cdot\log k)$ elements and $|F|$ sets. This yields a competitive ratio of $O(\log k \log|F|)$ for a randomized algorithm for \nmfl against a semi-adaptive adversary, and a $O(\log|C|\log|F|)$-competitive online algorithm in the deterministic setting. 

The algorithm works as follows. Upon arrival of client $c_i$, we compute $x^{(i)}$, the monotone solution to $\text{LP}^{(i)}$. 
We then compute the minimal $j_i \in \{0, 1, 2, \ldots, \lceil\log k\rceil\}$ ($j_i \in \{0, 1, 2, \ldots, \lceil\log|C|\rceil\}$ in the deterministic setting) such that 
\begin{alignat}{1}
    \sum_{\{f\in F \mid \ccost(f,c) \leq 2^{j_i}\}} x^{(i)}_{c_i, f} \geq \tfrac{1}{2}.\label{ineq:atleasthalf}
\end{alignat}
We then pass $(e_i, \{f\in F \mid \ccost(f,c) \leq 2^{j_i}\}, 2\cdot x_f^{(i)})$ to $\A_{SC}$. This outputs a set corresponding to facility $f$. We connect client $c_i$ to this facility $f$ and proceed to the next element. This procedure is summarized in Algorithm~\ref{alg:nmfl_to_sc}.
We note that the approach is conceptually similar to that used in \cite{bienkowski_nearly_2021}.
However, by making black-box use of an algorithm for online set cover we can more easily prove results for both the deterministic and the semi-adaptive setting.

\begin{algorithm}[H]
\begin{algorithmic}[1]

\State Initialize $\A_{\nmfl}$, an algorithm maintaining $x$, a fractional monotonically increasing solution to \nmfl.
\State Initialize $\mathcal{A}_{\setcov}$ as Algorithm~\ref{alg:rounding_SC_semiadaptive} or~\cite{Alon_et_al_SetCover}.\When {client $c_i$ arrives}
\State Pass client $c_i$ to $\A_{\nmfl}$. Update $x$.
\State Compute minimal $j_i\in \{0, 1, \ldots, \ell\}$ such that (\ref{ineq:atleasthalf}) holds.
\State Pass $(e_i, \{f \in F \mid \ccost(c_i, f) \leq 2^{j_i}\}, \{x^{(i)}\}_{f\in F})$ to $\A_{\setcov}$ and receive $f$. Connect $c_i$ to $f$.\label{line:SC_chooses_facility}
\EndWhen

\end{algorithmic}
\caption{Reducing Online Randomized NMFL to Online Randomized $\setcov$\label{alg:nmfl_to_sc}}
\end{algorithm}

\begin{proof}[Proof of Theorem~\ref{thm:nmfl}]
We show that Algorithm~\ref{alg:nmfl_to_sc} satisfies the desired
performance guarantees.

By the results of~\cite{alon_general_2006}, it is possible to deterministically (hence also in the semi-adaptive setting) maintain a $O(\log |F|)$-competitive, monotonically increasing solution $x$ to $\nmfl$. It remains to bound the expected cost of facilities and connection costs chosen through $\A_{\setcov}$ in Line~\ref{line:SC_chooses_facility}

\begin{claim}\label{claim:semiadap_sc_applies}
The (expected) cost of facilities opened in Line~\ref{line:SC_chooses_facility} of Algorithm~\ref{alg:nmfl_to_sc} is at most $O(\log|C| \log|F|)\cdot\Exp[\opt]$ (and $O(\log|C| \log|F|)\cdot \opt$ if $\A_\setcov$ is deterministic).\end{claim}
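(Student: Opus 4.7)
The plan is to establish this claim by reducing the facility-opening cost to the rounding guarantee of Theorem~\ref{thm:semi-adaptive-sc} (or its deterministic analogue from~\citet{Alon_et_al_SetCover}). First I would make the auxiliary set cover super-instance explicit: the sets are indexed by the facilities $f\in F$, the set $S_f$ has cost $\fcost(f)$, and the ground set consists of the pairs $(c,j)$ for $c\in C$ and $j\in\{0,1,\dotsc,\lceil 2\log k\rceil\}$, with $(c,j)\in S_f$ exactly when $\ccost(c,f)\le 2^j$. Because the NMFL super-instance is fixed by the adversary before the algorithm starts, this set cover super-instance is also fixed beforehand; the adaptive choice of which element $(c_i,j_i)$ is handed to $\A_\setcov$ is then a legal semi-adaptive choice, so Theorem~\ref{thm:semi-adaptive-sc} applies. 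In the deterministic case, the whole NMFL super-instance is known to the algorithm, hence so is this set cover super-instance, allowing the use of~\citet{Alon_et_al_SetCover}.

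Next I would construct the fractional solution to pass to $\A_\setcov$: set $y_f^{(i)} := 2 x_f^{(i)}$, using the monotone $O(\log|F|)$-competitive fractional NMFL solution $x^{(i)}$ of~\citet{alon_general_2006}. Monotonicity of $y^{(i)}$ follows directly from monotonicity of $x^{(i)}$. For feasibility, when $(c_i,j_i)$ arrives I need $\sum_{f:\ccost(c_i,f)\le 2^{j_i}} y_f^{(i)}\ge 1$; combining the LP constraint~\eqref{LP:facility_greater_edge}, namely $x_f^{(i)}\ge x_{c_i,f}^{(i)}$, with the defining inequality~\eqref{ineq:atleasthalf} of $j_i$ gives
\[
\sum_{f:\ccost(c_i,f)\le 2^{j_i}} 2 x_f^{(i)} \;\ge\; 2\sum_{f:\ccost(c_i,f)\le 2^{j_i}} x_{c_i,f}^{(i)} \;\ge\; 1,
\]
so $y^{(i)}$ is feasible for the element just released.

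Then I would bound the cost of this fractional set cover solution. Since $x^{(i)}$ is $O(\log|F|)$-competitive with respect to the optimum of the NMFL LP, and the facility-opening part of its value is itself a lower bound, we get
\[
\sum_{f\in F}\fcost(f)\,y_f^{(i)} \;=\; 2\sum_{f\in F}\fcost(f)\,x_f^{(i)} \;\le\; O(\log|F|)\cdot\opt.
\]
Applying Theorem~\ref{thm:semi-adaptive-sc} to the semi-adaptive set cover instance constructed above yields expected facility cost at most $O(\log|X|)\cdot\Exp[\sum_f\fcost(f)y_f] = O(\log|X|\log|F|)\cdot\Exp[\opt]$. Since $|X|=O(|C|\log k)$ and we are in the regime $k\le |C|^{O(1)}$ (by the assumptions made at the start of Section~\ref{subsec:online_nmfl}), we have $\log|X|=O(\log|C|)$, giving the claimed $O(\log|C|\log|F|)\cdot\Exp[\opt]$ bound; the deterministic analogue follows verbatim with the~\citet{Alon_et_al_SetCover} rounding.

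The main obstacle I foresee is the bookkeeping around the semi-adaptive structure: I need to make sure that the super-instance really is determined by the adversary's fixed NMFL super-instance (independent of the algorithm's random choices) and that the stream of elements fed to $\A_\setcov$ together with the fractional values $y_f^{(i)}$ forms a legal, monotone semi-adaptive input — in particular that the adversary's ability to choose which client $c_i$ arrives at time $i$ (and thereby which index $j_i$ our algorithm computes) does not break the hypotheses of Theorem~\ref{thm:semi-adaptive-sc}. The remaining steps are essentially LP manipulations and an invocation of the theorem.
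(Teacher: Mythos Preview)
Your proposal is correct and follows essentially the same approach as the paper: you build the same auxiliary set cover super-instance (sets $S_f$ with cost $\fcost(f)$, elements $(c,j)$ covered when $\ccost(c,f)\le 2^j$), feed it the doubled facility variables $y_f^{(i)}=2x_f^{(i)}$, verify monotonicity and feasibility via~\eqref{LP:facility_greater_edge} and~\eqref{ineq:atleasthalf}, bound the fractional cost by $O(\log|F|)\cdot\opt$, and invoke Theorem~\ref{thm:semi-adaptive-sc} (resp.\ \citet{Alon_et_al_SetCover}). The paper's proof is organized identically; your remark that $k\le|C|$ so $\log|X|=O(\log|C|)$ is exactly the calculation the paper makes.
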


We proceed to bound the cost of the connection cost. For any client $c_i$, Algorithm~\ref{alg:nmfl_to_sc} connects it to some facility in the set $\{f\in F \mid \ccost(f,c) \leq 2^{j_i}\}$, incurring connection cost at most $2^{j_i}$. On the other hand, by Inequality~(\ref{ineq:atleasthalf}),
\begin{displaymath}
        \sum_{\{f\in F \mid \ccost(f,c) \geq 2^{j_i}\}} \ccost(e,f)\cdot x^{(i)}_{c_i, f} \geq \sum_{\{f\in F \mid \ccost(f,c) \leq 2^{j_i}\}} 2^{j_i}\cdot x_{c_i,f}^{(i)} \geq 2^{j_i-1}.
\end{displaymath}
Hence, summing over all clients $c_i$, two times the connection cost incurred by the fractional solution is an upper bound on the cost of the connection costs paid by Algorithm~\ref{alg:nmfl_to_sc}. Note that this inequality holds deterministically. Since the fractional solution is $O(\log |F|)$ competitive with respect to the optimum (in particular, also $\Exp[\opt]$), the claim follows.
\end{proof}

\begin{proof}[Proof of Claim~\ref{claim:semiadap_sc_applies}]

We begin by treating the case of a randomized online algorithm against a semi-adaptive adversary. To use Theorem~\ref{thm:rounding_fracSC}, we first need to show that the client-facility graph $\bar{G} := (\Bar{C}, F, \cost)$ implicitly corresponds to an instance $\bar{I}_{\setcov} := (\bar{X}, \S)$. Up to losing a constant factor and without loss of generality, we may assume that $\cost \in \{1, 2, 4, \ldots, 2^{\lceil \log(k)\rceil}\}$.\footnote{By a doubling strategy, the online algorithm can always be assumed to know the optimum value as well as $k$ the number of arriving elements. Any facility or connection with value larger than $\opt$ can then be ignored, and any facility or connection cost with value smaller than $\opt/k$ can be bought for total cost at most $2\cdot \opt$. This is analogous to the proof sketched in~\cref{subsec:guesskandopt}.} For each client $c\in C$ and each $j = 1, 2, \dotsc, \lceil\log k\rceil$,
we denote
\begin{alignat*}{1}
L(c,j) := \{f\in F \mid \ccost(c,f) \leq 2^j\},
\end{alignat*}
the set of facilities whose connection cost to $c$ is at most $2^j$.
From this, we build the corresponding set cover instance $\bar{I}_{\setcov} = (\bar{\mathcal{X}}, \mathcal{S})$ as follows. To each facility, we associate a corresponding set, $\mathcal{S} := \{S_{f_1}, \ldots, S_{f_{|F|}}\}$, each of same cost as the associated facility. In turn, each client $c_i\in C$ corresponds to $\lceil\log(k)\rceil$ different elements 
\begin{displaymath}
    e_0^{(i)}, e_1^{(i)}, \ldots, e_{\lceil\log(k)\rceil}^{(i)} \in \bar{\X}.
\end{displaymath}
We have that $e_j^{(i)} \in S_{f_k}$ if and only if $f_k \in L(c_i, j)$ - i.e.\,the connection cost of $c_i$ to $f_k$ is at most $2^j$. The ground set $\X$ has size $\lceil\log(k)\rceil\cdot|\Bar{C}| \leq O(|\Bar{C}|^2)$, and the set system $\S$ has size $|F|$. 

It remains to show that the fractional solution passed to $\A_\setcov$ is monotonically increasing and (fractionally) feasible. By Inequality~\ref{ineq:atleasthalf} and the inequality $x_f \geq \max_{c\in C} x_{c,f}$ implied by the LP constraint~(\ref{LP:facility_greater_edge}), the sets corresponding to $\{f \in F \mid \ccost(c_i, f) \leq 2^{j_i}\}$ receive fractional value at least $1/2$, hence, the fractional solution $2\cdot x_f$ passed to $\A_\setcov$ is feasible. Since the cost of the sets are the same as the cost of the sets, and the solution $x_f$ is $O(\log|F|)$-competitive, the fractional cost of $x$ is bounded by $O(\log|F|)\cdot \opt_\nmfl$. 

For the case of a deterministic algorithm with instance $(C, F)$, we can proceed analogously. The only difference is that we replace $\lceil \log k \rceil$ by $\lceil \log |C| \rceil$. The resulting set cover instance can be explicitly constructed in polynomial time.
\end{proof}

\newpage
\bibliographystyle{plainnat}
\bibliography{short_ref}

\newpage
\appendix

\section{Remarks}\label{sec:remarks}
\subsection{Assumptions on $k$ and $d_G(s,t)$}\label{subsec:guesskandopt}

In the descriptions and analysis of the algorithms we have assumed that $k$, the number of eventually arriving terminal pairs, is known, and that the $d_G(s,t)$ are polynomially related in $k$ (we have assumed $d_G(s,t) \in [1,k]$, though any polynomial relationship in $k$ is fine, up to constants). We now justify that this is without loss of generality.

\paragraph{Distances are polynomially related.}
First, we show how to remove the assumption on
the distances between terminals.
For each pair of terminals $(s_i,t_i)$ with penalty $p_i$,
we define $\alpha_i$ as
\begin{eqnarray}
    \alpha_i := \max_{j\leq i} \min(d_G(s_j, t_j), p_j).\label{def:alpha_i}
\end{eqnarray}
Note that $\alpha_i \leq \opt$.
We maintain a guess $\beta$ on $\opt$ as follows. Start with $\beta \gets \alpha_1$. Once $\alpha_i$ exceeds $k\cdot\beta$, we update $\beta \gets \alpha_i$ and rerun the algorithm from scratch.
This way, we always have $\beta \leq \opt \leq k^2 \cdot \beta$.
Between two updates of $\beta$, all demand pairs with $d_G(s,t) < \beta/k$ or with penalty $p<\beta/k$ are serviced greedily (with cost at most $\opt$). On the other hand, $k^2\cdot \beta$ is an upper bound on the current optimum value, in particular, there is no greedy path with cost greater than $k^2 \cdot \beta$. Hence, up to scaling the node-weights by $\beta$, we can assume the distance between
any pair of terminals is between $1$ and $k^3$.

\begin{algorithm}[H]
\begin{algorithmic}[1]
\State $\mathbf{Parameter: }$ value of $k$.
\State Initialize $\mathcal{A}$ an instance of Algorithm~\ref{alg:nwsf} that assumes that $d_G \in [1, k^3]$.
\State $\alpha_i \gets \min_{v\in V}w_v$, $\beta \gets \alpha_i$.
\When {$(s_i, t_i)$ arrives}
\State Update $\alpha_i$ as in~(\ref{def:alpha_i}).
\If{$\alpha_i > k\cdot\beta$}
\State Reininitialize a fresh instance of $\mathcal{A}$.
\State{Service all previously arrived terminals pairs $(s,t)$ with corresponding penalty $p$ greedily if $\min(d_G(s,t),p) \leq \beta/k$, and pass the remaining terminal pairs to $\mathcal{A}$.}
\EndIf
\State{Pass $(s_i, t_i)$ to $\mathcal{A}$, add the vertices it bought to $S$.}
\EndWhen

\end{algorithmic}
\caption{Distances are polynomially related.\label{alg:nwsf_without_opt}}
\end{algorithm}

\begin{lemma}
    \cref{alg:nwsf_without_opt} is $O(\log k \log n)$-competitive.
\end{lemma}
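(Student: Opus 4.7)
The plan is to prove the competitive ratio by bounding the cost of each of the three kinds of operations performed by Algorithm~\ref{alg:nwsf_without_opt}: (i) the cost paid inside each fresh instance of the base algorithm $\mathcal{A}$; (ii) the greedy cost paid for pairs whose ``effective distance'' $\min(d_G(s,t),p)$ is at most $\beta/k$; and (iii) the cost of re-initializing $\mathcal{A}$ between phases. The key observation is that, by definition of $\alpha_i$, we always have $\alpha_i \leq \opt$, so the values of $\beta$ form a geometric sequence bounded by $\opt$, and each time $\beta$ is replaced, the new value is at least $k$ times larger than the old one.

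First I would argue that inside a phase with guess $\beta$, every terminal pair handed to $\mathcal{A}$ satisfies $\min(d_G(s,t), p) \geq \beta/k$ by construction of the algorithm. Moreover, for the current $\opt$, we always have $\opt \geq \beta$ (by monotonicity of $\alpha_i$ and the fact that $\beta$ was set equal to some $\alpha_j \leq \opt$), so distances given to $\mathcal{A}$, after the rescaling by $\beta$, lie in $[1, k^3]$, fitting the assumption that $\mathcal{A}$ requires. Up to a constant in the exponent (which changes $\log k$ only by a constant factor), Algorithm~\ref{alg:nwsf} then guarantees cost at most $O(\log k \log n)\cdot \opt$ on the demands it sees. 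Since only the final phase survives the re-initializations, the cost incurred by $\mathcal{A}$ over the whole execution is dominated by its cost in this last phase, giving $O(\log k \log n)\cdot \opt$.

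Next I would bound the greedy servicing of small demands. When a pair $(s,t)$ with penalty $p$ satisfies $\min(d_G(s,t),p)\leq \beta/k \leq \opt/k$, we can either buy the cheapest $s$--$t$ path in $G$ or pay the penalty, incurring at most $\opt/k$ per such pair. Since there are at most $k$ such pairs in total across all phases, their combined cost is at most $\opt$. Finally, the re-initialization itself introduces no cost beyond that of re-feeding the surviving (large-demand) pairs, which are just the pairs handed to the final instance of $\mathcal{A}$, already accounted for above.

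The main obstacle to watch for is that the pairs re-passed to the freshly initialized $\mathcal{A}$ are not really ``online'' in the usual sense—they are all given at phase start. But since $\mathcal{A}$ is analyzed against an oblivious/semi-adaptive adversary, any ordering works; in particular we can feed the previously-arrived large pairs in their original order before resuming. Combining the three bounds yields the $O(\log k \log n)$ competitive ratio, completing the argument.
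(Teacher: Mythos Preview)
There is a genuine gap in your handling of the cost of the subroutine $\mathcal{A}$ across phases. You write that ``only the final phase survives the re-initializations, [so] the cost incurred by $\mathcal{A}$ over the whole execution is dominated by its cost in this last phase.'' This is not how Algorithm~\ref{alg:nwsf_without_opt} works: whenever a fresh instance of $\mathcal{A}$ is started and re-fed the previously arrived large pairs, that instance buys its own vertices, and these are irrevocably added to the global solution $S$. Hence the total cost paid through $\mathcal{A}$ is the \emph{sum} over all phases $j$ of $O(\log k\log n)\cdot \opt^{(j)}$, where $\opt^{(j)}$ is the optimum on the pairs handed to $\mathcal{A}$ in phase $j$. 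Nothing in your argument bounds this sum; you observe that the thresholds $\beta$ grow by a factor $k$ each phase, but you never connect this to a bound on $\sum_j \opt^{(j)}$. The paper's proof closes this by arguing that $\opt^{(j+2)}\geq 2\,\opt^{(j)}$ (using that each $\beta$ equals some $\alpha_i\leq \opt_{\leq i}$ and that successive $\beta$'s differ by a factor $k$), so the phase costs form a geometric series summing to $O(\log k\log n)\cdot \opt$.

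A similar issue affects your greedy bound. Saying ``at most $k$ such pairs, each of cost at most $\opt/k$'' only accounts for one phase; at every re-initialization the small pairs are again serviced greedily, with a possibly larger cap $\beta/k$. The paper instead sums the per-phase greedy budgets $k\cdot(\beta_j/k)=\beta_j$, and since $\beta_{j+1}>k\,\beta_j$ this is again a geometric series bounded by $O(\opt)$. Finally, your statement that ``the re-initialization itself introduces no cost beyond that of re-feeding the surviving pairs, \dots already accounted for above'' is circular: the re-feeding \emph{is} the per-phase cost of $\mathcal{A}$ that you have not summed.
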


\begin{proof}
    Clearly, between two updates of $\beta$, $\alpha_i$ is a lower bound on the current optimum value $\opt_{\leq i}$. Since we only update $\beta$ once $\alpha_i > k\cdot \beta$, the optimum value between two updates is at most $k^2\cdot\alpha_i$. Given that we service all pairs with servicing cost at most $\beta/k$ greedily, the distances can be assumed to be between $1$ and $k^3$. 
    We now bound the cost incurred by Algorithm~\ref{alg:nwsf_without_opt}.
    Since the $\alpha_i$ are a lower bound on $\opt_{\leq i}$ (in particular, $\opt$), the cost incurred by greedy servicing costs can be upper bounded by
    \begin{displaymath}
        k\cdot(\alpha_1/k+\alpha_2/k + \ldots ) \leq k\cdot (1/k + 1/k^2 + \ldots)\cdot\opt.
    \end{displaymath}
    For the remaining costs incurred by rerunning the algorithm from scratch, note that $\opt_{\geq  i}$ doubles after every other update of $\beta$. Formally, denote by $\opt^{(j)}$ the value of the optimum right before we update $\beta$ for the $j^{\text{th}}$ time. Since the $\alpha_i$ are a lower bound on the respective optima, $\opt^{(j+2)} \geq 2\cdot \opt^{(j)}$. Setting $j_{\text{fin}}\in \N$ the final iteration, we can upper bound the (expected) cost by
    \begin{displaymath}
        \sum_{j_1 = 1}^{\lfloor j_{\text{fin}}/2\rfloor}(\ln k \ln n)\cdot \opt^{(2\cdot j)} + \sum_{j_1 = 1}^{\lfloor j_{\text{fin}}/2 \rfloor}(\ln k \ln n)\cdot \opt^{(2\cdot j + 1)} \leq 4\cdot (\ln k \ln n)\cdot \opt_\pcnwsf.
        \qedhere
    \end{displaymath}
\end{proof}

\paragraph{Knowledge of $k$.} We perform a standard doubling strategy on $\log k$: Starting with $k=2$, we run Algorithm~\ref{alg:nwsf} with $k$. Once the number of arrived terminals exceeds $k$, we replace $k$ by its square ($k\gets k^2$) and immediately rerun Algorithm~\ref{alg:nwsf} from scratch before proceeding to the current terminal pair. The formal description is given by Algorithm~\ref{alg:nwsf_without_k}.

\begin{algorithm}[H]
\begin{algorithmic}[1]
\State Initialize $\mathcal{A}$ an instance of Algorithm~\ref{alg:nwsf_without_opt} with $k = 2$.
\When {$(s_i, t_i)$ arrives}
\If{$i \in \{2^{2^1}, 2^{2^2}, \ldots \}$}
\State{Reinitialize $\mathcal{A}$ with $k \gets k^2$}
\State{Pass, one by one, $(s_1, t_1), \ldots, (s_{i-1}, t_{i-1})$ to $\mathcal{A}$ and add the bought vertices to $S$.}
\EndIf
\State{Pass $(s_i, t_i)$ to $\mathcal{A}$, add the vertices it bought to $S$.}
\EndWhen

\end{algorithmic}
\caption{Doubling Strategy for the online Node-Weighted Steiner forest Problem\label{alg:nwsf_without_k}}
\end{algorithm}

\begin{lemma}
    Algorithm~\ref{alg:nwsf_without_k} is $O(\log(k)\cdot\log(n))$-competitive.
\end{lemma}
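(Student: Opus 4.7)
The plan is to show that the costs incurred across the successive restarts form a geometric series dominated by the cost incurred in the final phase, where the parameter $k_j$ already matches the true number $k^\star$ of arriving terminal pairs up to a squaring. This is the standard doubling trick, but the key point here is that because we square $k$ (rather than just double it) the relevant series is over $\log k_j = 2^{j-1}$, whose partial sums are geometric and dominated by the last term.

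More concretely, let $k^\star$ be the final number of terminal pairs and let the guesses used by the algorithm be $k_1 = 2 \leq k_2 = 4 \leq k_3 = 16 \leq \dots \leq k_J$, where $J$ is the last phase (so $k_{J-1} < k^\star \leq k_J$). In each phase $j$, Algorithm~\ref{alg:nwsf_without_k} invokes a fresh copy $\mathcal{A}_j$ of Algorithm~\ref{alg:nwsf_without_opt} with parameter $k_j$ and feeds it the prefix of terminal pairs that has arrived so far (of length at most $k_j$ by the trigger condition). By the previous lemma, $\mathcal{A}_j$ produces a solution of expected cost at most $O(\log k_j \cdot \log n) \cdot \opt^{(j)}$, where $\opt^{(j)}$ denotes the offline optimum value for the terminal pairs seen in phase $j$; since this is a prefix of the full input, $\opt^{(j)} \leq \opt$.

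Summing over phases, the total expected cost is at most
\[
\sum_{j=1}^{J} O(\log k_j \cdot \log n) \cdot \opt
\;=\; O(\log n)\cdot \opt \cdot \sum_{j=1}^{J} 2^{j-1}
\;\leq\; O(\log n)\cdot \opt \cdot 2^J,
\]
and $2^J = 2 \cdot \log k_{J-1}^2 / 2 \leq 4 \log k^\star$ because the restart to phase $J$ is triggered only when the number of arrivals exceeds $k_{J-1}$, giving $k_{J-1} < k^\star$ and hence $\log k_J = 2 \log k_{J-1} < 2 \log k^\star$. Therefore the total expected cost is $O(\log k^\star \cdot \log n) \cdot \opt$, which is the desired bound.

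The only subtlety, and the step I would write out carefully, is the observation that the vertices purchased in earlier phases remain in $S$ and so the cost in those phases is not wasted. This is not actually needed for the upper bound, since we are already charging the full cost of every phase; the argument above pays for all phases independently and still obtains the desired competitive ratio thanks to the fast (doubly-exponential) growth of $k_j$. No genuine obstacle arises beyond bookkeeping the geometric sum correctly.
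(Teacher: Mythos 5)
Your proof is correct and takes essentially the same route as the paper: both bound the cost of each phase by the competitive ratio of the subroutine times the optimum of the prefix seen in that phase, use monotonicity of $\opt$ to replace each prefix optimum by the final $\opt$, and observe that $\sum_j \log k_j = \sum_j 2^{j-1}$ is a geometric series dominated by its last term, which is $O(\log k)$.
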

\begin{proof}
    Denote by $\opt_{\leq 2^{2^j}}$ the optimum value of the online Steiner forest problem on the first $2^{2^j}$ demands $(s_1, t_1), \ldots, (s_{2^{2^j}}, t_{2^{2^j}})$. By assumption on the competitive ratio of Algorithm~\ref{alg:nwsf}, the total expenditure of Algorithm~\ref{alg:nwsf_without_k} is then upper bounded by
    \begin{alignat*}{1}       
    \sum_{i=1}^{\log\log k}\log(2^{2^i})\cdot\log(n)\cdot\opt_{\leq 2^{2^i}} &\leq \sum_{i=1}^{\log\log k}2^i\cdot \ln(n)\cdot \opt_{\leq 2^{2^i}}\\
        &\leq 4\cdot \log(k)\cdot\ln(n)\cdot \opt_{\leq k}.
    \end{alignat*}
    In the last inequality, we have used the fact that $\opt_{\leq i}$ only increases and the property of a geometric series. This concludes the proof.
\end{proof}

\subsection{Removing the assumption of knowing $|X|$}
\label{app:doubling_size_guess}
In \cref{alg:rounding_SC_semiadaptive} we assumed that the algorithm knows the size of the set $X'$ in advance. We used this information to choose $p:=2\log(k)$. We now show that this assumption can be removed by a doubling strategy.

As in \cref{lem:fixed-semi-adaptive-sc}, assume that $c^T x^{(i)}\leq B$ for all $i$. We start with $p=1$. Each time that an element is not covered by the rounded solution (i.e. the condition from line \ref{line:remains_uncovered}), we double $p$ and restart the algorithm, ignoring all the previously bought sets. So, in the $i$-th restart of the algorithm, we have $p=2^i$. Let $R_i$ be the event that the algorithm is restarted for the $i$-th time.
It follows from the proof of \cref{thm:semi-adaptive-sc} that the expected objective value in iteration $i$ is at most $O(2^i\cdot B)$. Hence, the total expected cost is at most:
\begin{align*}
  \sum_{i=0}^\infty \Pr[R_i]\cdot O(2^i\cdot B)
\end{align*}
Event $R_i$ only happens if one of the events $E_j$ happens in the run $i-1$. So, by \cref{claim:event_bound} and the union bound we have:
\begin{align*}
    \Pr[R_i]\leq |\X| \exp(-2^{i-1}).
\end{align*}
Hence:
\begin{align*}
  \sum_{i=0}^\infty \Pr[R_i]\cdot O(2^i\cdot B) &\leq \sum_{i=0}^{\log_2\log|\X'|+1} O(2^i\cdot B) + \sum_{i=\log_2\log|\X'| + 2}^{\infty} O(|\X|2^i\exp(-2^{i-1})\cdot B)
  \\&\leq O(\log|\X|\cdot B) +  \sum_{i=\log_2\log|\X|}^{\infty} O(\exp(\log(2)i-2^i)\cdot B)
  \\&\leq O(\log|\X|\cdot B)
\end{align*}
So, the total expected cost resulting from this algorithm is indeed $O(\log|\X|\cdot B)$.

\section{Flaw in previous work}\label{app:flaw}

The algorithm from~\cite{Markarian_2018} maintains an $O(\log n)$-competitive fractional solution to the linear relaxation that is monotonically increasing, which is rounded online.

Specifically, a \emph{single} variable $\mu\in [0,1]$ is chosen, that is the minimum of $2\lceil \log k\rceil$ uniform $[0,1]$ random variables.
Whenever the fractional solution $x_v$ exceeds $\mu$, node $v$ is purchased.

A key part of their analysis is proving that the event that a pair $(s, t)$ is not connected in the rounded solution occurs with probability at most $1/k^2$. This is done considering each vertex separator $K$ of $s$ and $t$. A fractional solution $x$ will satisfy $\sum_{v\in K}x_v\geq 1$. Since $\Pr[\mu \geq x_v]\leq(1-x_v)^{-2\log k}$, it is claimed that $\Pr[\mu \geq x_v\ \forall v\in K]\leq \prod_{v\in K}\Pr[\mu \geq x_v]\leq (1-x_v)^{-\sum_{v\in K}2\log k}$. However, the first inequality is clearly false, as the events $\mu \geq x_v$ are not independent.

Their lemma~\cite[Lemma~$2$]{Markarian_2018}, claims that with probability at least $(1-\tfrac{1}{k})$ over the random choice of $\mu$, the constructed solution is both $O(\log k)$-competitive (with respect to the online fractional solution) and integrally feasible (i.e. for each pair $(s_i, t_i)$, there is a path connecting $s_i$ to $t_i$).
This is wrong as can be seen from an instance of \nwst illustrated in
Figure~\ref{fig:counterexample}.
  With probability at least $(1-\frac{1}{2\log k})^{2\log k}\geq \exp(-2)$ we have $\mu \geq \frac{1}{2\log k}$. This means that for each $i$ with $\frac{1}{k-i+1}\leq \frac{1}{2\log k}$, the path connecting $s_1$ to $s_i$ is not bought through randomized rounding
  and the integral solution remains unfeasible.
  
  This cannot be fixed by buying greedy paths: the greedy path connecting $s_i$ and $s_1$ goes through $v_i$
  and a new greedy path is bought in every iteration.
Rewriting the previous inequality, we see that this happens for all $i\leq k-\log k$.
  Hence, the expected cost of the algorithm is at least $\exp(-4)(k-\log k)=\Omega(k)$, while the optimal cost is $O(1)$, resulting in a competitive ratio of $\Omega(k)$.

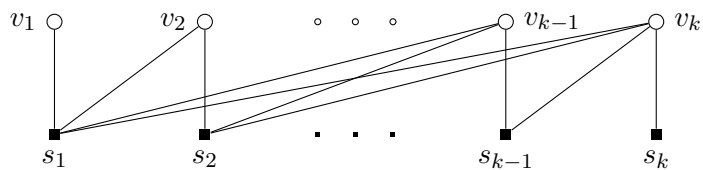
\begin{figure}[h]
  \centering
  \begin{tikzpicture}
      \node[label = -90:$s_1$, rectangle, inner sep = 2pt, fill] (s1) at (-4, 0) {};
      \coordinate[label = -90:$s_2$, rectangle, inner sep = 2pt, fill] (s2) at (-2, 0) {};
      \coordinate[rectangle, inner sep = 0.8pt, fill] (s3) at (-0.5, 0) {};
      \coordinate[rectangle, inner sep = 0.8pt, fill] (s4 at (0, 0) {};
      \coordinate[rectangle, inner sep = 0.8pt, fill] (s5) at (0.5, 0) {};
      \coordinate[label = -90:$s_{k-1}$, rectangle, inner sep = 2pt, fill] (s6) at (2, 0) {};
      \coordinate[label = -90:$s_k$, rectangle, inner sep = 2pt, fill] (s7) at (4, 0) {};

      \node[label = 180:$v_1$, circle,draw, inner sep = 2pt] (v1) at (-4, 1.5) {};
      \coordinate[label = 180:$v_2$, circle, draw, inner sep = 2pt] (v2) at (-2, 1.5) {};
      \coordinate[circle,draw, inner sep = 0.8pt] (v3) at (-0.5, 1.5) {};
      \coordinate[circle,draw, inner sep = 0.8pt] (v4) at (0, 1.5) {};
      \coordinate[circle,draw, inner sep = 0.8pt] (v5) at (0.5, 1.5) {};
      \coordinate[label = 0:$v_{k-1}$, circle,draw, inner sep = 2pt] (v6) at (2, 1.5) {};
      \coordinate[label = 0:$v_k$, circle,draw, inner sep = 2pt] (v7) at (4, 1.5) {};

      \draw (s1)--(v1);
      \draw (s1)--(v2);
      \draw (s1)--(v6);
      \draw (s1)--(v7);

      \draw (s2)--(v2);
      \draw (s2)--(v6);
      \draw (s2)--(v7);

      \draw (s6)--(v6);
      \draw (s6)--(v7);

      \draw (s7)--(v7);
  \end{tikzpicture}
  \caption[.]{Counterexample to~\cite{Markarian_2018}: The Steiner vertex $v_i$ is connected to $\{s_1,\ldots, s_i\}$, $\forall i\in [k]$. At time $i$,  $(s_1, s_i)$ arrives.
  Weight of each vertex $v_i$ is almost one up to a very small perturbation which causes that $w_{v_1} < w_{v_2} < \dotsb w_{v_k}$.
  Setting $x_{v_i}= \frac{1}{k-i+1}$ for all $i\in [k]$ gives a fractional solution that is feasible for all timesteps. In each step, we pass this fractional solution to the proposed online rounding algorithm.
  }
  \label{fig:counterexample}
\end{figure}

\end{document}